\documentclass[letterpaper,aps,pra,twocolumn,tightenlines,showpacs]{revtex4-1}

\usepackage[T1]{fontenc}
\usepackage{graphicx,color}
\usepackage{graphicx}
\usepackage{subfigure}
\usepackage{latexsym,amsmath,amssymb,amsfonts,amsthm,mathrsfs,bm}
\usepackage{ytableau,youngtab,moresize}
\usepackage{verbatim}
\usepackage[titletoc,title]{appendix}

\newtheorem{thm}{Theorem}[section]
\newtheorem{defn}[thm]{Definition}
\newtheorem{coro}[thm]{Corollary}
\newtheorem{prop}[thm]{Proposition}

\newtheorem*{thmwj}{Theorem III.1}
\newtheorem*{thmij}{Theorem III.2}
\newtheorem*{thm1n}{Theorem III.6}
\newtheorem*{corowj}{Corollary III.3}
\newtheorem*{coroij}{Corollary III.4}
\newtheorem*{corowi}{Corollary III.5}

\newcommand{\tr}[2]{\text{Tr}_{#1}\left(#2\right)}
\newcommand{\trb}[2]{\text{Tr}_{#1}\left[#2\right]}
 
\newcommand{\ket}[1]{| #1 \rangle}
\newcommand{\bra}[1]{\langle #1 |}

\newcommand{\ketbra}[1]{\ket{#1}\bra{#1}} 
\newcommand{\sqket}[1]{|\!#1 \rangle} 
\newcommand{\sqbra}[1]{\langle #1\!|}
\newcommand{\sqketbra}[1]{\sqket{#1}\sqbra{#1}}

\newcommand{\exval}[2]{\langle #1 \rangle_{#2}}

\newcommand{\identity}{\mathbb{I}}
\newcommand{\hilbert}{\mathcal{H}}
\newcommand{\map}{\mathcal{M}}
\newcommand{\neigh}{\mathcal{N}}
\newcommand{\prob}[1]{\text{p}\!\left(#1\right)}

\begin{document}

\title{Compatible quantum correlations: \\ 
on extension problems for Werner and isotropic states}

\author{Peter D. Johnson}
\affiliation{\mbox{Department of Physics and Astronomy, Dartmouth 
College, 6127 Wilder Laboratory, Hanover, NH 03755, USA}}

\author{Lorenza Viola}
\affiliation{\mbox{Department of Physics and Astronomy, Dartmouth 
College, 6127 Wilder Laboratory, Hanover, NH 03755, USA}}

\date{\today}

\begin{abstract}
We investigate some basic scenarios in which a given set of bipartite quantum states may consistently arise as the set of reduced states of a global $N$-partite quantum state. Intuitively, we say that the multipartite state ``joins'' the underlying correlations. Determining whether, for a given set of states and a given joining structure, a compatible 
$N$-partite quantum state exists is known as the quantum marginal problem. We restrict to bipartite reduced states that belong to the paradigmatic classes of Werner and isotropic states 
in $d$ dimensions, and focus on two specific versions of the quantum marginal problem which we find to be tractable. The first is Alice-Bob, Alice-Charlie joining, with both pairs being in a Werner or isotropic state. The second is $m$-$n$ sharability of a Werner state across $N$ subsystems, which may be seen as a variant of the $N$-representability problem to the case where subsystems are partitioned into two groupings of $m$ and $n$ parties, respectively. 
By exploiting the symmetry properties that each class of states enjoys, we determine necessary and sufficient conditions for three-party joinability and 1-$n$ sharability for arbitrary $d$. Our results explicitly show that although entanglement is required for sharing limitations to emerge, correlations beyond entanglement generally suffice to restrict joinability, and not all unentangled states necessarily obey the same limitations. The relationship between joinability 
and quantum cloning as well as implications for the joinability of arbitrary bipartite states are discussed.
\end{abstract}

\pacs{03.67.Mn, 03.65.Ud, 03.65.Ta }

\date{\today}
\maketitle 

\section{Introduction}
\label{sec:intro}

Understanding the nature of quantum correlations in multiparty systems and the distinguishing features they exhibit relative to classical correlations is a central goal across quantum information processing (QIP) science \cite{Nielsen2001}, with implications ranging from condensed-matter and statistical physics to quantum chemistry, and the quantum-to-classical transition. From a foundational perspective, exploring what different kinds of correlations are, in principle, allowed by probabilistic theories more general than quantum mechanics further helps to identify under which set of physical constraints the standard quantum framework may be uniquely recovered \cite{Masanes2006,Seevinck2010}.

In this context, entanglement provides a distinctively quantum type of correlation, that has no analogue in classical statistical mechanics. A striking feature of entanglement is that it cannot be freely distributed among different parties: if a bipartite system, say, $A$(lice) and $B$(ob), is in a maximally entangled pure state, then no other system, $C$(harlie), may be correlated with it. In other words, the entanglement between $A$ and $B$ is \emph{monogamous} and cannot be \emph{shared} \cite{Wootters2000,Ben,Terhal2004,Verstraete2006,Sanders2011b}. This simple tripartite setting motivates two simple questions about bipartite quantum states: given a bipartite state, we ask whether it can arise as the reduced state of $A$-$B$ and of $A$-$C$ simultaneously; or, more generally, given two bipartite states, we ask if one can arise as the reduced state of $A$-$B$ while the other arises as the reduced state of $A$-$C$. It should be emphasized that both of these are questions about the existence of 
tripartite states with given reduction properties. While formal (and more general) definitions will be provided later in the paper, these examples serve to introduce the notions of \emph{sharing} (1-2 sharing) and \emph{joining} (1-2 joining), respectively. 
In its most general formulation, the joinability problem is also known as the {\em quantum marginal problem} (or {\em local consistency problem}), which has been heavily investigated both from a mathematical-physics \cite{Klyachko,Walter2012,Ruskai} and a quantum-chemistry perspective \cite{Mazziotti2012,Mazziotti2012Rev} and is known to be QMA-hard \cite{Liu2007}. Our choice of terminology, however, facilitates a uniform language for describing the joinability/sharability scenarios. For instance, we say that the joinable correlations of $A$-$B$ and $A$-$C$ are joined by a joining state on $A$-$B$-$C$.

The limited sharability/joinability of entanglement was first quantified in the seminal work by Coffman, Kundu, and Wootters, in terms of an exact (CKW) inequality obeyed by the entanglement across the $A$-$B$, $A$-$C$ and $A$-$(BC)$ bipartitions, as measured by concurrence 
\cite{Wootters2000}. In a similar venue, several subsequent investigations attempted to determine how different entanglement measures can be used to diagnose failures of joinability, see e.g. \cite{Dur2001,Wootters2001,Verstraete2006}. More recently, significant progress has been made in characterizing quantum correlations more general than entanglement \cite{Tufarelli,Modi}, in particular as captured by quantum discord \cite{Ollivier2001}. While it is now established that quantum discord does not obey a monogamy inequality \cite{Adesso2012}, different kind of limitations exist on the extent to which it can be freely shared and/or communicated \cite{Fan2013,Zurek2013}. Despite these important advances, a complete picture is far from being reached. What kind of limitations do strictly mark the quantum-classical correlation boundary? What different quantum features are responsible for enforcing different aspects of such limitations, and how does this relate to the degree of resourcefulness that 
these correlations can have for QIP?

While the above are some of the broad questions motivating this work, our specific focus here is to make progress on joinability and sharability properties in low-dimensional multipartite settings.
In this context, a recent paper \cite{Lieb2013} has obtained a necessary condition for three-party joining in finite dimension in terms of the subsystem entropies, and additionally established a sufficient condition in terms of the trace-norm distances between the states in question and known joinable states. For the specific case of qubit Werner states \cite{Werner1989}, Werner himself established  necessary and sufficient conditions for the 1-2 joining scenario \cite{Werner1990}. With regards to sharability, necessary and sufficient conditions have been found for 1-2 sharing of generic bipartite qubit states \cite{Ranade2009,WalterComm}, as well as for specific classes of qudit states \cite{Myhr2009}. To the best of our knowledge, no conditions that are both necessary and sufficient for the joinability of generic states are available as yet. In this paper, we obtain necessary and sufficient conditions for both the {\em three-party joinability and the 1-$n$ sharability} problems, in the case that the 
reduced bipartite states are {\em either Werner or isotropic states on $d$-dimensional subsystems} (qudits).

Though our results are restricted in scope of applicability, they provide key insights as to the sources of joinability limitations. Most importantly, we find that standard measures of quantum correlations, such as concurrence and quantum discord, do not suffice to determine the limitations in joining quantum correlations. Specifically, we find that the joined states need \emph{not} be entangled or even discordant in order not to be joinable. Further to that, although separable states may have joinability limitations, they \emph{are}, nonetheless, freely (arbitrarily) sharable. By introducing a one-parameter class of probability distributions, we provide a natural classical analogue to qudit Werner and isotropic quantum states. This allows us to illustrate how classical joinability restrictions carry over to the quantum case and, more interestingly, to demonstrate that the quantum case demands limitations which are not present classically. Ultimately, this feature may be traced back to {\em complementarity 
of observables}, which clearly 
plays no role in the classical case. It is suggestive to note that the uncertainty principle was also shown to be instrumental in constraining the sharability of quantum discord \cite{Fan2013}. It is our hope that further pursuits of more general necessary and
sufficient conditions may be aided by the methods and findings herein.

The content is organized as follows. In Sec. \ref{sec:basics} we present the relevant mathematical framework for defining the joinability and sharability notions and the extension problems of interest, along with some preliminary results contrasting the classical and quantum cases. Sec. \ref{sec:Wernerstates} contains the core results of our analysis. In particular, 
after reviewing the defining properties of Werner and isotropic states on qudits, in Sec. \ref{sec:states} we motivate the appropriate choice of probability distributions that serve 
as a classical analogue, and determine the resulting classical joinability limitations in Sec. \ref{sec:classicaljoining}. 
Necessary and sufficient conditions for three-party joinability 
of quantum Werner and isotropic states are established in Sec. \ref{subsec:Wjoin}, and 
contrasted to the classical scenario. Sec. \ref{subsec:cloning} shows how 
the results on isotropic state joinability are in fact related to known results 
on quantum cloning, whereas in Sec. \ref{subsec:Wsharability} we establish simple analytic 
expressions for the 1-$n$ sharability of both Werner and isotropic states, 
along with discussing constructive procedures to determine $m$-$n$ sharability properties 
for $m>1$. In Sec. \ref{sec:furtherremarks}, we present additional remarks on 
joinability and sharability scenarios beyond those of Sec. \ref{sec:Wernerstates}. 
In particular, we outline generalizations of our analysis to 
$N$-party joinability, and show how bounds on the sharability of 
{\em arbitrary bipartite states} follow from the Werner and isotropic results. 
Concluding remarks and open questions are presented in Sec. \ref{sec:end}. For ease and 
clarity of presentation, the technical proofs of the results in Sec. \ref{sec:Wernerstates} 
are presented in two separate Appendixes (\ref{sec:joiningreptools} on joinability 
and \ref{sec:sharingreptools} on sharability, respectively), together with 
the relevant group-representation tools.

\section{Joining and sharing classical vs. quantum states}
\label{sec:basics}

Although our main focus will be to quantitatively characterize simple
low-dimensional settings, we introduce the relevant concepts with a
higher degree of generality, in order to better highlight the
underlying mathematical structure and to ease connections with
existing related notions in the literature.
We are interested in the correlations among the subsystems of a
$N$-partite composite system $S$. In the quantum case, we thus 
require a Hilbert space with a tensor product structure:
\begin{equation*}
\hilbert^{(N)} \simeq \bigotimes_{i=1}^N \hilbert^{(1)}_i, \;\;\;
\text{dim}(\hilbert^{(1)}_i)\equiv d_i,
\end{equation*}
where $\hilbert^{(1)}_i$ represents the individual ``single-particle''
state spaces and, for our purposes, each $d_i$ is finite. In the
classical scenario, to each subsystem we associate a sample space
$\Omega_i$ consisting of $d_i$ possible outcomes, with the joint
sample space being given by the Cartesian product:
\begin{equation*}
\Omega^{(N)} \simeq \Omega_1\times\ldots\times\Omega_N.
\end{equation*}
Probability distributions on $\Omega^{(N)}$ are the classical
counterpart of quantum density operators on $\hilbert^{(N)}$.

\subsection{Joinability}

The input to a joinability problem is a set of subsystem states which,
in full generality, may be specified relative to a ``neighboorhood
structure'' on $\hilbert^{(N)}$ (or $\Omega^{(N)}$) \cite{Ticozzi2012,Ticozzi2013}.
That is, let neighborhoods $\{ {\cal N}_j \}$ be given as subsets of
the set of indexes labeling individual subsystems, ${\cal
N}_k\subsetneq \mathbb{Z}_N$. We can then give the following:
\begin{defn}
{\bf [Quantum Joinability]} Given a neighborhood structure 
$\{\neigh_1,\neigh_2,\ldots,\neigh_{\ell}\}$ on $\hilbert^{(N)}$, a
list of density operators $(\rho_1,\ldots,\rho_{\ell}) \in
(\mathcal{D}(\hilbert_{\neigh_1}),\ldots,
\mathcal{D}(\hilbert_{\neigh_{\ell}}))$ is \emph{joinable} if there
exists an $N$-partite density operator
$w\in\mathcal{D}(\hilbert^{(N)})$, called a \emph{joining state}, that
reduces according to the neighborhood structure, that is,
\begin{equation}
\tr{\hat{\neigh}_k}{w}=\rho_k, \;\;\;\forall k =1, \ldots, \ell,
\label{QJ}
\end{equation}
where $\hat{\neigh}_k\equiv \mathbb{Z}_N \setminus \neigh_k$ is the
{tensor complement} of ${\cal N}_k$.
\end{defn} 
\noindent 
The analagous definition for classical joinability is obtained by substituting corresponding terms, in particular, by replacing the partial trace over $\hat{\neigh}_k$ with the corresponding marginal probability distribution. As remarked, the question of joinability has been extensively investigated in the context of the classical \cite{Fritz2013} and quantum \cite{Klyachko,Sudbery2006,Hall,Lieb2013} marginal problem. A joining state is equivalenty referred to as an \emph{extension} or an element of the {\em pre-image} of the list under the reduction map, while the members of a list of joinable states are also said to be {\em compatible} or {\em consistent}.

Clearly, a {\em necessary} condition for a list of states to be joinable is that they ``agree'' on any overlapping reduced states. That is, given any two states from the list whose neighborhoods are intersecting, the reduced states of the subsystems in the intersection must coincide. From this point of view, any failure of joinability due to a disagreement of overlappping reduced states is a trivial case of non-compatible $N$-party correlations. We are interested in cases where joinability fails \emph{despite} the agreement on overlapping marginals. This consistency requirement will be satisfied by construction for the Werner and isotropic quantum states we shall consider in Sec. \ref{sec:Wernerstates}. 

One important feature of joinability, which has recently been investigated in \cite{Kribs2012}, is the \emph{convex structure} that both joinable states lists and joining states enjoy. The set of lists of density operators satisfying a given joinability scenario is convex under component-wise combination; this is because the same convex combination of their joining states is a valid joining state for the convex combined list of states. Similarly, the set of joining states for a given list of joined states is convex by the linearity of the partial trace.

As mentioned, one of our goals is to shed light on limitations of quantum vs. classical joinability and the extent to which entanglement may play a role in that respect. 
That quantum states are subject to stricter joinability limitations than classical 
probability distributions do, can be immediately appreciated 
by considering two density operators 
$\rho_{AB}=\ketbra{\Psi_{\cal B}}=\rho_{AC}$, where $\ket{\Psi_{\cal
B}}$ is any maximally entangled Bell pair on two qubits: no three-qubit joining state 
$w_{ABC}$ exists, despite the reduced state on $A$ being manifestly consistent.
In contrast, as shown in \cite{Lieb2013,Fritz2013}, as long as two classical distributions have equal marginal distributions over $A$, $p(A,B)$ and $p(A,C)$ can always be joined. This is evidenced by the construction of the joining state: $w(A,B,C)={\prob{A,B}\prob{A,C}}/{\prob{A}}$. As pointed out in \cite{Lieb2013}, although the above choice is not unique, it is the joining state with maximal entropy and represents an even mixture of all valid joining distributions.

Although any {\em two} consistently-overlapped classical probability distributions may be joined, limitations on joining classical probability distributions {\em do} typically arise in more general joining scenarios. This follows from the fact that any classical probability assignments must be consistent with some convex combination of pure states. Consider, for example, a pairwise neighboorhood structure, with an associated list of states $p(A,B)$, $p(B,C)$, and $p(A,C)$, which have consistent single-subsystem marginals. Clearly, if each subsystem corresponds to a bit, no convex combination of pure states gives rise to a probability distribution $w(A, B,C)$ in which each pair is completely anticorrelated; in other words, ``bits of three can't all disagree''. In Sec. \ref{subsec:Wjoin}, we explicitly compare this particular classical joining scenario to analogous quantum scenarios.

While all the classical joining limitations may be expressed by linear inequalities, the quantum joining limitations are significantly more complicated. The limitations arise from demanding that the joining operator be a valid density operator, namely, trace-one and {\em non-negative} (which clearly implies Hermiticity). This fact is demonstrated by the following proposition, which may be readily generalized to any joining scenario:
\begin{prop}
For any two trace-one Hermitian operators $Q_{AB}$ and $Q_{AC}$ which obey the consistency condition $\tr{B}{Q_{AB}}=\tr{C}{Q_{AC}}$, there exists  a trace-one Hermitian joining operator $Q_{ABC}$.
\end{prop}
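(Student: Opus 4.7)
The plan is to give an explicit construction of $Q_{ABC}$, mimicking in operator form the classical joining distribution $w(A,B,C) = \prob{A,B}\prob{A,C}/\prob{A}$. Since $Q_A \equiv \tr{B}{Q_{AB}} = \tr{C}{Q_{AC}}$ need not be invertible (and we have no natural ``division'' by an operator), I would replace the multiplicative classical formula by an additive, inclusion--exclusion-style ansatz
\begin{equation*}
Q_{ABC} := Q_{AB} \otimes \frac{\identity_C}{d_C} \,+\, \frac{\identity_B}{d_B} \otimes Q_{AC} \,-\, Q_A \otimes \frac{\identity_B}{d_B} \otimes \frac{\identity_C}{d_C},
\end{equation*}
where the three summands are viewed as operators on $\hilbert_A \otimes \hilbert_B \otimes \hilbert_C$ with maximally mixed states padding the missing factors.

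I would then verify the three required properties in turn. First, $Q_{ABC}$ is manifestly Hermitian, being a real linear combination of tensor products of Hermitian operators. Second, using $\text{Tr}(Q_{AB}) = \text{Tr}(Q_{AC}) = 1$ (and hence $\text{Tr}(Q_A) = 1$), one gets $\text{Tr}(Q_{ABC}) = 1 + 1 - 1 = 1$. Third, tracing out $C$ yields
\begin{equation*}
\tr{C}{Q_{ABC}} = Q_{AB} + Q_A \otimes \frac{\identity_B}{d_B} - Q_A \otimes \frac{\identity_B}{d_B} = Q_{AB},
\end{equation*}
and the symmetric computation gives $\tr{B}{Q_{ABC}} = Q_{AC}$. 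The cancellation uses precisely the consistency hypothesis $\tr{C}{Q_{AC}} = Q_A = \tr{B}{Q_{AB}}$, so the construction fails exactly when the input is inconsistent, as it must.

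There is no real obstacle here: the whole content of the proposition is that dropping non-negativity trivializes the joining problem. Its value lies in making clear, by contrast with entangled examples such as $\rho_{AB} = \rho_{AC} = \ketbra{\Psi_{\cal B}}$, that every genuine quantum joinability obstruction arises \emph{solely} from the additional demand $Q_{ABC} \geq 0$. The same inclusion--exclusion recipe extends verbatim to an arbitrary neighborhood structure $\{\neigh_k\}$: one adds each given $Q_{\neigh_k}$ padded with maximally mixed states on the complement $\hat{\neigh}_k$, and subtracts analogous contributions coming from the non-empty overlaps, producing a trace-one Hermitian joining operator whenever the marginal consistency conditions hold on every intersection.
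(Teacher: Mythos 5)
Your construction is correct: the operator $Q_{AB}\otimes \identity_C/d_C + \identity_B/d_B\otimes Q_{AC} - Q_A\otimes \identity_B/d_B\otimes \identity_C/d_C$ is Hermitian and trace-one, and the partial traces over $C$ and $B$ reproduce $Q_{AB}$ and $Q_{AC}$ exactly because both cancellations invoke the same $Q_A=\tr{B}{Q_{AB}}=\tr{C}{Q_{AC}}$, i.e.\ the consistency hypothesis. The route differs from the paper's in packaging rather than substance: the paper expands $Q_{ABC}$ in an orthogonal Hermitian product basis $\{A_i\otimes B_j\otimes C_k\}$ containing the identity, fixes the identity component, the $A_iB_j\identity$ and $A_i\identity C_k$ components, and the one-body components by the prescribed reductions, and observes that all remaining (genuinely three-body and $B$-$C$) components are unconstrained because those basis elements vanish under the relevant partial traces. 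That argument characterizes the entire affine space of trace-one Hermitian joining operators, whereas you exhibit one explicit member of it --- in fact precisely the member obtained by setting every unconstrained coefficient to zero --- which makes your version more compact and more transparently parallel to the classical formula $w(A,B,C)=\prob{A,B}\prob{A,C}/\prob{A}$, at the cost of not displaying the full solution space. One caution on your closing remark: the claim that the inclusion--exclusion recipe extends ``verbatim'' to an arbitrary neighborhood structure should be softened. For overlapping neighborhoods beyond this two-neighborhood case the naive alternating-sign construction requires care (which intersections enter, with which signs, and whether consistency on pairwise overlaps determines well-defined operators on all higher-order overlaps), so, like the paper's own remark that the proposition ``may be readily generalized,'' this should be presented as a plausible extension to be checked case by case rather than an immediate corollary of your formula.
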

 \begin{proof}
Consider an orthogonal Hermitian product basis which includes the identity for each subsystem, that is, $\{A_i\otimes B_j \otimes C_k\}$, where $A_0=B_0=C_0=\mathbb{I}$. Then we can construct the space of all valid joining operators $Q_{ABC}$ as follows. Let $d_{ABC}$ be the dimension of the composite system. The component along $A_0\otimes B_0 \otimes C_0$ is fixed as $1/d_{ABC}$, satisfying the trace-one requirement. The components along the two-body operators of the form $A_i\otimes B_j\otimes \mathbb{I}$ are fixed by the required reduction to $Q_{AB}$, and similarly the components along the two-body operators of the form $A_i\otimes \mathbb{I} \otimes C_k$ are determined by $Q_{AC}$. The components along the one-body operators of the form $A_i\otimes \mathbb{I} \otimes \mathbb{I}$, $ \mathbb{I} \otimes B_i\otimes \mathbb{I}$, and $\mathbb{I} \otimes \mathbb{I} \otimes C_i$ are determined from the reductions of $Q_{AB}$ and $Q_{AC}$. This leaves the coefficients of all remaining basis 
operators 
unconstrained, since their corresponding basis operators are zero after a partial trace over systems $B$ or $C$.
\end{proof}
Thus, requiring the joining operator to be Hermitian and normalized is not a limiting constraint with respect to joinability: {\em any limitations are due to the non-negativity constraint}.
Understanding how non-negativity manifests is extremely difficult in general and far beyond our scope here. We can nevertheless give an example in which the role of non-negativity is clear.
Part of the job of non-negativity is to enforce constraints that are also obeyed by classical probability distributions. For example, in the case of a two-qubit state $\rho$, if $\exval{X\otimes \identity}{\rho}=1$ and $\exval{\identity\otimes X}{\rho}=1$, then $\exval{X\otimes X}{\rho}$ \emph{must} equal $1$. More generally, consider a set of mutually commuting observables $\{M_i\}_{i=1}^k$ and any basis $\{\ket{m}\}$ in which all $M_i$ are diagonal. Any valid state must lead to a list of expectation values $(\tr{}{\rho M_1}, \ldots, \tr{}{\rho M_k})$, whose values are element-wise convex combinations of the vertexes $\{(\bra{m}M_1\ket{m}, \ldots, \bra{m}M_k\ket{m})|\forall m\}$. The interpretation of this constraint is that since commuting observables have simultaneously definable values, just as classical observables do, probability distributions on them must obey the rules of classical probability distributions. We call on this fact when we compare the quantum joining limitations to the classical 
analogue ones in Sec. \ref{subsec:Wjoin}.

Non-negativity constraints that do \emph{not} arise from classical limitations on compatible observables may be labeled as inherently quantum constraints, the most familiar being provided by uncertainty relations for conjugate observables \cite{Deutch1983,Wehner2010}. 
Although complementarity constraints are most evident for observables acting on the same system, complementarity can also give rise to a trade-off in the information about a subsystem observable vs. a joint observable. This fact is essentially what allows Bell's inequality to be violated. For our purposes, the complementarity that comes into play is that between ``overlapping'' joint observables (e.g., between $\vec{S}_1\cdot\vec{S}_2$ and $\vec{S}_1\cdot\vec{S}_3$ for three qubits). We are thus generally interested in understanding the interplay between purely classical and quantum joining limitations, and in the correlation trade-offs that may possibly emerge. 

Historically, as already mentioned, a pioneering exploration of 
the extent to which quantum correlations can be shared among three parties 
was carried out in \cite{Wootters2000}, yielding a characterization of the monogamy of 
entanglement in terms of the well-known CKW inequality:
\begin{equation*}
\label{CKW}
\mathcal{C}^2_{AB}+\mathcal{C}^2_{AC} \leq
(\mathcal{C}^2)^{\text{min}}_{A(BC)}, 
\end{equation*}
where ${\mathcal C}$ denotes the concurrence and the right hand-side
is minimized over all pure-state decompositions. Thus, with the
entanglement across the bipartition $A$ and $(BC)$ held fixed, an
increase in the upper bound of the $A$-$B$ entanglement can only come
at the cost of a decrease in the upper bound of the $A$-$C$
entanglement. One may wonder whether the CKW inequality may help 
in diagnosing joinability of reduced states.
If a joining state $w_{ABC}$ is \emph{not} a priori determined (in fact, the existence 
of such a state \emph{is} the entire question of joinability), the CKW inequality may 
be used to obtain a \emph{necessary} condition for joinability, namely, 
if $\rho_{AB}$ and $\rho_{AC}$ are joinable, then  
\begin{equation}
\label{CKWweak}
\mathcal{C}^2_{AB}+\mathcal{C}^2_{AC} \leq 1 .
\end{equation}
However, there exist pairs of bipartite states -- both unentangled (as the following  
Proposition shows) and non-trivially entangled (as we shall determine in Sec. III.B, see 
in particular Fig. \ref{fig:wernerpairjoinability}) -- that obey the ``weak'' CKW inequality 
in Eq. (\ref{CKWweak}), yet are \emph{not} joinable. The key point is that while the   
limitations that the CKW captures are to be ascribed to entanglement, 
entanglement is not required to prevent two states from being 
joinable. In fact, weaker forms of quantum correlations, as quantified by quantum discord 
\cite{Ollivier2001}, are likewise \emph{not} required for joinability limitations. 
Consider, specifically, so-called ``classical-quantum'' bipartite states, of the form
$$\rho=\sum_i p_i |i\rangle \langle i|_A \otimes \sigma_B^i,\;\;\;
\sum_i p_i=1,$$ 
where $\{|i\rangle_A \}$ is some local orthogonal basis on $A$ and
$\sigma_B^i$ is, for each $i$, an arbitrary state on $B$. Such states are known 
to have zero discord \cite{Dakic}. Yet, the following holds:

\begin{prop}
Classical-quantum correlated states need not be joinable.
\end{prop}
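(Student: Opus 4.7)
The plan is to establish the proposition by an explicit counterexample involving two zero-discord (indeed, classical--classical) qubit pairs whose $A$-marginals coincide, and then to rule out any joining state using the complementarity mechanism emphasized in the paragraph just preceding the statement.

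First, I would take
\[
\rho_{AB}=\tfrac{1}{2}\bigl(\ketbra{00}+\ketbra{11}\bigr), \qquad
\rho_{AC}=\tfrac{1}{2}\bigl(\ketbra{++}+\ketbra{--}\bigr),
\]
and verify that each is manifestly of the classical-quantum form $\sum_i p_i \ketbra{i}_A \otimes \sigma^i$, the local basis on $A$ being $\{\ket{0},\ket{1}\}$ for the first and $\{\ket{+},\ket{-}\}$ for the second, so that both have vanishing quantum discord by the criterion stated just before the proposition. Their marginals on $A$ both equal $\identity/2$, so the consistency condition on the overlapping reduction is automatically satisfied and any failure of joinability must be intrinsically quantum.

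The substantive step is to rule out a joining state $w_{ABC}$. The defining reductions force $\exval{Z_A Z_B}{w_{ABC}}=1$ and $\exval{X_A X_C}{w_{ABC}}=1$, and since both observables have spectrum $\{\pm 1\}$, $w_{ABC}$ must be supported on the intersection of their $+1$-eigenspaces. Any vector $\ket{\psi}$ in this intersection would satisfy $[Z_A Z_B,\, X_A X_C]\ket{\psi}=0$; a direct computation gives
\[
[Z_A Z_B,\, X_A X_C] = 2i\, Y_A \otimes Z_B \otimes X_C,
\]
which is (up to a phase) unitary and hence has trivial kernel. Consequently, no nonzero vector lies simultaneously in both $+1$-eigenspaces, and no such $w_{ABC}$ can exist.

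The main difficulty, such as it is, is conceptual rather than technical: one must pick the classical side on both pairs to be $A$ but relative to two mutually unbiased bases, so that each pair separately is as classical as possible while the two pairs jointly engage the complementarity between ``overlapping'' joint observables ($Z_A Z_B$ vs.\ $X_A X_C$), exactly the genuinely quantum source of joinability limitations singled out in the discussion preceding the statement.
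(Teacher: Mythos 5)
Your proposal is correct, and it uses the same counterexample as the paper (your $Z$-correlated $A$-$B$ pair and $X$-correlated $A$-$C$ pair are the paper's two classical-quantum states with the roles of the two bases interchanged). Where you differ is in how the impossibility of a joining state is established: the paper argues informally that a joining state would let Bob and Charlie simultaneously predict the outcomes of complementary measurements on Alice, "violating the uncertainty principle," whereas you make this rigorous by noting that the reductions force $\exval{Z_A Z_B}{w}=\exval{X_A X_C}{w}=1$, hence support of $w$ in the intersection of the two $+1$-eigenspaces, and then showing that this intersection is trivial because the commutator $[Z_A Z_B, X_A X_C]=2i\,Y_A\otimes Z_B\otimes X_C$ is invertible and must annihilate any common eigenvector. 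All the steps check out (the extremal expectation value of a $\pm1$-spectrum observable does pin the support to an eigenspace, and the commutator computation is right), so your version trades the paper's physically transparent but heuristic appeal to complementarity for a short, airtight algebraic argument; the paper's phrasing, in exchange, makes the conceptual point about the uncertainty principle more vivid. Either way the substance — that the obstruction is complementarity between the overlapping joint observables, not entanglement or discord — is the same.
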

\begin{proof}
Consider the two quantum states
\begin{align*}
& \rho_{AB}=(\sqketbra{\uparrow_X\uparrow_X}+
\sqketbra{\downarrow_X\downarrow_X})/2, \\ & \rho_{AC}
=(\sqketbra{\uparrow_Z\uparrow_Z}+
\sqketbra{\downarrow_Z\downarrow_Z})/2,
\label{CQ}
\end{align*}
on the pairs $A$-$B$ and $A$-$C$, respectively. Both have a completely
mixed reduced state over $A$ and thus it is meaningful to consider
their joinability. 
Let $w_{ABC}$ be a joining state. Then the outcome of Bob's $X$
measurement would correctly lead him to predict Alice to be in the
state $\ket{\uparrow_X}$ or $\ket{\downarrow_X}$, while at the same
time the outcome of Charlie's $Z$ measurement would correctly lead him
to predict Alice to be in the state $\ket{\uparrow_Z}$ or
$\ket{\downarrow_Z}$. Since this violates the uncertainty principle, 
$w_{ABC}$ cannot be a valid joining state.
\end{proof}

The existence of separable not joinable states 
has been independently reported in \cite{Lieb2013}. 
While formally our example is subsumed under the more general one presented 
in Thm. 4.2 therein (strictly satisfying the necessary condition for joinability 
given by their Eq. (2.2)), it has the advantage of offering both a transparent physical 
interpretation of the underlying correlation properties, and an intuitive proof of the joinability failure.

\subsection{Sharability}
\label{sec:S}

As mentioned, the second joinability structure we analyze is motivated
by the concept of sharability. In our context, we can think of
sharability as a restricted joining scenario in which a bipartite
state is joined with copies of itself. If ${\mathcal H}^{(2)} \simeq
\mathcal{H}_1^{(1)}\otimes\mathcal{H}_2^{(1)}$,
consider a $N$-partite space that consists of $m$ ``right'' copies of
$\mathcal{H}_1^{(1)}$ and $n$ ``left'' copies of
$\mathcal{H}_2^{(1)}$, with each neighborhood consisting of one right
and one left subsystem, respectively (hence a total of $mn$
neighborhoods). We then have the following:

\begin{defn} 
{\bf [Quantum Sharability]} A bipartite density operator $\rho \in
{\mathcal D}(\mathcal{H}_L \otimes\mathcal{H}_R)$ is $m$-$n$ {\em
sharable} if there exists an $N$-partite density operator $w \in
{\mathcal D}(\mathcal{H}^{\otimes m}_L \otimes\mathcal{H}^{\otimes
n}_R)$, called a {\em sharing state}, that reduces left-right-pairwise
to $\rho$, that is,
\begin{equation}
\tr{\hat{L_i}\hat{R_j}}{w}=\rho, \;\;\; \forall i =1,\ldots, m, \,
j=1, \ldots, n,
\label{QS}
\end{equation}
where the partial trace is taken over the tensor complement of
neighborhood $ij$.
\end{defn} 
\noindent Each $m$-$n$ sharability scenario may be viewed as a specific joining structure with the additional constraint that each of the joining states be equal to one another, the list being $(\rho, \rho, \ldots, \rho)$.
In what follows, we shall take \emph{arbitrarily sharable} to mean $\infty$-$\infty$ sharable, whereas \emph{finitely sharable} means that $\rho$ is not $m$-$n$ sharable for some $m$, $n$. Also, each property ``$m$-$n$ sharable'' (sometimes also referred to as a ``$m$-$n$ extendible'') is taken to define a \emph{sharability criterion}, which a state may or may not satisfy.

It is worth noting the relationship between sharability and \emph{N-representability}. The $N$-representability problem asks if, for a given (symmetric) $p$-partite density operator $\rho$ on $({\mathcal H}_1^{(1)})^{\otimes p}$, there exists an $N$-partite pre-image state for which $\rho$ is the $p$-particle reduced state. $N$-representability has been extensively studied for indistinguishable bosonic and fermionic subsystems \cite{Coleman,Mazziotti2012,Mazziotti2012Rev} and is a very important problem in quantum chemistry \cite{Stillinger1995}. We can view $N$-representability as a variant on the sharability problem, whereby the distinction between the left and right subsystems is lifted, and $m+n=N$. Given the $p$-partite state $\rho$ as the shared state, we ask if there exists a sharing $N$-partite state which shares $\rho$ among all possible $p$-partite subsystems. In the setting of indistinguishable particles, the associated symmetry further constrains the space of the valid $N$-partite sharing states. 

Just as with 1-2 joinability, any classical probability distribution is arbitrarily sharable 
\cite{Seevinck2010}. 
Likewise, similar to the joinability case, convexity properties play an
important role towards characterizing sharability. If 
$\text{dim}(\hilbert^{(1)}_1)=d_1\equiv d_L$ and $\text{dim}(\hilbert^{(1)}_2)=d_2\equiv d_R$,  
then it follows from the convexity of the set of joinable states lists that 
$m$-$n$ sharable states form a convex set, for fixed subsystem dimensions 
$d_L$ and $d_R$. This implies that if $\rho$ satisfies a particular
sharability criterion, then any mixture of $\rho$ with the completely
mixed state also satisfies that criterion, since the completely mixed
state is arbitrarily ($\infty$-$\infty$) sharable. 

Besides mixing with the identity, the degree of sharability may be
unchanged under more general transformations on the input state.
Consider, specifically, completely-positive trace-preserving bipartite
maps $\map(\rho)$ that can be written as a mixture of local unitary
operations, that is,
\begin{equation}
\map (\rho)=\sum_i \lambda_i U^{i}_1\otimes V^{i}_2 \rho
{U^{i}_1}^{\dagger}\otimes {V^{i}_2}^{\dagger}, \;\;\sum_i \lambda_i=1,
\label{map}
\end{equation}
where $U^{i}_1$ and $V^{i}_2$ are arbitrary unitary transformations
$\mathcal{H}_L$ and $ \mathcal{H}_R$, respectively. These (unital) maps form a proper 
subset of general Local Operations and Classical Communication (LOCC) \cite{Nielsen2001}. 
We establish the following:
\begin{thm}
If $\rho$ is $m$-$n$ sharable, then $\map(\rho)$ is $m$-$n$ sharable
for any map $\map$ that is a convex mixture of unitaries. 
\label{thm:LOCC}
\end{thm}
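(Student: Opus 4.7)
The plan is to construct an explicit sharing state for $\map(\rho)$ out of a given sharing state $w$ for $\rho$, by lifting each local-unitary term of $\map$ to act synchronously on every left and every right subsystem of $w$ and then mixing with the same convex weights. By hypothesis there exists $w\in\mathcal{D}(\hilbert_L^{\otimes m}\otimes\hilbert_R^{\otimes n})$ satisfying Eq.~(\ref{QS}), and by Eq.~(\ref{map}) we may write $\map(\rho)=\sum_k\lambda_k (U^k\otimes V^k)\rho(U^k\otimes V^k)^{\dagger}$. The natural candidate is
\begin{equation*}
w'\,\equiv\,\sum_k \lambda_k \bigl((U^k)^{\otimes m}\otimes (V^k)^{\otimes n}\bigr)\, w\,\bigl((U^k)^{\otimes m}\otimes (V^k)^{\otimes n}\bigr)^{\dagger}.
\end{equation*}

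Next I would verify that $w'$ meets the requirements of Eq.~(\ref{QS}) with $\rho$ replaced by $\map(\rho)$. Positivity and unit trace of $w'$ are automatic, since it is a convex combination of unitary conjugates of a density operator. For the reduction condition, fix any pair $(L_i,R_j)$ and compute $\tr{\hat{L_i}\hat{R_j}}{w'}$ by pushing the sum and the scalars through the linear partial trace. On each traced-out factor the corresponding copy of $U^k$ or $V^k$ drops out, via the elementary identity $\tr{A}{U_A\sigma U_A^{\dagger}}=\tr{A}{\sigma}$, applicable whenever the unitary is supported entirely on the subsystem being traced. What survives is the single pair $U^k\otimes V^k$ acting on the surviving neighborhood, so
\begin{equation*}
\tr{\hat{L_i}\hat{R_j}}{w'} = \sum_k \lambda_k (U^k\otimes V^k)\,\tr{\hat{L_i}\hat{R_j}}{w}\, (U^k\otimes V^k)^{\dagger}=\map(\rho),
\end{equation*}
where the last step uses the assumed reduction of $w$. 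Since $(i,j)$ is arbitrary, $w'$ is a valid sharing state for $\map(\rho)$.

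I do not anticipate any substantive obstacle; the argument is essentially a symmetry-plus-linearity observation, and its brevity is itself informative. What makes the construction work is precisely the product structure $U^k\otimes V^k$ of each Kraus operator of $\map$: only operators of this form can be promoted to $(U^k)^{\otimes m}\otimes (V^k)^{\otimes n}$ in a way that preserves every left-right pairwise marginal. For a general LOCC channel this shortcut would fail, since the Kraus operators generally neither factorize across the bipartition nor admit a synchronous-copy lift that preserves every $(L_i,R_j)$ reduction. This explains why the theorem is stated for the (strict) subclass of convex mixtures of local unitaries rather than for arbitrary LOCC.
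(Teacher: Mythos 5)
Your proposal is correct and follows essentially the same route as the paper: lifting each local-unitary term $U^k\otimes V^k$ to the synchronous action $(U^k)^{\otimes m}\otimes (V^k)^{\otimes n}$ on a sharing state of $\rho$, the only cosmetic difference being that you write the convex mixture of lifted states explicitly, whereas the paper first invokes convexity of the set of $m$-$n$ sharable states and then treats each unitary term separately. Both arguments hinge on the same observation that tracing to any $(L_i,R_j)$ pair leaves exactly one $U^k\otimes V^k$ acting on the reduced state.
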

\begin{proof}
Let $\map(\rho)$ be expressed as in Eq. (\ref{map}). By virtue of
the convexity of the set of $m$-$n$ sharable states (for fixed subsystem dimensions),
it suffices to show that each term, $UV\rho U^{\dagger}V^{\dagger}$, in 
$\map(\rho)$ is $m$-$n$ sharable.
Let $w$ be a sharing state for $\rho$, and define
\begin{equation*}
w'\hspace*{-1mm}=\hspace*{-1mm}
\big(U_1\ldots \hspace*{-.5mm}U_m V_{m+1}\ldots \hspace*{-.5mm}
V_{m+n}\big) w \big(U_1^\dag\ldots
\hspace*{-0.5mm}U_m^\dag V_{m+1}^\dag \ldots \hspace*{-.5mm}
V_{m+n}^\dag\big).
\end{equation*}
Then, for any left-right pair of subsystems $i$ and $j$, it follows
that
\begin{equation*} 
\tr{i,j}{w'}=U_i V_j \tr{i,j}{w} U_i^\dag V_j^\dag=U\otimes V \rho
U^\dag \otimes V^\dag=\rho_{UV}.
\end{equation*} 
Hence, $w'$ is an $m$-$n$-sharing state for $\rho_{UV}$, as desired. 
\end{proof}

This result suggests a connection between the degree of sharability
and the entanglement of a given state. In both cases, 
there exist classes of states for which these properties cannot be
``further degraded'' by locally acting maps (or any map for that matter). Obviously, LOCC cannot 
decrease the entanglement of states with no entanglement, and convex unitary 
mixtures as above cannot increase the sharability of states with $\infty$-$\infty$ sharability (because they are already as sharable as possible). 
These two classes of states can in fact be shown to coincide as a consequence
of the fact that \emph{arbitrary sharability is equivalent to
(bipartite) separability}. This result has been appreciated in the literature 
\cite{Terhal2004,Masanes2006,Yang2006,Seevinck2010} and is credited to both 
\cite{Werner1989b} and \cite{Fannes1988}. We reproduce it here in view 
of its relevance to our work:
\begin{thm}
A bipartite quantum state $\rho$ on
$\mathcal{H}_L\otimes\mathcal{H}_{R}$ is unentangled (or separable) if
and only if it is arbitrarily sharable.
\end{thm}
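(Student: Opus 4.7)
The plan is to prove the two implications separately. The easy direction is explicit construction; the hard direction relies on a quantum de Finetti-type argument.

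For the ``if'' direction (separable implies arbitrarily sharable), I would start from a separable decomposition $\rho = \sum_i p_i \,\sigma_i^L \otimes \tau_i^R$, where each $\sigma_i^L \in \mathcal{D}(\mathcal{H}_L)$ and $\tau_i^R \in \mathcal{D}(\mathcal{H}_R)$, with $p_i \geq 0$ and $\sum_i p_i = 1$. For arbitrary $m,n$, I would exhibit the sharing state directly as
\begin{equation*}
w_{m,n} = \sum_i p_i \, (\sigma_i^L)^{\otimes m} \otimes (\tau_i^R)^{\otimes n},
\end{equation*}
and verify that tracing out all but one ``left'' copy and one ``right'' copy returns $\rho$, using only that every $\sigma_i^L$ and $\tau_i^R$ is trace-one. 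Since $m$ and $n$ are arbitrary, this establishes $\infty$-$\infty$ sharability.

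For the ``only if'' direction, I would proceed by a symmetrization plus de Finetti argument. First, given any $m$-$n$ sharing state $w_{m,n}$ for $\rho$, I would symmetrize it by averaging over the symmetric groups $S_m$ acting on the left factors and $S_n$ acting on the right factors,
\begin{equation*}
\tilde w_{m,n} = \frac{1}{m!\,n!}\sum_{\pi \in S_m}\sum_{\sigma \in S_n} (P_\pi \otimes P_\sigma)\, w_{m,n}\, (P_\pi \otimes P_\sigma)^\dagger,
\end{equation*}
which remains a valid sharing state because every left-right pair marginal was already equal to $\rho$ and permutations only relabel the pairs. Iterating in $m$ and $n$, I obtain a pair of infinite families of symmetric extensions on the left and right, respectively.

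The main step, and the place where the real work happens, is to invoke the quantum de Finetti theorem for states admitting arbitrary symmetric extensions: any state on $\mathcal{H}_L^{\otimes m} \otimes \mathcal{H}_R^{\otimes n}$ that is the marginal of symmetric states on arbitrarily many additional left and right copies must be a convex mixture of product powers,
\begin{equation*}
\tilde w_{m,n} = \int d\mu(\sigma,\tau)\, \sigma^{\otimes m} \otimes \tau^{\otimes n},
\end{equation*}
for some probability measure $\mu$ on $\mathcal{D}(\mathcal{H}_L) \times \mathcal{D}(\mathcal{H}_R)$. Taking the $m=n=1$ marginal gives $\rho = \int d\mu(\sigma,\tau)\, \sigma \otimes \tau$, which is a separable decomposition. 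The delicate point, and the main obstacle, is justifying this de Finetti step cleanly in the bipartite setting: one must argue that the left and right subsystems can be handled independently (essentially by applying the theorem twice, once after fixing a left marginal) and handle the limit $m,n \to \infty$ using compactness of the state space to extract a limiting measure $\mu$. Given this, the two-body reduction directly yields the separable form, completing the proof.
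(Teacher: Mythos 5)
Your proof is correct and follows essentially the same route as the paper: the identical product-power construction $\sum_i p_i (\sigma_i^L)^{\otimes m}\otimes(\tau_i^R)^{\otimes n}$ for the separable-implies-sharable direction, and a symmetrization-plus-quantum-de-Finetti argument for the converse. The only difference is one of economy: the paper observes that $1$-$\infty$ sharability already suffices, so it symmetrizes only over the $n$ right copies and invokes the standard one-species de Finetti/Fannes theorem (with the single left system carried along as a reference), thereby sidestepping the two-sided de Finetti step that you correctly flag as the delicate point of your version.
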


\begin{proof} 
($\Leftarrow$) Let $\rho$ be separable. Then for some set of density
operators $\{\rho^L_i,\rho^R_i\}$, it can be written as
$\rho=\sum_i \lambda_i \rho^L_i \otimes \rho^R_i,$
with $\sum_i \lambda_i =1$. Let $n$ and $m$ be arbitrary, and let the
$N$-partite state $w$,  be defined as follows:
\begin{equation*}
w= \sum_i \lambda_i (\rho_i^L)^{\otimes m} \otimes
(\rho_i^R)^{\otimes n},
\end{equation*}
with $N=m+n$. By construction, the state of each $L$-$R$ pair is
$\rho$, since it follows straighforwardly that Eq. (\ref{QS}) is
obeyed for each $i,j$. Thus, $w$ is a valid sharing state.

($\Rightarrow$) Since $\rho$ is arbitrarily sharable, there exists a
sharing state $w$ for arbitrary values of $m$, $n$. In particular, we
need only make use of a sharing state $w$ for $m=1$ and arbitrarily
large $n$, whence we let $n\rightarrow \infty$. Given $w$, let us
construct another sharing state $\tilde{w}$, which is invariant under
permutations of the right subsystems, that is, let
\begin{equation*}
\tilde{w}=\frac{1}{|S_n|}\sum_{\pi \in S_n} V_\pi^\dagger w V_\pi , 
\end{equation*}
where $S_n \equiv \{ \pi\}$ is the permutation group of $n$ objects, acting 
on $\mathcal{H}^{\otimes n}_R$ via the natural $n$-fold representation,  
$V_\pi (\prod_i \ket{\psi_i}) = \otimes_i \ket{\psi_{\pi(i)}}$, $i=1,\ldots , n$.
It then follows that $\tilde{w}$ shares $\rho$:
\begin{align*}
\tr{\hat{L},\hat{R}}{\tilde{w}}&=\frac{1}{|S_n|}\sum_{\pi \in S_n}
    \tr{\hat{L},\hat{R}}{V_\pi^\dagger w V_\pi }\\ 
    &=\frac{1}{|S_n|}\sum_{\pi \in S_n}
    \tr{\hat{L},\pi (\hat{R_i})}{w} 
    =\frac{1}{|S_n|}\sum_{\pi\in S_n} \rho = \rho.
\end{align*}
Having established the existence of a symmetric sharing state
$\tilde{w}\in {\mathcal
D}(\mathcal{H}_L\otimes\mathcal{H}^{\otimes\infty}_R)$, Fannes'
Theorem (see section 2 of \cite{Fannes1988}) implies the existence of a
unique representation of $\tilde{w}$ as a sum of product states, 
$\tilde{w}=\sum_i \lambda_i
\rho_L^i\otimes\rho_R^i\otimes\rho_R^i\otimes \ldots. $
Reducing $\tilde{w}$ to any $L$-$R$ pair leaves a separable
state. Thus, if $\rho$ is 1-$n$ sharable it must be separable.
\end{proof}

\noindent 
As we alluded to before, a Corollary of this result is that
in fact {\em 1-$\infty$ sharability implies} $\infty$-$\infty$
sharability. In closing this section, we also briefly mention the
concept of {\em exchangeability} \cite{Caves2002,Renner2007}. A density operator
$\rho$ on $(\mathcal{H}^{(1)}_1)^{\otimes p}$ is said to be exchangeable if it
is symmetric under permutation of its $p$ subsystems and if there
exists a symmetric state $w$ on $(\mathcal{H}_1^{(1)})^{\otimes (p+q)}$ such
that the reduced states of any subset of $p$ subsystems is $\rho$ for all
$q\in\mathbb{N}$. Similar to sharability, exchangeability implies
separability. However, {\em the converse only holds in general for
sharability}: clearly, there exist states which are separable but not
exchangeable, because of the extra symmetry requirement. Thus, the
notion of sharability is more directly related to entanglement than
exchangeability is.

\section{Joining and Sharing Werner and Isotropic States}
\label{sec:Wernerstates}

Even for the simplest case of two bipartite states with an overlapping marginal, a general 
characterization of joinability is extremely non-trivial. As remarked, no conditions yet exist 
which are both necessary and sufficient for two \emph{arbitrary} density operators to be joinable; 
although, conditions that are separately necessary or sufficient have been recently derived 
\cite{Lieb2013}. In this Section, we present a complete characterization of the three-party 
joining scenario and the 1-$n$ sharability problem for 
Werner and isotropic states on arbitrary subsystem dimension $d$. We begin by introducing the 
relevant families of quantum and classical states to be considered.

\subsection{Werner and isotropic qudit states, and their classical analogues}
\label{sec:states}

The usefulness of bipartite Werner and isotropic states is derived from their simple analytic properties and range of mixed state entanglement. For a given subsystem dimension $d$, Werner states are defined as the one-parameter family that is invariant under collective unitary transformations \cite{Werner1989} (see also \cite{Renner2007}), that is, transformations of the form $U \otimes U$, for arbitrary $U \in {\mathfrak U}(d)$. The parameterization which we employ is given by
$$ 
\rho(\Psi^{-})=\frac{d}{d^2-1}\left[ (d-\Psi^{-}) \frac{\identity}{d^2} 
+ \Big(\Psi^{-}- \frac{1}{d}\Big)\frac{V}{d} \right], $$
where $V$ is the swap operator, defined by its action on any product
ket, $V\ket{\psi \phi} \equiv \ket{\phi \psi}$. This parameterization is chosen because $\Psi^-$ is a Werner state's expectation value with respect to $V$, $\Psi^{-}=\mbox{Tr}[V\rho(\Psi^{-})]$. Non-negativity is ensured by $-1\leq \Psi^{-} \leq 1$, and the completely mixed state corresponds to $\Psi^{-}=1/d$. Furthermore, the concurrence of Werner states is simply given by \cite{Chen}
\begin{equation}
\mathcal{C}(\rho(\Psi^{-}))=-\trb{}{V\rho(\Psi^{-})}=-\Psi^{-} , \;\;\; \Psi^{-} \leq 0. 
\label{Wc}
\end{equation}
For $\Psi^{-} > 0$, the concurrence is defined to be zero, indicating
separability. Werner states have been experimentally characterized
for photonic qubits, see e.g. \cite{KwiatW}. Interestingly, they can
be dissipatively prepared as the steady state of coherently driven
atoms subject to collective spontaneous decay \cite{Agarwal2006}.

Isotropic states are defined, similarly, as the one-parameter family that is invariant under transformations of the form $U^* \otimes U$ \cite{Horodecki1999Jun}. We parameterize 
these states as
$$ \rho(\Phi^{+})=\frac{d}{d^2-1}\left[(d-\Phi^{+})\frac{\identity}{d^2}+
 \Big(\Phi^{+}-\frac{1}{d}\Big)\ketbra{\Phi^+}\right], $$
where $\ket{\Phi^{+}}=\sqrt{1/d}\sum_{i}\ket{ii}$. The value of the parameter is given by the expectation value with respect to the partially transposed swap operator, $\Phi^{+}=
\trb{}{V^{T_A}_{(AB)}\rho(\Phi^{+})}$, and is related to the so-called ``singlet fraction'' \cite{Horodecki1999} by $\Phi^{+}=dF$. Non-negativity is now 
ensured by $0\leq\Phi^+\leq d$, whereas the concurrence is given by \cite{Caves2003},
\begin{align}
\mathcal{C}(\rho(\Phi^{+}))=\sqrt{\frac{2}{d(d-1)}}\,(\Phi^{+}-1) , \;\;\; \Phi^{+} \geq 1,
\end{align}
and is defined to be zero for $\Phi^{+}\leq 1$.

Before introducing probability distributions that will serve as the analogue classical states, 
we present an alternative way to think of Werner states, which will prove useful later. First, 
the highest purity, attained for the $\Psi^{-}=-1$ state, is $2/[d(d-1)]$, with the absolute maximum of 1 corresponding to the pure singlet state for $d=2$. Second, collective projective measurements on a most-entangled Werner state return only disagreeing outcomes (e.g., corresponding to $\ket{1}\otimes\ket{3}$, but not $\ket{1}\otimes\ket{1}$). The following construction of bipartite Werner states demonstrates the origin of both of these essential features. For generic $d$, the analogue to the singlet state is the following $d$-partite fully anti-symmetric state:
\begin{align}
\label{eq:purewerner}
 \ket{\psi^-_d}= \frac{1}{\sqrt{d!}}
 \sum_{\pi\in S_d} \text{sign}(\pi) V_{\pi}\ket{1}\ket{2}\ldots\ket{d}, 
\end{align}
where, as before, $S_d \equiv \{ \pi\}$ denotes the permutation group and $\{\ket{\ell}\}$ is an orthonormal basis on ${\mathcal H}^{(1)}\simeq {\mathbb C}^d$. The above state has the property 
of being ``completely disagreeing'', in the sense that a collective measurement returns outcomes that differ on each qudit with certainty. The most-entangled \emph{bipartite} qudit Werner state is nothing but the two-party reduced state of $\ket{\psi^-_d}$. Thus, we can think of general bipartite qudit Werner states as mixtures of the completely mixed state with the two-party-reduction of $\ket{\psi^-_d}$. The inverse of $2/[d(d-1)]$ (the purity) is precisely 
the number of ways two ``dits'' can disagree. Understanding bipartite Werner states to arise from reduced states of $\ket{\psi^-_d}$ will inform our construction of the classical analogue states,  and also help us understand some of the results of Sec. \ref{subsec:Wjoin} and \ref{subsec:Wsharability}.

For Werner states, increased entanglement corresponds to increased ``disagreement'' for collective measurement outcomes. For isotropic states, increased entanglement corresponds to increased ``agreement'' of collective measurements, but only with respect to the computational basis 
$\{ \ket{i}\}$ relative to which such states are defined. It is this expression of agreement vs. disagreement of outcomes which carries over to the classical analogue states, which we are now ready to introduce.
The relevant probability distributions are defined on the outcome space 
$\Omega_d\times \Omega_d =\{1,\ldots,d\}\times\{1,\ldots,d\}$. To resemble Werner and isotropic quantum states, these probability distributions should have completely mixed marginal distributions and range from maximal disagreement to maximal agreement. This is achieved by an interpolation between an even mixture of ``agreeing pure states'', namely, $(1,1), (2,2), \ldots, (d,d)$, and an even mixture of all possible ``disagreeing pure states'', namely, $(1,2), \ldots, (1,d),(2,1),\ldots,(d,d-1)$. That is:
\begin{equation}
p(A=i,B=j)_{\alpha}=\frac{\alpha}{d}\,\delta_{i,j}+
\frac{1-\alpha}{d(d-1)}\, (1-\delta_{i,j}),
\label{eq:classicalW}
\end{equation}
where $\alpha$ is the probability that the two outcomes agree.

To make the analogy complete, it is desirable to relate $\alpha$ to both $\Psi^-$ and $\Phi^+$. 
We define $\alpha$ in the quantum cases to be the probability of obtaining $\ket{k}$ on system 
$A$, conditional to outcome $\ket{k}$ on system $B$ for the projective measurement $\{\ket{ij}\bra{ij} \}$. For Werner states, this probability is related to $\Psi^{-}$ by
\begin{align}
p(\ket{k}_A\,|\,\ket{k}_B)_W=\frac{\Psi^{-}+1}{d+1} \equiv \alpha_W,
\label{agreewer}
\end{align}
and, similarly for isotropic states, we have
\begin{align}
p(\ket{k}_A\,|\,\ket{k}_B)_I=\frac{\Phi^{+}+1}{d+1} \equiv \alpha_I.
\label{agreeiso}
\end{align}
We may thus re-parameterize both the Werner and isotropic states in terms of their 
respective above-defined ``probabilities of agreement'', namely:
\begin{eqnarray}
\hspace*{-6mm} \rho(\alpha_W)&\hspace*{-0.5mm}=\hspace*{-0.5mm}&\frac{d}{d-1}\left[ (1-\alpha_W)\frac{\identity}{d^2}+
 \Big(\alpha_W-\frac{1}{d}\Big) \frac{V}{d} \right] ,
 \label{eq:agreementparamwer}\\
\hspace*{-6mm} \rho(\alpha_I)&\hspace*{-0.5mm}=\hspace*{-0.5mm}&\frac{d}{d-1}\left[ (1-\alpha_I)\frac{\identity}{d^2}+
\Big( \alpha_I- \frac{1}{d}\Big)\ketbra{\Phi^{+}}\right],
\label{eq:agreementparamiso} 
\end{eqnarray}
subject to the conditions
$$0\leq\alpha_W\leq\frac{2}{d+1},\;\;\; \frac{1}{d+1}\leq\alpha_I\leq 1.$$

For Werner states, $\alpha_W$ can rightly be considered a probability of agreement because it is independent of the choice of local basis vectors in the projective measurement $\{U \otimes U\ket{ij}\bra{ij}U^{\dagger}\otimes U^{\dagger} \}$. For isotropic states, $\alpha_I$ does not have as direct an interpretation. We may nevertheless interpret $\alpha$ as a probability of basis-independent agreement if we pair local basis vectors on $A$ with their complex conjugates on $B$. In other words, $\alpha_I$ can be thought of as the probability of agreement for local projective measurements of the form  $\{U^* \otimes U\ket{ij}\bra{ij}U^{*\dagger}\otimes U^{\dagger} \}$ \cite{Remark}.

\subsection{Classical joinability limitations} 
\label{sec:classicaljoining}

In order to determine the joinability limitations in the classical case, we begin by noting that any (finite-dimensional) classical probability distribution is a unique convex combination of the pure states of the system. In our case, there are five extremal three-party states, for which the two-party marginals are classical analogue states, as defined in Eq. (\ref{eq:classicalW}). These are
\begin{align*}
p(A,B,C\,\text{agree})=&\frac{1}{d}\sum_i (i,i,i),\\
p(A,B\,\text{agree})=&\frac{1}{d(d-1)}\sum_{i\neq j} (i,i,j),\\
p(A,C\,\text{agree})= &\frac{1}{d(d-1)}\sum_{i\neq j} (i,j,i),\\
p(B,C\,\text{agree})=&\frac{1}{d(d-1)}\sum_{i\neq j} (j,i,i),\\
p(\text{all disagree})=&\frac{1}{d(d-1)(d-2)}\sum_{i\neq j\neq k} (i,j,k),
\end{align*}
where $(i,j,k)$ stands for the pure probability distribution $p(A,B,C)=\delta_{A,i}\delta_{B,j}\delta_{C,k}$. The first four of these states are valid for all $d\geq2$ and each corresponds to a vertex of a tetrahedron, as depicted in Fig. \ref{fig:correlationcomparison}(left). The fifth state is only valid for $d\geq3$ and corresponds to the point $(\alpha_{AB},\alpha_{AC},\alpha_{BC})=(0,0,0)$ in Fig. \ref{fig:correlationcomparison}(right). Any valid three-party state for which the two-party marginals are classical analogue states must be a convex combination of the above states. Therefore, the joinable-unjoinable boundary is delimited by the boundary of their convex hull. 
For the $d=2$ case, the inequalities describing these boundaries are explicitly given by the following:
\begin{align*}
p(A,B,C\,\text{agree})\geq 0 & \Rightarrow & \;\;\,\alpha_{AB}+\alpha_{AC}+\alpha_{BC} & \geq 1, \\
p(C \,\text{disagrees})\geq 0 & \Rightarrow & \;\;\,-\alpha_{AB}+\alpha_{AC}+\alpha_{BC} & \leq 1, \\
p(B \,\text{disagrees})\geq 0 & \Rightarrow & \alpha_{AB}-\alpha_{AC}+\alpha_{BC} & \leq 1,\\
p(A \,\text{disagrees})\geq 0 & \Rightarrow & \alpha_{AB}+\alpha_{AC}-\alpha_{BC} & \leq 1, 
\end{align*}
where each inequality arises from requiring that the corresponding extremal state has a non-negative likelihood. In the $d\geq3$ case, the inequality $p(A,B,C\,\text{agree})\geq 0$ is replaced by $\alpha_{AB},\alpha_{AC},\alpha_{BC}\geq 0$.

\begin{figure*}[!t]
\begin{centering}
\includegraphics[width=.85\columnwidth,viewport=20 0 1000 900,clip]{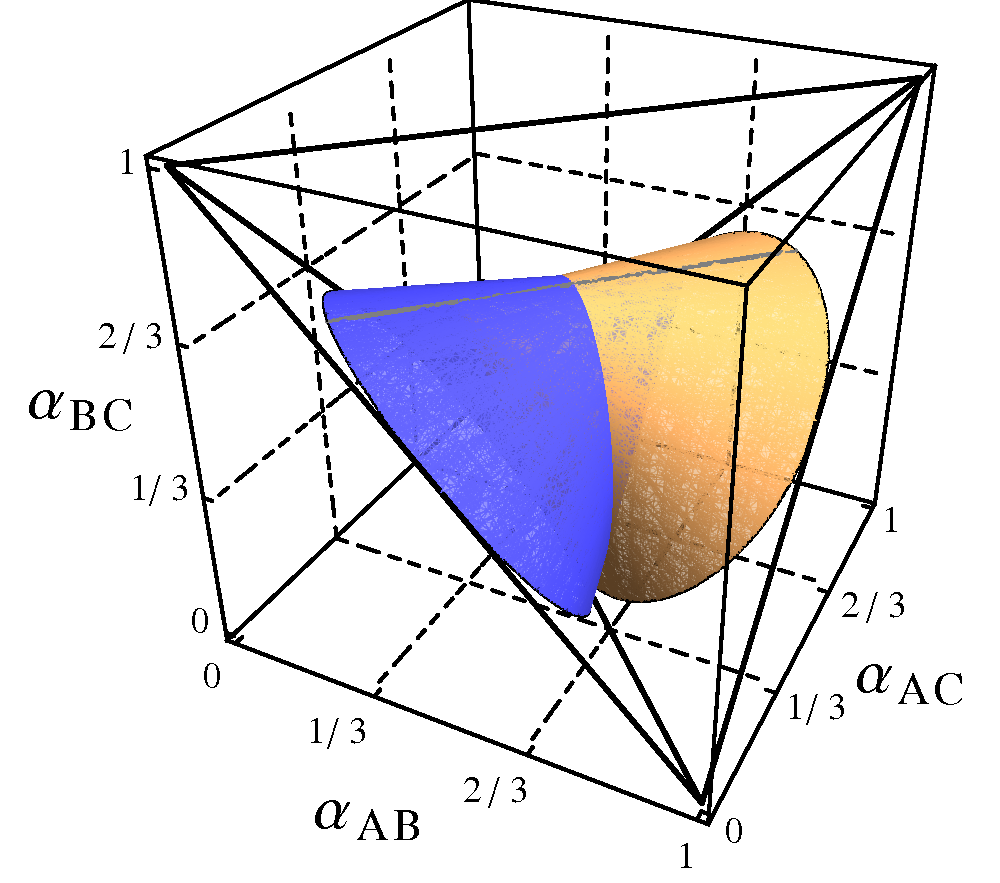}
\hspace{10mm}
\includegraphics[width=.85\columnwidth,viewport=20 0 1000 900,clip]{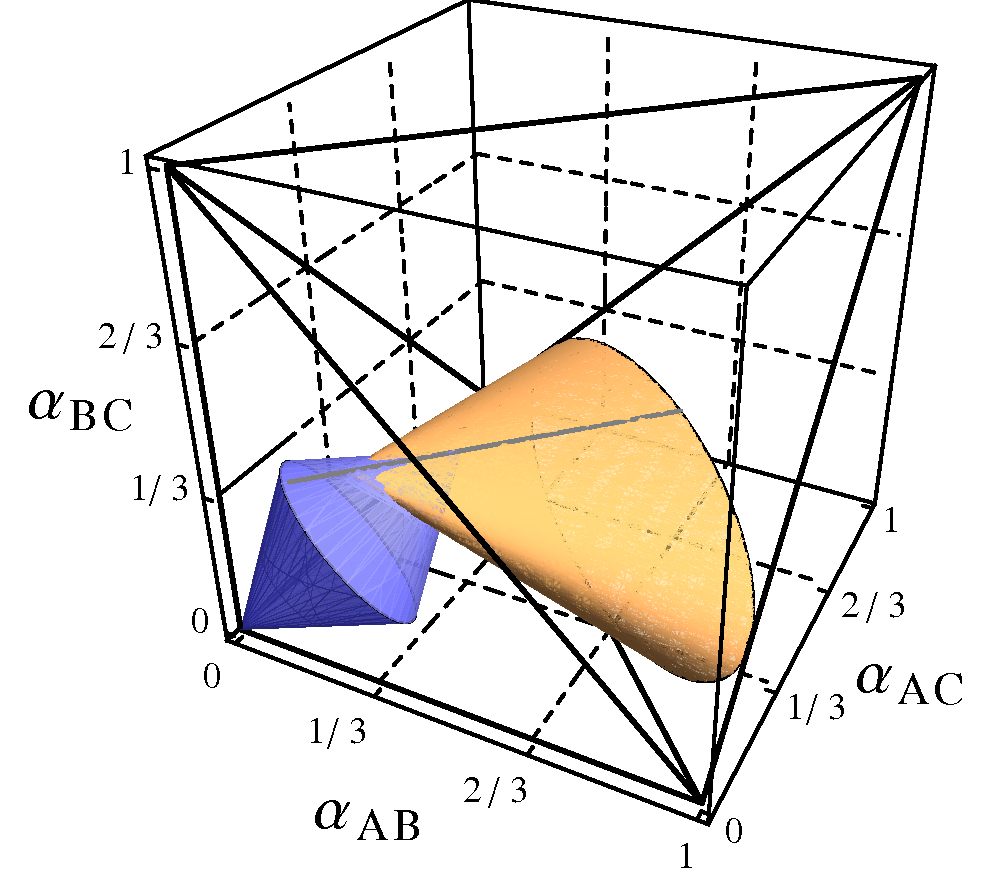}
\end{centering}
\vspace*{-3mm}
\caption{(Color online) 
Three-party quantum and classical joinability limitations for Werner and isotropic states, and their classical analogue, as parameterized by Eqs. (\ref{eq:agreementparamwer}), (\ref{eq:agreementparamiso}), (\ref{eq:classicalW}), respectively. 
Left panel: Qubit case, $d=2$.
The Werner state boundary is the surface of the darker cone with its vertex at $(2/3, 2/3, 2/3)$, whereas the isotropic state boundary is the surface of the lighter cone with its vertex at $(1/3, 1/3, 2/3)$. The classical boundary is the surface of the tetrahedron. 
Right panel: Higher-dimensional case, $d=5$. 
The Werner state boundary is the surface of the bi-cone with vertices at $(0,0,0)$ and $(1/3, 1/3, 1/3)$, whereas the isotropic state boundary is the flattened cone with its vertex at $(1/6, 1/6, 1/3)$. The classical boundary is the surface of the two joined tetrahedra. In both panels the grey line resting on top of the cones indicates the colinearity of the cone surfaces along this line segment.}
\label{fig:correlationcomparison}
\end{figure*}

\subsection{Joinability of Werner and isotropic qudit states}
\label{subsec:Wjoin}

We now present our results on the three-party joinability of Werner and isotropic states 
and then compare them to the classical limitations just found in the previous section. 
While, as mentioned, all the technical proofs are post-poned to Appendix \ref{sec:joiningreptools} in order to ease readability, the basic idea is to exploit the high degree of symmetry that these classes of states enjoy. 

Consider Werner states first. Our starting point is to observe that 
if a tripartite state $w_{ABC}$ joins two reduced Werner states $\rho_{AB}$ and $\rho_{AC}$, 
then the ``twirled state'' $\tilde{w}_{ABC}$, given by
\begin{equation}
\tilde{w}_{ABC}=\int (U\otimes U\otimes U)\, w_{ABC} \,
(U\otimes U \otimes U)^{\dagger} d\mu(U),
\label{twirlapp}
\end{equation}
is also a valid joining state. In Eq. (\ref{twirlapp}), $\mu$ denotes the invariant Haar measure on ${\mathfrak U}(d)$, and the twirling super-operator effects a projection into the subspace of operators with collective unitary invariance \cite{Eggeling2001}. By invoking the Schur-Weyl duality \cite{Fulton1991}, the guaranteed existence of joining states with these symmetries allows one to narrow the search for valid joining states to the Hermitian subspace spanned by representations of subsystem permutations, that is, density operators of the form 
\begin{equation}
w=\sum_{\pi\in{S}_3}\mu_{\pi}V_{\pi},\;\;\; w \in {\mathcal W}_W, 
\label{permrepspan}
\end{equation}
where Hermiticity demands that $\mu_{\pi}^*=\mu_{\pi^{-1}}$.
%
Given $w_{ABC}$ which joins Werner states, each subsystem pair is characterized by the 
expectation value with the respective swap operator, 
$\Psi^{-}_{ij}= \mbox{Tr}[ w_{ABC} (V_{ij}\otimes\identity_{\overline{ij}} )],$
where $i,j\in\{A,B,C\}$ with $i\neq j$. Hence, the task is to determine
for which $(\Psi^{-}_{AB}, \Psi^{-}_{BC}, \Psi^{-}_{AC})$ there exists a density
operator $w_{ABC}$ consistent with the above expectations. Our main result is the following:

\begin{thm} 
\label{triower} 
Three Werner qudit states with parameters $\Psi^{-}_{AB}, \Psi^{-}_{BC}, \Psi^{-}_{AC}$ are
joinable if and only if $(\Psi^{-}_{AB}, \Psi^{-}_{BC}, \Psi^{-}_{AC})$ lies within the bi-cone
described by
\begin{align}
1\pm\overline{\Psi^{-}}\geq \frac{2}{3}\left|\Psi^{-}_{BC}+\omega
\Psi^{-}_{AC}+\omega^2\Psi^{-}_{AB}\right| ,
\label{d1}
\end{align}
for $d\geq3$, or within the cone described by
\begin{align}
1-\overline{\Psi^{-}}\geq \frac{2}{3}\left|\Psi^{-}_{BC}+ \omega
\Psi^{-}_{AC}+\omega^2\Psi^{-}_{AB}\right| ,
\;\;\; \overline{\Psi^{-}}\geq 0 , 
\label{d2b} 
\end{align}
for $d=2$, where 
\begin{equation}
\overline{\Psi^{-}}=\frac{1}{3}(\Psi^{-}_{AB}+\Psi^{-}_{BC}+\Psi^{-}_{AC}),
\;\;\;\omega=e^{i\frac{2\pi}{3}}.
\label{Phibar}
\end{equation}
\end{thm}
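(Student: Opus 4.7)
The plan is to exploit the collective-$U(d)$ symmetry shared by all three Werner targets in order to reduce the joining problem to a small explicit matrix algebra. Since $\rho(\Psi^-)$ is invariant under $U\otimes U$ for every $U\in\mathfrak{U}(d)$, any joining state $w_{ABC}$ may be replaced by its collective $U^{\otimes 3}$-twirl $\tilde w_{ABC}$ of Eq.~(\ref{twirlapp}) without disturbing any of the three reductions. By Schur-Weyl duality the commutant of $\{U^{\otimes 3}:U\in\mathfrak{U}(d)\}$ is exactly the span of the permutation representations $\{V_\pi\}_{\pi\in S_3}$, so I may restrict attention to ans\"atze of the form in Eq.~(\ref{permrepspan}), with Hermiticity enforcing $\mu_\pi^*=\mu_{\pi^{-1}}$.

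Using $V_\pi V_\sigma=V_{\pi\sigma}$ and $\mathrm{Tr}\,V_\pi=d^{\,c(\pi)}$ (where $c(\pi)$ is the number of cycles), the three constraints $\Psi^-_{ij}=\mathrm{Tr}[w\,V_{ij}]$ together with $\mathrm{Tr}\,w=1$ become four explicit real linear equations in the six real parameters $(\mu_e,\mu_{(12)},\mu_{(13)},\mu_{(23)},\mathrm{Re}\,\mu_{(123)},\mathrm{Im}\,\mu_{(123)})$. This leaves a two-real-dimensional gauge freedom carried by the 3-cycle coefficient over which non-negativity can be optimized; all obstructions to joinability must therefore come from the requirement $w\succeq 0$.

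The heart of the argument is to resolve non-negativity block-by-block. Schur-Weyl decomposes $(\mathbb{C}^d)^{\otimes 3}=\bigoplus_\lambda W_\lambda\otimes U_\lambda$ into three $S_3$-isotypic components: the trivial/symmetric block (multiplicity $\binom{d+2}{3}$, always present), the sign/antisymmetric block (multiplicity $\binom{d}{3}$, present iff $d\geq 3$), and the standard two-dimensional mixed-symmetry block (multiplicity $d(d^2-1)/3$, present for all $d\geq 2$). On each block, $w$ acts as $M_\lambda=\sum_\pi\mu_\pi\,\rho_\lambda(\pi)\otimes I_{U_\lambda}$, so $w\succeq 0$ iff every nonvanishing $M_\lambda$ is PSD. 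The trivial and sign blocks yield scalar inequalities; the standard block yields a single $2\times 2$ Hermitian PSD condition. Choosing a basis of $W_{\mathrm{std}}$ that diagonalizes the cyclic subgroup $\langle(123)\rangle$ puts $M_{\mathrm{std}}$ in a form whose off-diagonal entry is the discrete Fourier combination $\mu_{(12)}+\omega\mu_{(13)}+\omega^{-1}\mu_{(23)}$. Re-expressing the $\mu_{(ij)}$ through the linear map of the previous step converts this off-diagonal into a quantity proportional to $\Psi^-_{BC}+\omega\Psi^-_{AC}+\omega^2\Psi^-_{AB}$, exactly the $\omega$-structure of Eq.~(\ref{d1}).

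Finally, optimizing the three PSD conditions over the two-dimensional gauge freedom in $\mu_{(123)}$ should collapse the trivial and sign scalar conditions to $1\pm\overline{\Psi^-}\geq\tfrac{2}{3}|\Psi^-_{BC}+\omega\Psi^-_{AC}+\omega^2\Psi^-_{AB}|$ respectively---giving the bi-cone Eq.~(\ref{d1}) for $d\geq 3$---while the $2\times 2$ PSD of $M_{\mathrm{std}}$ is implied by their conjunction (AM-GM on diagonal versus off-diagonal entries). For $d=2$ the antisymmetric isotypic block is absent; the corresponding scalar constraint is deleted and one is left with the single cone Eq.~(\ref{d2b}), with the waist condition $\overline{\Psi^-}\geq 0$ appearing as the trace of the now-missing antisymmetric block---physically, the impossibility of embedding three mutually singlet-like qubit pairs into a 3-qubit state. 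The converse ``if'' direction is handled by reading off the extremal $\mu_\pi$ saturating the stated boundaries and verifying $w\succeq 0$ by construction. The main obstacle I anticipate is precisely this gauge optimization: showing that minimizing three coupled PSD conditions (two scalar, one matrix-valued) over the two-parameter family of 3-cycle coefficients produces exactly the clean bi-cone of the theorem, with the correct prefactor $\tfrac{2}{3}$ and no extraneous semialgebraic pieces.
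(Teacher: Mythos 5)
Your strategy is the paper's own: twirl the putative joining state with $U^{\otimes 3}$, restrict to the span of the $S_3$ permutation operators, decompose positivity block-by-block via Schur--Weyl (trivial, sign, and two-dimensional mixed-symmetry blocks, with the sign block present only for $d\geq 3$), observe that the off-diagonal of the mixed-symmetry block is proportional to $Z\equiv\Psi^{-}_{BC}+\omega\Psi^{-}_{AC}+\omega^{2}\Psi^{-}_{AB}$, and optimize over the two leftover real parameters carried by the 3-cycle coefficient. All of this matches the paper's Appendix~\ref{sec:joiningreptools}, where your three block conditions are exactly $r_+\geq 0$, $r_-\geq 0$, and the pair $r_0\geq 0$, $r_1^2+r_2^2+r_3^2\leq r_0^2$ of Eq.~(\ref{eq:nonnegconstraints}), and your identification of the off-diagonal with $Z$ is the computation $r_1^2+r_2^2=\tfrac{4}{9}|Z|^2$.

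The gap is in the final assembly, precisely the step you flag as your anticipated obstacle, and your guess for how it resolves is backwards. The trivial and sign blocks are \emph{scalars} in $\mu_e$, $\mu_{(12)}+\mu_{(13)}+\mu_{(23)}$ and $\mathrm{Re}\,\mu_{(123)}$; under the data constraints they depend only on $\overline{\Psi^{-}}$ and the free real 3-cycle parameter, never on $Z$, so they cannot ``collapse to'' $1\pm\overline{\Psi^{-}}\geq\tfrac{2}{3}|Z|$, and the $2\times 2$ condition is certainly not implied by them: take $\mu_{(12)}=-\mu_{(13)}=t$, $\mu_{(23)}=\mu_{(123)}=0$ and $\mu_e=\epsilon$ small, so both scalars equal $\epsilon>0$ while $M_{\mathrm{std}}$ has off-diagonal $t(1-\omega^{2})\neq 0$ and diagonal $\epsilon$, hence is not PSD. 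The correct logic is the reverse: the entire $\tfrac{2}{3}|Z|$ dependence comes from the mixed-symmetry block, whose off-diagonal is fixed by the data and whose trace is $r_0=1-r_+-r_-$; setting the imaginary 3-cycle part to zero (equal diagonals) is optimal and gives $1-r_+-r_-\geq\tfrac{2}{3}|Z|$. The scalar blocks then enter only through $r_\pm\geq 0$ with $r_+-r_-=\overline{\Psi^{-}}$, which cap $r_++r_-\geq|\overline{\Psi^{-}}|$ with equality attainable by zeroing the smaller of $r_\pm$, yielding the bi-cone of Eq.~(\ref{d1}). Relatedly, your $d=2$ account does not work as stated: merely deleting the sign-block inequality would enlarge, not shrink, the region, and ``the trace of the missing block'' yields nothing. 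What actually happens is that the vanishing of the antisymmetric Weyl module forces $r_-\equiv 0$, i.e.\ it removes a \emph{degree of freedom}, so $r_+=\overline{\Psi^{-}}\geq 0$ becomes the waist condition and only the single cone of Eq.~(\ref{d2b}) survives. With the roles of the blocks corrected (and the explicit constants, e.g.\ the factor relating the coefficient combination to $Z$, computed), your argument becomes the paper's proof; as written, the AM--GM implication step is false and would lose the $|Z|$ constraint altogether.
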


Similarly, if a tripartite state $w_{ABC}$ joins isotropic states $\rho_{AB}$ and $\rho_{AC}$, then the ``isotropic-twirled state'' $\tilde{w}_{ABC}$, given by
\begin{equation}
\tilde{w}_{ABC}=\int (U^*\otimes U\otimes U)\, w_{ABC} \,
(U^*\otimes U \otimes U)^{\dagger} d\mu(U),
\label{twirliso}
\end{equation}
is also a valid joining state. A clarification is, however, in order at this point: 
although we have been referring to the isotropic joinability scenario of interest as three-party isotropic state joining, this is somewhat of a misnomer because we effectively consider the pair 
$B$-$C$ to be in a Werner state, as evident from Eq. (\ref{twirliso}).
Compared to Eq. (\ref{permrepspan}), the relevant search space is now  
partially transposed relative to subsystem $A$, that is, 
consisting of density operators of the form
\begin{equation}
\label{eq:ptrep}
w=\sum_{\pi\in{S}_3}\mu_{\pi}V_{\pi}^{T_A}, \;\;\; w \in {\mathcal W}_{\text{iso}}.
\end{equation}
Our main result for three-party joinability of isotropic states is then contained in the following:

\begin{thm} 
\label{trioiso}
Two isotropic qudit states $\rho_{AB}$ and $\rho_{AC}$ and qudit Werner state $\rho_{BC}$ with parameters $\Phi^+_{AB}, \Phi^+_{AC}, \Psi^-_{BC}$ are joinable if and only if $(\Phi^+_{AB}, \Phi^+_{AC}, \Psi^-_{BC})$ lies within the cone described by
\begin{eqnarray}
\Phi^{+}_{AB}+\Phi^{+}_{AC}-\Psi^{-}_{BC} \leq d\,  ,
\label{d1iso} 
\end{eqnarray}
 \vspace*{-5mm}
\begin{eqnarray}
 1+\Phi^+_{AB} & + & \Phi^+_{AC} -  \Psi^-_{BC} \geq 
  \label{d1isoa}  \\
 \bigg| d(\Psi^-_{BC} & - &1) +  \sqrt{\frac{2d}{d-1}}(e^{i\theta}\Phi^+_{AB}+
e^{-i\theta}\Phi^+_{AC})\bigg|, \nonumber 
\end{eqnarray}
 \vspace*{-5mm}
$$ e^{\pm i\theta}= \pm i \sqrt{(d+1)/(2d)}+\sqrt{(d-1)/(2d)}, $$
or, for $d\geq 3$, within the convex hull of the above cone and the point $(0,0,-1)$.
\end{thm}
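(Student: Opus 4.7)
The plan is to mirror the strategy used for Theorem~\ref{triower}, adapted to the twisted $U^{*}\otimes U\otimes U$ symmetry. First, if any joining state $w_{ABC}$ exists then the isotropic-twirled state of Eq.~\eqref{twirliso} is also a valid joining state, since $\rho_{AB}$ and $\rho_{AC}$ are $U^{*}\otimes U$-invariant (isotropic) and $\rho_{BC}$ is $U\otimes U$-invariant (Werner), so the triple twirl preserves all three marginals. Applying a Schur--Weyl-type argument to this twisted action restricts the search for $w$ to the six-dimensional space ${\cal W}_{\text{iso}}=\mathrm{span}\{V_\pi^{T_A}\}_{\pi\in S_3}$, so I would parameterize $w=\sum_{\pi\in S_3}\mu_\pi V_\pi^{T_A}$ subject to the Hermiticity condition $\mu_\pi^{*}=\mu_{\pi^{-1}}$.

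Next I would translate normalization and the three marginal constraints into a linear system in the $\mu_\pi$, using $\mathrm{Tr}(V_\pi^{T_A}V_\sigma^{T_A})=\mathrm{Tr}(V_{\sigma\pi})=d^{c(\sigma\pi)}$, together with the observations that $V_{(AB)}^{T_A}=d\,\ket{\Phi^{+}_{AB}}\bra{\Phi^{+}_{AB}}\otimes\identity_{C}$ (and analogously for $(AC)$), while $V_{(BC)}^{T_A}=\identity_{A}\otimes V_{BC}$ is unaffected by the partial transpose on $A$. These four linear equations fix four of the six real degrees of freedom in $\{\mu_\pi\}$, leaving a two-real-parameter family of candidate joining operators for every marginal triple $(\Phi^{+}_{AB},\Phi^{+}_{AC},\Psi^{-}_{BC})$.

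The main technical obstacle is determining when this family contains a non-negative element. My plan is to decompose $(\mathbb{C}^d)^{\otimes 3}$ into isotypic components under the $U^{*}\otimes U\otimes U$ action, in which every $w\in{\cal W}_{\text{iso}}$ block-diagonalizes into blocks of size at most $2\times 2$. Positivity on the one-dimensional block that simultaneously feels both $\ket{\Phi^{+}_{AB}}$ and $\ket{\Phi^{+}_{AC}}$ should yield the affine inequality~\eqref{d1iso}, while positivity of the $2\times 2$ block coupling single-pair maximal entanglement to the $B$-$C$ swap sector should give the determinantal cone~\eqref{d1isoa}, whose off-diagonal phase $e^{\pm i\theta}$ traces back to the cross-overlap between $V_{(AB)}^{T_A}$ and $V_{(AC)}^{T_A}$ as encoded in the three-cycles $V_{(ABC)}^{T_A}$ and $V_{(ACB)}^{T_A}$.

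Finally, to capture the extension of the joinable region for $d\geq 3$, I would identify a block that is present only in higher dimension---namely, the component built on the totally antisymmetric subspace of $B$-$C$ correlated with $A$ through antisymmetric sectors that are trivial for $d=2$---and exhibit an explicit non-negative $w\in{\cal W}_{\text{iso}}$ realizing the vertex $(0,0,-1)$, for instance the isotropic twirl of a state supported on the fully antisymmetric subspace of $(\mathbb{C}^d)^{\otimes 3}$. Combined with convexity of the joinable set (noted after the joinability definition), this extends the cone to its convex hull with $(0,0,-1)$. Conversely, for $d=2$ this extra block is absent, so the necessary and sufficient region is the bare cone.
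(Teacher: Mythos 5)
Your proposal is correct and follows essentially the same route as the paper's proof: twirl to reduce to $\mathcal{W}_{\text{iso}}=\mathrm{span}\{V_\pi^{T_A}\}$, block-diagonalize via the commutant structure of the $U^*\otimes U\otimes U$ action (the paper uses the Eggeling--Werner basis $S_\pm, S_0,\dots,S_3$, i.e.\ exactly your $\mathbb{C}\oplus\mathbb{C}\oplus M_2$ decomposition, with the affine bound from a one-dimensional block via normalization and the cone from the $2\times2$ Bloch-ball condition), and obtain the $d\geq3$ vertex $(0,0,-1)$ from the component supported on the totally antisymmetric subspace together with convexity. The only caveat is that you leave the explicit evaluation of the block coefficients in terms of $(\Phi^+_{AB},\Phi^+_{AC},\Psi^-_{BC})$ and the optimization over the two residual free parameters as computations to be carried out, but these proceed exactly as sketched.
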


\noindent 
The results of Theorems \ref{triower} and \ref{trioiso} as well as of Sec. \ref{sec:classicaljoining} are pictorially summarized in Fig. \ref{fig:correlationcomparison}.

We now compare these quantum joinability limitations to the joinability limitations in place for classical analogue states. As described in Sec. \ref{sec:classicaljoining}, the non-negativity of $p(A,B,C\, \text{agree})$ and $p(A \,\text{disagree})$ is enforced by the two 
inequalities $\alpha_{AB}+\alpha_{AC}+\alpha_{BC}\geq 1$ and $-\alpha_{AB}-\alpha_{AC}+\alpha_{BC}\geq 1$, respectively. We expect the same requirement to be enforced by the analogue quantum-measurement statistics. For $d=2$, the bases of the Werner and isotropic joinability-limitation cones are determined by  $\Psi^{-}_{AB}+\Psi^{-}_{AC}+\Psi^{-}_{BC}\geq0$ and $\Phi^{+}_{AB}+\Phi^{+}_{AC}-\Psi^{-}_{BC} \leq 2$, respectively. Writing down each of these parameters in terms of the appropriate probability of agreement $\alpha$, as defined in Eqs. (\ref{agreewer}) and (\ref{agreeiso}), we obtain $\alpha_{AB}+\alpha_{AC}+\alpha_{BC}\geq 1$ and $-\alpha_{AB}-\alpha_{AC}+\alpha_{BC}\geq 1$. Hence, for qubits, part of the quantum joining limitations are indeed derived from the classical joining limitations. This is also illustrated in Fig. \ref{fig:correlationcomparison}(left). Of course, one would not expect the quantum scenario to exhibit violations of the classical joinability restrictions; still, it is 
interesting that states which exhibit manifestly non-classical correlations may nonetheless saturate bounds obtained from purely classical joining limitations.

For $d\geq 3$, the only classical boundary which plays a role is the one which bounds the base of the isotropic joinability-limitation cone: $\Phi^{+}_{AB}+\Phi^{+}_{AC}-\Psi^{-}_{BC} \leq d$. Again, in terms of the agreement parameters, this is (just as for qubits) 
$-\alpha_{AB}-\alpha_{AC}+\alpha_{BC}\geq 1$. In the Werner case, the quantum joinability boundary is not clearly delineated by the classical joining limitations. We can nevertheless make the following observation. By the non-negativity of Werner states, the three-party joinability region in Fig. \ref{fig:correlationcomparison}(right) is required to lie within a cube of side-length 
${2}/(d+1)$ with one corner at $(0,0,0)$. Consider the set of cubes obtained by rotating from this initial cube about an axis through $(0,0,0)$ and $(2/(d+1),2/(d+1),2/(d+1)$. It is a curious fact that the exact quantum Werner joinability region (the bi-cone) is precisely the {\em intersection} of all such cubes.

Another interesting feature is that there exist {\em trios of unentangled Werner states 
which are not joinable}. For example, the point 
$(\Psi^{-}_{AB},\Psi^{-}_{AC},\Psi^{-}_{BC})=(1, 1, 0)$
corresponds to three separable Werner states that are not joinable. This point is of
particular interest because its classical analogue \emph{is}
joinable. Translating $(1,1,0)$ into the agreement-probability
coordinates, $(\alpha_{AB},\alpha_{AC},\alpha_{BC})=(2/3,2/3,1/3)$, we
see that this point is actually on the classical joining limitation
border. Thus, these three separable, correlated
states are not joinable for purely quantum mechanical reasons.
Note that the point $(\alpha_{AB},\alpha_{AC},\alpha_{BC})=(2/3,2/3,1/3)$ 
does correspond to a joinable trio of pairs in the isotropic three-party joining scenario: this point lies at the center of the face of the isotropic joinability cone, as seen in Fig. \ref{fig:correlationcomparison}(left). The same fact holds for $(2/3,1/3,2/3)$ or  $(1/3,2/3,2/3)$ when the Werner state pair in the isotropic joining scenario describes $A$-$C$ or $A$-$B$, respectively; 
in both cases, we would have obtained yet another cone in Fig. \ref{fig:correlationcomparison} that sits on a face of the classical tetrahedron boundary.

Having determined the joinable trios of both Werner and isotropic states, we are now in a position to also answer the question of what {\em pairs} $A$-$B$ and $A$-$C$ of states are joinable with one another. In the Werner state case, this is obtained by projecting the Werner joinability bicone down to the $\Psi^{-}_{AB}$-$\Psi^{-}_{AC}$ plane, resulting in the following:

\begin{coro}
Two pairs of qudit Werner states with parameters $\Psi^{-}_{AB}$ and
$\Psi^{-}_{AC}$ are joinable if and only if
$\Psi^{-}_{AB},\Psi^{-}_{AC}\geq-\frac{1}{2}$, or if the parameters satisfy
\begin{equation}
\label{eq:ellipse}
 (\Psi^{-}_{AB}+\Psi^{-}_{AC})^2+\frac{1}{3}(\Psi^{-}_{AB}-\Psi^{-}_{AC})^2 \leq 1,
\end{equation}
or additionally, in the case $d\geq3$, if
$\Psi^{-}_{AB},\Psi^{-}_{AC}\leq\frac{1}{2}$.
\end{coro}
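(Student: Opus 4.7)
The plan is to project the three-party joinable region from Theorem~\ref{triower} onto the $(\Psi^-_{AB},\Psi^-_{AC})$-plane: writing $x=\Psi^-_{AB}$, $y=\Psi^-_{AC}$, $z=\Psi^-_{BC}$, the pair $(x,y)$ is joinable iff there exists $z$ such that $(x,y,z)$ lies in the joinable region of Theorem~\ref{triower}. Consistency of the overlapping marginal $\rho_A=\identity/d$ is automatic for Werner states, so no additional constraint enters.

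Using $|z+\omega y+\omega^2 x|^2 = x^2+y^2+z^2-xy-xz-yz$, I square the cone inequality and, setting $s=\overline{\Psi^-}=(x+y+z)/3$, obtain a quadratic in $s$ (equivalently in $z$). For the cone with vertex $(1,1,1)$ the discriminant simplifies to a positive multiple of $(1-x)(1-y)$, so the admissible $s$-values form a real interval $[s_-,s_+]$; for the cone with vertex $(-1,-1,-1)$ (present only for $d\geq 3$), the analogue is $(1+x)(1+y)\geq 0$. The bi-cone itself sits inside $[-1,1]^3$ (its equatorial cross-section at $s=0$ is the circle of radius $\sqrt{3/2}$ tangent to the box), so the Werner constraint $z\in[-1,1]$ imposes no further restriction on the projection.

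For the $d=2$ cone one additionally has $s\geq 0$, and $s_-\leq 1$ is automatic because $x+y\leq 2$. Existence of a valid $s$ therefore reduces to $s_+\geq 0$. When $x+y\geq 1/2$ this holds automatically; otherwise, squaring yields $x^2+xy+y^2\leq 3/4$, equivalent to the ellipse $(x+y)^2+\tfrac{1}{3}(x-y)^2\leq 1$ of the corollary. Next, I would argue that within the box $[-1,1]^2$ the disjunction ``$x+y\geq 1/2$ or ellipse'' coincides with ``$x,y\geq -1/2$ or ellipse'': the inclusion $\{x+y\geq 1/2\}\subseteq\{x,y\geq -1/2\}$ is immediate in the box (since $x\leq 1$ forces $y\geq -1/2$), and the triangle $\{x,y\geq -1/2,\,x+y\leq 1/2\}$ sits inside the ellipse because its three vertices $(-1/2,-1/2)$, $(1,-1/2)$, $(-1/2,1)$ all saturate $x^2+xy+y^2=3/4$ and the ellipse is convex.

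For $d\geq 3$ the bi-cone is the union of the two single cones (meeting at $s=0$), so its projection is the union of the two projections. Under the symmetry $(x,y,z)\mapsto(-x,-y,-z)$ that swaps the cones, the lower-cone projection is the mirror image ``$x,y\leq 1/2$ or ellipse'', and adjoining it to the upper-cone projection produces the three disjunctive conditions stated in the corollary. The main subtle step is the translation from the algebraic output $s_+\geq 0$ to the clean geometric ``$x,y\geq -1/2$'' form; everything else is elementary manipulation of the cone inequality of Theorem~\ref{triower}.
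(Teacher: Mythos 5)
Your proposal is correct, and its overall strategy coincides with the paper's: both reduce the corollary to projecting the region of Theorem III.1 onto the $(\Psi^{-}_{AB},\Psi^{-}_{AC})$-plane. The execution differs, though. You eliminate $\Psi^{-}_{BC}$ algebraically: squaring the cone inequality gives a quadratic in $z$ whose discriminant is $4(1-x)(1-y)$ (your computation checks out), existence for $d=2$ reduces to $s_+\geq 0$, i.e.\ to ``$x+y\geq 1/2$ or the ellipse,'' and you then reconcile this with the quadrant form ``$x,y\geq -1/2$'' via the convexity of the ellipse and the vertices $(-1/2,-1/2)$, $(1,-1/2)$, $(-1/2,1)$, finally obtaining the $d\geq 3$ statement from the decomposition of the bi-cone into the two half-cones and the symmetry $(x,y,z)\mapsto(-x,-y,-z)$. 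The paper instead argues geometrically: the rim of the cone (the base at $\overline{\Psi^{-}}=0$, with $\Psi^{-}_{BC}=-\Psi^{-}_{AB}-\Psi^{-}_{AC}$) projects to exactly your ellipse, the projected vertices $(1,1)$ (and $(-1,-1)$ for $d\geq3$) are adjoined by convexity of the set of joinable pairs, and the ``only if'' direction is obtained by identifying the projection of a cone with the convex hull of the projected base and vertex, the tangent lines from $(1,1)$ touching the ellipse at $(-1/2,1)$ and $(1,-1/2)$ (and their mirrors for the bi-cone). Your route buys a fully explicit, self-contained proof of the ``only if'' part (the paper's viewed-from-infinity tangency argument is somewhat informal), at the cost of the extra bookkeeping translating the half-plane description into the quadrant description; the paper's route is shorter and exhibits the tangency points directly. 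One small step you should make explicit when writing this up: squaring $1-s\geq \tfrac{2}{3}\left|\cdot\right|$ could in principle admit spurious solutions with $s>1$, so you need that the whole interval $[s_-,s_+]$ lies in $s\leq 1$; this follows immediately from $\sqrt{(1-x)(1-y)}\leq (1-x)+(1-y)$ for $x,y\leq 1$, so it is a one-line fix rather than a gap. Your observation that the bi-cone lies inside the cube $[-1,1]^3$, so the constraint $\Psi^{-}_{BC}\in[-1,1]$ is automatic, is also a legitimate (if tacit in the paper) point.
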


For isotropic states, we may similarly project the cone of Eq. (\ref{d1isoa}) onto the 
$\Phi^{+}_{AB}$-$\Phi^{+}_{AC}$ plane to obtain the 1-2 joining boundary. This yields the following:

\begin{coro}
\label{isoCoro}
Two pairs of qudit isotropic states with parameters $\Phi^{+}_{AB}$ and
$\Phi^{+}_{AC}$ are joinable if and only if they lie within the convex hull of the ellipse
\begin{equation}
\label{eq:ellipseiso}
 \frac{(\Phi^{+}_{AB}/d+\Phi^{+}_{AC}/d-1)^2}{(1/d^2)} +
 \frac{(\Phi^{+}_{AB}/d-\Phi^{+}_{AC}/d)^2}{ (d^2-1)/d^2 } = 1,
\end{equation}
and the point $(\Phi^{+}_{AB},\Phi^{+}_{AC})=(0,0)$.
\end{coro}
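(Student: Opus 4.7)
The plan is to obtain Corollary~\ref{isoCoro} by projecting the three-party joinability region of Theorem~\ref{trioiso} onto the $\Phi^+_{AB}$-$\Phi^+_{AC}$ plane. This is justified because, by the symmetrization argument leading to Theorem~\ref{trioiso}, a pair $(\rho_{AB},\rho_{AC})$ of isotropic states is joinable precisely when there exists some $\Psi^-_{BC}$ making the triple $(\Phi^+_{AB},\Phi^+_{AC},\Psi^-_{BC})$ fall in the 3D joinable region. The set of 1-2 joinable pairs is therefore the image of the 3D region under the coordinate projection $(\Phi^+_{AB},\Phi^+_{AC},\Psi^-_{BC})\mapsto(\Phi^+_{AB},\Phi^+_{AC})$, and the proof reduces to a convex-geometric computation of this shadow.

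I would exploit the convex structure as follows. The 3D joinable region is the intersection of the cone defined by (\ref{d1isoa}) with the half-space (\ref{d1iso}); for $d\geq 3$ one further takes the convex hull with the point $(0,0,-1)$. The cone (\ref{d1isoa}) has its apex where both sides vanish simultaneously, which a short calculation places at $(\Phi^+_{AB},\Phi^+_{AC},\Psi^-_{BC})=(0,0,1)$. The hyperplane (\ref{d1iso}) cuts this cone's lateral surface in a (tilted) ellipse, its ``base''. The truncated cone is thus the convex hull of its apex and this base ellipse, so its projection along the $\Psi^-_{BC}$ axis is exactly the convex hull of the projected apex $(0,0)$ and the projected base ellipse in the $\Phi^+_{AB}$-$\Phi^+_{AC}$ plane. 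For $d\geq 3$ the additional extreme point $(0,0,-1)$ likewise projects to $(0,0)$, contributing nothing new, which is why a single characterization works uniformly for all $d\geq 2$.

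The final step is to verify that the projected base is precisely (\ref{eq:ellipseiso}). Substituting $\Psi^-_{BC}=\Phi^+_{AB}+\Phi^+_{AC}-d$ into (\ref{d1isoa}) turns the left-hand side into $d+1$; meanwhile, inserting $\cos\theta=\sqrt{(d-1)/(2d)}$ and $\sin\theta=\sqrt{(d+1)/(2d)}$, the complex expression inside the modulus collapses to $(d+1)(u-d)+iv\sqrt{(d+1)/(d-1)}$, where $u=\Phi^+_{AB}+\Phi^+_{AC}$ and $v=\Phi^+_{AB}-\Phi^+_{AC}$. Squaring and dividing through by $(d+1)^2$ then produces (\ref{eq:ellipseiso}) directly. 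The main obstacle is this algebraic reduction: it works only because $\theta$ has been tuned so that $\sqrt{2d/(d-1)}\cos\theta=1$, which is precisely what makes the cumbersome complex-modulus expression collapse into an ellipse aligned with the natural $u,v$ axes.
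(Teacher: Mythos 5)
Your proposal is correct and follows essentially the same route as the paper: projecting the cone of Theorem III.2 along the $\Psi^{-}_{BC}$ axis, identifying the base ellipse via the substitution $\Psi^{-}_{BC}=\Phi^{+}_{AB}+\Phi^{+}_{AC}-d$, and noting that the apex (and, for $d\geq 3$, the extra extreme point $(0,0,-1)$) projects to $(0,0)$. Your algebraic reduction of the modulus to $|(\Phi^{+}_{AB}+\Phi^{+}_{AC}-d)+i(\Phi^{+}_{AB}-\Phi^{+}_{AC})/\sqrt{d^2-1}|=1$ matches the paper's computation exactly.
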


\begin{figure}[t]
\subfigure[]{
\includegraphics[width=.46\columnwidth,viewport=0 0 2300 2200,clip]{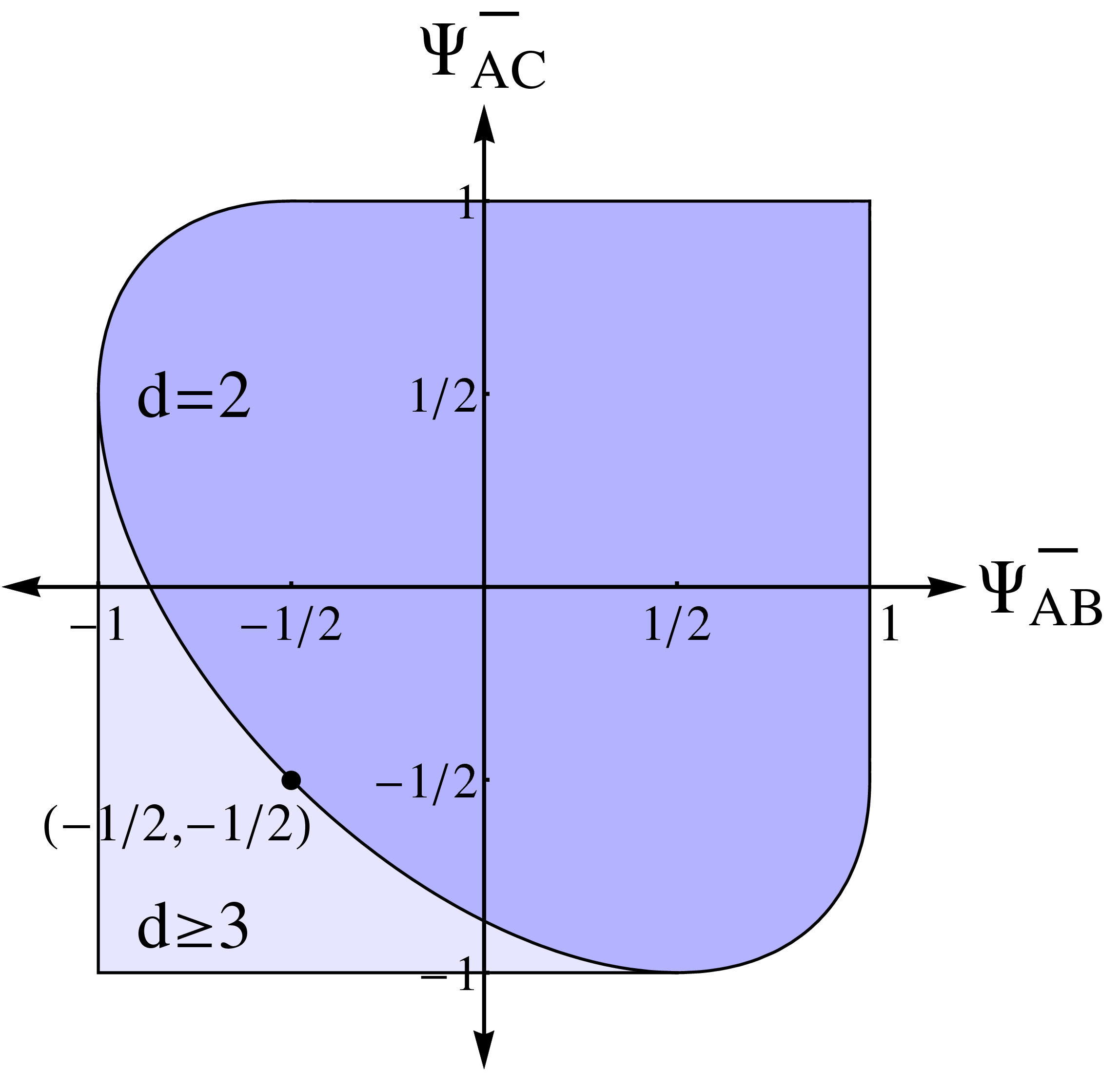}
\label{fig:wernerpairjoinability}}
\subfigure[]{
\includegraphics[width=.46\columnwidth,viewport=0 0 1200 1200,clip]{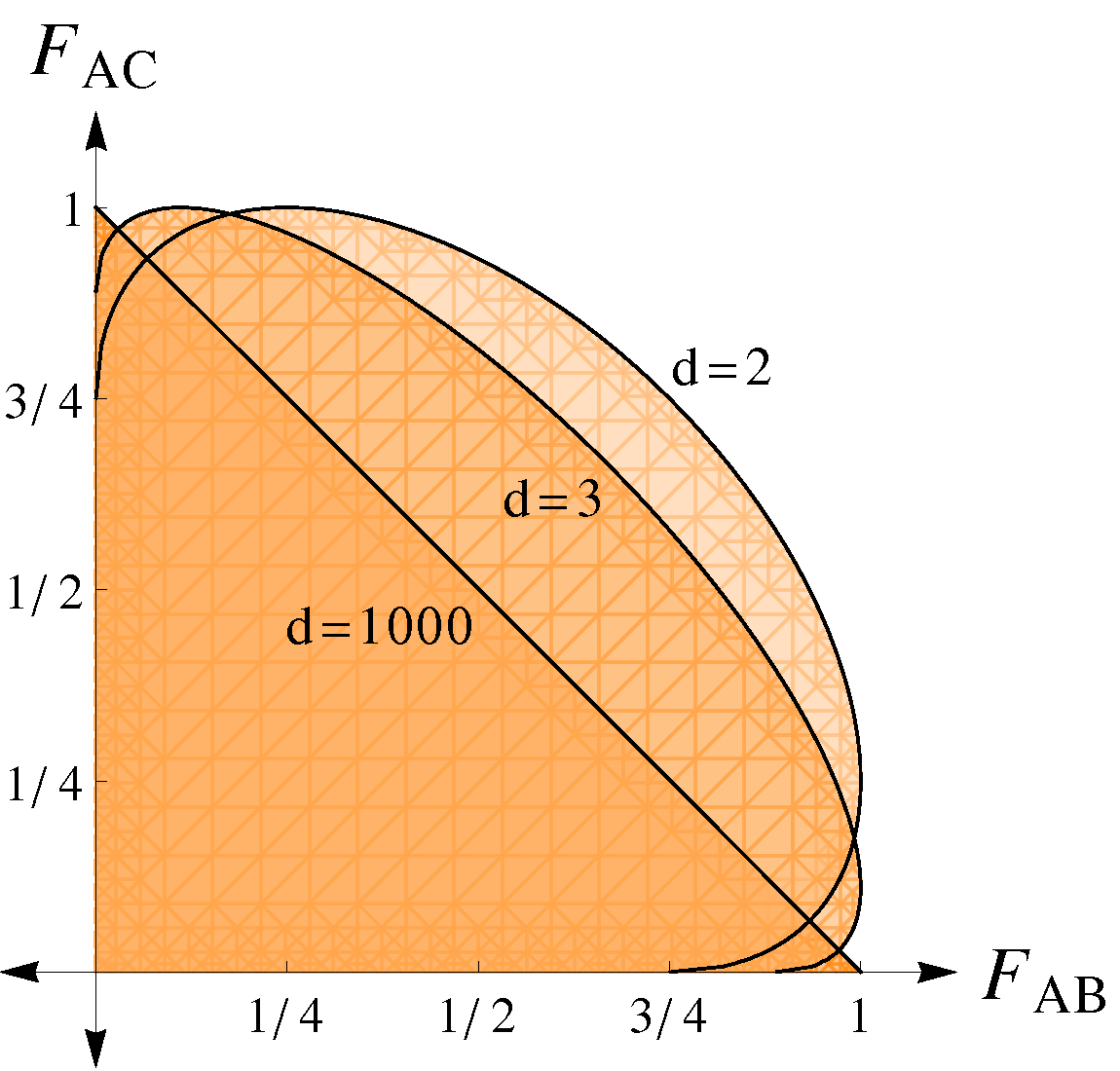}
\label{fig:isotropicpairjoinability}}
\vspace*{-1mm}
\caption{(Color online) 
Two-party joinability limitations for Werner and isotropic qudit states. 
(a) Werner states. The shaded region corresponds to joinable Werner pairs, with the lighter 
region being valid only for $d\geq3$. The rounded boundary is the ellipse determined by Eq. (\ref{eq:ellipse}). This explicitly shows the existence of pairs of entangled Werner states that are within the circular boundary determined by the weak CKW inequality, Eq. (\ref{CKWweak}), 
yet are {\em not} joinable. 
(b) Isotropic states. The three regions correspond to the joinable pairs of isotropic states for $d=2$, $d=3$ and $d=1000$. This shows how, in the limit of large $d$, the trade-off in isotropic state parameters becomes linear, consistent with known results on $d$-dimensional quantum cloning 
\cite{Cerf2000}. }
\end{figure}

Lastly, by a similar projection of the isotropic cone given by Eqs. (\ref{d1iso})-(\ref{d1isoa}), we may explicitly characterize 
the Werner-isotropic hybrid 1-2 joining boundary:

\begin{coro}
\label{projectioncorohybrid}
An isotropic state with parameter $\Phi^{+}_{AB}$ and a Werner state with parameter 
$\Psi^{-}_{BC}$ are joinable if and only if they lie within the convex hull of the ellipse
\begin{equation}
\label{eq:ellipsehybrid}
 \frac{(\Phi^{+}_{AB}/d+\Phi^{+}_{AC}/d-1)^2}{(1/d^2)}+
 \frac{(\Phi^{+}_{AB}/d-\Phi^{+}_{AC}/d)^2}{(d^2-1)/d^2} = 1,
\end{equation}
and the point $(\Phi^{+}_{AB},\Psi^{-}_{BC})=(0,1)$, and, for $d\geq 3$, within 
the additional convex hull introduced by the point $(\Phi^{+}_{AB},\Psi^{-}_{BC})=(0,1)$.
\end{coro}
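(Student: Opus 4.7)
The plan is to derive the hybrid Werner--isotropic 1-2 joining boundary by projecting the three-party joinability region of Theorem~\ref{trioiso} onto the $(\Phi^{+}_{AB},\Psi^{-}_{BC})$-plane. By the convex structure of the joinability set noted in Sec.~\ref{sec:basics}, a pair $(\Phi^{+}_{AB},\Psi^{-}_{BC})$ is joinable precisely when there exists some value of $\Phi^{+}_{AC}$ for which the triple $(\Phi^{+}_{AB},\Phi^{+}_{AC},\Psi^{-}_{BC})$ lies inside the cone of Theorem~\ref{trioiso}. Geometrically, I want the shadow of that three-dimensional convex body cast along the $\Phi^{+}_{AC}$-axis, together with the shadow of the isolated extreme point $(0,0,-1)$ that must be adjoined when $d\geq 3$.

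First, I would fix $\Phi^{+}_{AB}$ and $\Psi^{-}_{BC}$ and regard (\ref{d1iso}) and (\ref{d1isoa}) as constraints on the single variable $\Phi^{+}_{AC}$. Inequality (\ref{d1iso}) is linear and yields an upper bound $\Phi^{+}_{AC}\leq d-\Phi^{+}_{AB}+\Psi^{-}_{BC}$. Squaring the modulus in (\ref{d1isoa}) and substituting $e^{\pm i\theta}=\pm i\sqrt{(d+1)/(2d)}+\sqrt{(d-1)/(2d)}$ turns (\ref{d1isoa}) into a quadratic inequality in $\Phi^{+}_{AC}$ whose coefficients are linear in $\Phi^{+}_{AB}$ and $\Psi^{-}_{BC}$. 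Feasibility in $\Phi^{+}_{AC}$ then reduces to two conditions: the quadratic must admit a real solution in $\Phi^{+}_{AC}$, and the corresponding solution interval must intersect the half-line cut out by (\ref{d1iso}).

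Next, I would compute the discriminant of the quadratic to identify the silhouette of the cone in the $(\Phi^{+}_{AB},\Psi^{-}_{BC})$-plane. This is the same type of elimination used to produce Corollary~\ref{isoCoro}, with $\Psi^{-}_{BC}$ now playing the role previously played by the second isotropic parameter, so I expect the discriminant locus to be an ellipse of the form stated in (\ref{eq:ellipsehybrid}). Tangencies of this silhouette to the physical range of $\Psi^{-}_{BC}$ identify the extremal pair that must be convex-hulled with the ellipse to complete the $d=2$ boundary. For $d\geq 3$, the isolated extreme point $(\Phi^{+}_{AB},\Phi^{+}_{AC},\Psi^{-}_{BC})=(0,0,-1)$ of Theorem~\ref{trioiso} projects to an extra vertex in the plane, and I would take the convex hull of this vertex with the ellipse to obtain the enlarged region. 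Necessity follows immediately from Theorem~\ref{trioiso}, while sufficiency is witnessed by exhibiting, at each boundary point of the projected region, the specific $\Phi^{+}_{AC}$ on the silhouette that realizes a valid tripartite joining.

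The main obstacle is the algebra of the squared modulus in (\ref{d1isoa}): the cross-term $e^{i\theta}\Phi^{+}_{AB}+e^{-i\theta}\Phi^{+}_{AC}$ generates both a real symmetric combination and an imaginary antisymmetric combination, and these must be reorganized so that $\Phi^{+}_{AC}$ appears in a single isolated quadratic while the residual terms recombine into the compact ellipse of the form (\ref{eq:ellipsehybrid}). Carrying out this reorganization cleanly, and verifying that the discriminant condition is genuinely the binding constraint (rather than (\ref{d1iso}), which I expect to be slack everywhere in the interior of the projected ellipse), is the technical core of the argument.
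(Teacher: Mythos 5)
Your overall strategy is the same as the paper's: the paper proves this corollary by a one-line appeal to the projection arguments of Corollaries III.3 and \ref{isoCoro}, i.e.\ it projects the cone of Theorem \ref{trioiso} along the $\Phi^{+}_{AC}$ direction and takes the convex hull of the projected rim, the projected apex $(\Phi^{+}_{AB},\Psi^{-}_{BC})=(0,1)$, and, for $d\geq 3$, the projection $(0,-1)$ of the extra extreme point $(0,0,-1)$ --- your identification of that last point is indeed what Theorem \ref{trioiso} dictates. Recasting the projection as feasibility in $\Phi^{+}_{AC}$ of a concave quadratic together with the linear constraint (\ref{d1iso}) is an equivalent bookkeeping of the same computation, and your necessity/sufficiency logic (shadow of a convex set, plus convexity to fill the interior) is sound.

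The concrete flaw is in where you expect the ellipse to come from. Both $1+\Phi^{+}_{AB}+\Phi^{+}_{AC}-\Psi^{-}_{BC}$ and the argument of the modulus in (\ref{d1isoa}) are affine and vanish simultaneously at $(0,0,1)$, so the surface where (\ref{d1isoa}) is tight is a quadric cone with apex there; the discriminant of your quadratic in $\Phi^{+}_{AC}$ therefore vanishes on a \emph{degenerate} conic, namely the two silhouette lines through the projected apex $(0,1)$, not on an ellipse. For example, along the line $\Psi^{-}_{BC}=1$ the discriminant vanishes identically: at $(\Phi^{+}_{AB},\Psi^{-}_{BC})=(d,1)$ the quadratic has the double root $\Phi^{+}_{AC}=d$, yet that root gives $\Phi^{+}_{AB}+\Phi^{+}_{AC}-\Psi^{-}_{BC}=2d-1>d$, so the point is excluded only by your second (interval-meets-half-line) condition --- i.e.\ by (\ref{d1iso}), which you expected to be slack. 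The ellipse is instead the projection of the cone's rim, where (\ref{d1iso}) and (\ref{d1isoa}) are tight simultaneously: setting $\Phi^{+}_{AC}=d+\Psi^{-}_{BC}-\Phi^{+}_{AB}$ in the equality case of (\ref{d1isoa}), exactly as in the proof of Corollary \ref{isoCoro}, yields $(\Psi^{-}_{BC})^{2}+(2\Phi^{+}_{AB}-d-\Psi^{-}_{BC})^{2}/(d^{2}-1)=1$ in the $(\Phi^{+}_{AB},\Psi^{-}_{BC})$ variables (the printed Eq. (\ref{eq:ellipsehybrid}) carries over the variables of Corollary \ref{isoCoro}). Since $(0,1)$ lies outside this ellipse, the shadow boundary consists of an elliptical arc on which (\ref{d1iso}) is binding, joined to two straight segments from $(0,1)$ on which the discriminant is binding; so the ``technical core'' you set yourself --- showing the discriminant is the binding constraint and (\ref{d1iso}) slack --- fails as stated. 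The repair stays within your own setup: keep both feasibility conditions, track where each becomes tight, and you recover the convex hull of the ellipse with $(0,1)$, together with the further hull with $(0,-1)$ when $d\geq3$.
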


The above results give the exact quantum-mechanical rules for the
two-pair joinability of Werner and isotropic states, as pictorially summarized in
see Figs. \ref{fig:wernerpairjoinability} and \ref{fig:isotropicpairjoinability}. 
A number of interesting features are worth noticing. First, by restricting to the 
line where $\Psi^{-}_{AB}=\Psi^{-}_{AC}$, we can conclude that qubit Werner states 
are $1$-$2$ sharable if and only if $\Psi^{-}\geq -1/2$, whereas for
$d\geq 3$, all qudit Werner states are $1$-$2$ sharable. As we shall
see, this agrees with the more general analysis of Sec. \ref{subsec:Wsharability}.

Second, some insight into the role of entanglement in limiting
joinability may be gained. In the first quadrant of Fig.
\ref{fig:wernerpairjoinability}, where neither pair is entangled, it
is no surprise that no joinability restrictions apply. Likewise, it
is not surprising to see that, in the third quadrant where both pairs
are entangled, there is a trade-off between the amount of entanglement
allowed between one pair and that of the other. But, in the second and
fourth quadrants we observe a more interesting behavior. Namely, these
quadrants show that there is also a {\em trade-off between the amount
of classical correlation in one pair and the amount of entanglement in
the other pair}. In fact, the smoothness of the boundary curve as it
crosses from one of the pairs being entangled to unentangled suggests
that, at least in this case, entanglement is not the correct figure of
merit in diagnosing joinability limitations.

\subsection{Isotropic joinability results from quantum cloning}
\label{subsec:cloning}

Interestingly, the above results for 1-2 joinability of isotropic states can also be 
obtained by drawing upon existing results for asymmetric quantum cloning, see e.g. \cite{Cerf2000,Iblisdir2005} for 1-2 and 1-3 asymmetric cloning and \cite{Ramanathan2009, Jiang2010,Kay2012} for 1-$n$ asymmetric cloning. One approach to obtaining the optimal asymmetric cloning machine is to exploit the Choi isomorphism \cite{Choi1975} to translate the construction of the optimal cloning map to the construction of an optimal operator (or a ``telemapping state''). This connection is made fairly clear in \cite{Ramanathan2009,Kay2012}; in particular, ``singlet monogamy'' refers to the trade-off in fidelities of the optimal 1-$n$ asymmetric cloning machine or, equivalently, to the trade-off in singlet fractions for a $(1+n)$ qudit state. We describe 
how the approach to solving the optimal 1-$n$ asymmetric cloning problem may be rephrased to solve the 1-$n$ joinability problem for isotropic states.

The state $\ket{\Psi}$ described in Eq. (4) of \cite{Ramanathan2009} is a 1-$n$ joining state for $n$ isotropic states characterized by singlet fractions $F_{0,j}$ (related to the isotropic state parameter by $F_{0,j}=\Phi^{+}_{0,j}/d$, as noted). The bounds on the singlet fractions are determined by the normalization condition of $\ket{\Psi}$, together with the requirement that 
$\ket{\Psi}$ be an eigenstate of a certain operator $R$ defined in Eq. (3) of \cite{Ramanathan2009}. That $\ket{\Psi}$ is an isotropic joining state is readily seen from its construction, and that it may optimize the singlet fractions (hence delineate the boundary in the $\{F_{0,j}\}$ space) is proven in \cite{Kay2012}.
Our contribution here is the observation that this result provides the solution to the 1-$n$ joinability of isotropic states. The equivalence is established by the fact that optimality is preserved in either direction by the Choi isomorphism. 

Quantitatively, the boundary for 1-$n$ optimal asymmetric cloning, is given by Eq. (6) in \cite{Ramanathan2009} in terms of singlet fractions. Specializing to the 1-2 joining case and rewriting in terms of $\Phi^{+}$, we have 
$$ \Phi^{+}_{AB}+\Phi^{+}_{AC}\leq (d-1) + 
\frac{1}{n+d-1}\Big(\sqrt{\Phi^+_{AB}}+\sqrt{\Phi^{+}_{AC}}\Big)^2. $$
\noindent
As one may verify, this is equivalent to the result of Corollary \ref{isoCoro}.
In light of this connection, the fact that, as $d$ increases, the isotropic-joinability cone of Fig. \ref{fig:correlationcomparison}(right) 
becomes flattened down to the $\alpha_{AB}$-$\alpha_{AC}$ plane is directly related to 
the \emph{linear trade-off} in the isotropic state parameters for the semi-classical limit 
$d\rightarrow\infty$, as discussed in \cite{Cerf2000}. Within our three-party joining picture, 
we can give a partial explanation of this fact: namely, it is a consequence of the classical joining boundary in tandem with the upper limit on the agreement parameter $\alpha_{BC}$ for the Werner state on $B$-$C$: $\alpha_{BC}\leq 2/(d+1)$. In the limit of $d\rightarrow\infty$, these two boundaries conspire to limit the ($A$-$B$)-($A$-$C$) isotropic state joining boundary to a triangle, as explicitly seen in Fig. \ref{fig:isotropicpairjoinability}(right).

For the general 1-$n$ isotropic joining scenario, the quantum-cloning results additionally 
imply the following:
\begin{thm} 
A list of $n$ isotropic states characterized by parameters $\Phi_{0,1}^+,\ldots,\Phi_{0,j}^+$ 
is 1-$n$ joinable if and only if the (positive-valued) parameters satisfy
\begin{equation}
\label{eq:1njoiningiso}
\sum_{j=1}^n \Phi^{+}_{0,j}\leq (d-1) + \frac{1}{n+d-1}\bigg(\sum_{j=1}^n \sqrt{\Phi^+_{0,j}}\bigg)^2.
\end{equation}
\label{thm:1njoiningiso}
\end{thm}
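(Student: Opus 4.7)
The plan is to reduce the $1$-$n$ isotropic joining problem to the optimal $1$-to-$n$ asymmetric cloning problem via the Choi--Jamio{\l}kowski isomorphism, and then to invoke the tight fidelity bound proved in \cite{Ramanathan2009,Kay2012}. The argument proceeds in three stages, following and generalizing the approach sketched for the $1$-$2$ case in Sec.~\ref{subsec:cloning}.

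First, I would twirl any candidate joining state $w$ on $\hilbert_0\otimes\hilbert_1\otimes\cdots\otimes\hilbert_n$ by the collective action $U^{*}\otimes U^{\otimes n}$ integrated against the Haar measure on $\mathfrak{U}(d)$. As in the proof of Thm.~\ref{trioiso}, this projection fixes each pairwise reduction $\rho_{0,j}$ to lie in the isotropic family, while preserving the singlet fractions $F_{0,j}=\Phi^{+}_{0,j}/d=\bra{\Phi^{+}}\rho_{0,j}\ket{\Phi^{+}}$, so the existence of a joining state is unaffected. A key by-product of this symmetry is that $\mathrm{Tr}_{1,\ldots,n}\,w=\identity_0/d$, which will be essential in what follows.

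Second, I would apply the Choi isomorphism with system $0$ designated as the input register. Any positive operator $w$ with $\mathrm{Tr}_{1,\ldots,n}\,w=\identity_0/d$ corresponds bijectively to a CPTP map $\map:\mathcal{D}(\hilbert_0)\to\mathcal{D}(\hilbert_1\otimes\cdots\otimes\hilbert_n)$, that is, a $1$-to-$n$ cloning channel. Under this dictionary, each singlet fraction $F_{0,j}$ equals exactly the cloning fidelity of the $j$-th output of $\map$ (obtained by composing $\map$ with the partial trace over all other output registers). Consequently, the set of isotropic-joinable tuples $(\Phi^{+}_{0,1},\ldots,\Phi^{+}_{0,n})$ coincides \emph{exactly} with the set of achievable fidelity tuples for $1$-to-$n$ asymmetric cloning in dimension $d$.

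Third, I would translate the optimal-cloning bound of \cite{Ramanathan2009} (their Eq.~(6)) into the isotropic-parameter variables via $\Phi^{+}_{0,j}=d F_{0,j}$; the stated inequality in Eq.~(\ref{eq:1njoiningiso}) then drops out directly. Necessity is immediate from the cloning bound, while sufficiency follows from the explicit saturating state $\ket{\Psi}$ given in Eq.~(4) of \cite{Ramanathan2009} combined with convexity: mixing any saturating joining state with the maximally mixed state (which is trivially $1$-$n$ joinable) fills out the entire admissible interior.

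The main obstacle will be ensuring the correspondence is tight in \emph{both} directions. Every Choi state of a valid cloning channel clearly yields a valid joining state; however, we also need every isotropic joining state to produce a legitimate CPTP channel on system $0$. This is precisely where the twirling step is indispensable, since isotropic symmetry automatically enforces the required marginal $\identity_0/d$ on the input register. The genuinely non-trivial input --- that the full region cut out by Eq.~(\ref{eq:1njoiningiso}) is actually attainable --- is then imported from Kay's optimality argument in \cite{Kay2012}; my contribution is only to rephrase it as a statement about isotropic joinability rather than about cloning fidelities.
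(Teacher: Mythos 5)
Your proposal is correct and follows essentially the same route as the paper: twirl to reduce to the $U^{*}\otimes U^{\otimes n}$-invariant sector, identify the joining state with the Choi operator of a $1$-to-$n$ asymmetric cloning channel (with the singlet fractions $F_{0,j}=\Phi^{+}_{0,j}/d$ playing the role of cloning fidelities), and import the tight trade-off from Refs.~\cite{Ramanathan2009,Kay2012}, noting that optimality transfers in both directions under the isomorphism. Your explicit remarks on the marginal condition $\mathrm{Tr}_{1,\ldots,n}\,w=\identity_0/d$ and on filling the interior by mixing with the maximally mixed state only make more precise what the paper leaves implicit.
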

\noindent 
Interestingly, similar to our discussion surrounding Eq. (\ref{CKWweak}), the authors of \cite{Ramanathan2009} argue how the ``singlet monogamy'' bound can lead to stricter predictions (e.g., on ground-state energies in many-body spin systems) than the standard monogamy of entanglement bounds based on CKW inequalities \cite{Wootters2000,Verstraete2006}.

\subsection{Sharability of Werner and isotropic qudit states}
\label{subsec:Wsharability}

We next turn to sharability of Werner and isotropic states in $d$ dimension, beginning from  
the important case of $1$-$n$ sharing. For Werner states, a proof based on a representation-theoretic approach is given in Appendix \ref{sec:sharingreptools}. Although we expect a similar proof to exist for isotropic states, we obtain the desired $1$-$n$ sharability result by building on the relationship with quantum cloning problems highlighted above. We then outline a constructive procedure for determining the more general $m$-$n$ sharability of Werner states.

Our main results are contained in the following:
\begin{thm}
\label{thm:1nsharability}
A qudit Werner state with parameter $\Psi^{-}$ is $1$-$n$ sharable if and
only if
\begin{equation}
\Psi^{-} \geq -\frac{d-1}{n}.
\label{Wsharing} 
\end{equation}
\end{thm}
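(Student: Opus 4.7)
The plan is to exploit the double symmetry of the problem --- collective unitary invariance of the Werner marginal together with permutation invariance over the right systems --- and then apply Schur--Weyl duality to reduce positivity of the sharing state to an eigenvalue computation on Young diagrams. Specifically, given any candidate $1$-$n$ sharing state $w$ on $\hilbert_L\otimes\hilbert_R^{\otimes n}$, I would first symmetrize by twirling under the collective action $U^{\otimes(n+1)}$, $U\in\mathfrak{U}(d)$ (exactly as in Eq.\ (\ref{twirlapp}) but with $n+1$ copies of $U$), and by averaging over the group $S_n$ permuting $R_1,\ldots,R_n$. Both operations preserve the sharing property, since the common reduction $\rho(\Psi^-)$ is itself Werner and is identical for every $L$-$R_i$ pair; hence it suffices to restrict $w$ to the commutant algebra of $U^{\otimes(n+1)}\times S_n$.

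Schur--Weyl duality then yields $(\mathbb{C}^d)^{\otimes(n+1)} = \bigoplus_\lambda W_\lambda\otimes S_\lambda$, where $\lambda$ runs over Young diagrams with $n+1$ boxes and at most $d$ rows. The multiplicity-free branching $S_\lambda|_{S_n}=\bigoplus_{\mu\nearrow\lambda} S_\mu$ refines this into isotypes $W_\lambda\otimes S_\mu$ indexed by pairs $(\lambda,\mu)$, and every admissible $w$ is a convex combination of the normalized projectors onto these isotypes. By $S_n$-symmetry one has $\trb{}{w V_{L,R_i}}=\Psi^-$ for every $i$, so all relevant information is captured by the single $S_n$-invariant observable $C\equiv\sum_{i=1}^n V_{L,R_i}$, satisfying $\trb{}{wC}=n\Psi^-$, where $V_{L,R_i}$ denotes the swap of subsystems $L$ and $R_i$.

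The crucial computation is the spectrum of $C$ on each $(\lambda,\mu)$-isotype. Writing $C=C_2-C_2'$, with $C_2=\sum_{0\le i<j\le n} V_{ij}$ and $C_2'=\sum_{1\le i<j\le n} V_{ij}$ the central Casimir elements of $\mathbb{C}[S_{n+1}]$ and $\mathbb{C}[S_n]$ respectively, and using the standard fact that each such Casimir acts on $S_\lambda$ (resp.\ $S_\mu$) as the sum of contents $j-i$ over the cells of the diagram, one concludes that $C$ acts on the $(\lambda,\mu)$-isotype as the scalar $c(\lambda/\mu)=j-i$, the content of the unique cell added to $\mu$ to form $\lambda$. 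Equivalently, $C$ is (up to relabelling) the Jucys--Murphy element $J_{n+1}$ of $\mathbb{C}[S_{n+1}]$, whose eigenvalues on the Young basis are precisely the cell contents.

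The theorem then follows by a short combinatorial extremization. For the necessary direction, the minimum of $j-i$ over removable cells $(i,j)$ of diagrams with at most $d$ rows and $n+1$ boxes is $-(d-1)$, attained at $(i,j)=(d,1)$ whenever $n+1\ge d$; hence $n\Psi^-=\trb{}{wC}\ge -(d-1)$, i.e.\ $\Psi^-\ge -(d-1)/n$ (the inequality being vacuous for $n<d-1$, where it lies below $-1$). For the sufficient direction, the normalized projector onto any $(\lambda,\mu)$-isotype saturating $c(\lambda/\mu)=-(d-1)$ --- e.g., $\lambda=(\lambda_1,\ldots,\lambda_{d-1},1)$ and $\mu=(\lambda_1,\ldots,\lambda_{d-1})$ --- is by construction an explicit sharing state with $\Psi^-=-(d-1)/n$, and convex mixing with any arbitrarily sharable Werner state (such as the completely mixed state) fills out the entire range $[-(d-1)/n,1]$. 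I expect the main obstacle to be the identification of $C$ as a Casimir difference (equivalently, as the Jucys--Murphy element $J_{n+1}$); once that recognition is in hand, the spectrum of $C$ is read off immediately and the rest of the proof is bookkeeping over Young diagrams.
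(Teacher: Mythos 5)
Your proposal is correct and follows essentially the same route as the paper's Appendix B: symmetrize the sharing state under collective unitaries and right-system permutations, reduce the problem to the extremal eigenvalue of $\sum_{i}V_{L,R_i}$ on the Young-diagram-labeled isotypic components, and optimize over diagrams with at most $d$ rows. The only difference is presentational --- you compute that spectrum on the symmetric-group side of Schur--Weyl duality (class sums of transpositions, i.e.\ the Jucys--Murphy element $J_{n+1}$, whose eigenvalues are the contents of addable cells), whereas the paper writes the same operator as a difference of $\mathfrak{su}(d)$ quadratic Casimirs and reads off eigenvalues from the row/column-sum formula; the two computations are equivalent and both yield the extremal content $-(d-1)$ and hence $\Psi^{-}\geq -(d-1)/n$.
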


\begin{thm}
\label{thm:1nsharabilityiso}
A qudit isotropic state with parameter $\Phi^{+}$ is $1$-$n$ sharable if and
only if
\begin{equation}
\Phi^{+} \leq 1+\frac{d-1}{n}.
\label{isosharing} 
\end{equation}
\end{thm}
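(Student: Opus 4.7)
The plan is to obtain Theorem \ref{thm:1nsharabilityiso} as an immediate specialization of the $1$-$n$ joinability characterization for isotropic states given in Theorem \ref{thm:1njoiningiso}, which itself followed from the cloning / singlet-monogamy result of Ramanathan et al. The conceptual bridge is simply that, by the sharability definition given in Sec.~\ref{sec:S}, a bipartite state $\rho(\Phi^+)$ is $1$-$n$ sharable if and only if the constant list $(\rho(\Phi^+),\rho(\Phi^+),\ldots,\rho(\Phi^+))$ with $n$ identical entries is $1$-$n$ joinable: any joining state for this constant list automatically reduces to $\rho(\Phi^+)$ on every left-right pair, hence is a valid sharing state, and conversely a sharing state $w$ joins the list of $n$ copies by construction.

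First, I would invoke Theorem \ref{thm:1njoiningiso} with $\Phi^+_{0,j}=\Phi^+$ for every $j=1,\ldots,n$. The necessary-and-sufficient inequality there becomes
\begin{equation*}
n\,\Phi^+ \;\leq\; (d-1) + \frac{1}{n+d-1}\bigl(n\sqrt{\Phi^+}\bigr)^2
= (d-1) + \frac{n^2}{n+d-1}\,\Phi^+.
\end{equation*}
Rearranging the $\Phi^+$ terms on the left-hand side gives
\begin{equation*}
\Phi^+ \cdot \frac{n(n+d-1)-n^2}{n+d-1}
= \Phi^+ \cdot \frac{n(d-1)}{n+d-1} \;\leq\; d-1,
\end{equation*}
which, upon dividing by $d-1>0$ (the cases $d=1$ being trivial), reduces to exactly $\Phi^+\leq 1+(d-1)/n$, as claimed in Eq.~(\ref{isosharing}). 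Both implications follow at once because Theorem \ref{thm:1njoiningiso} is itself an ``if and only if'' statement.

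The only substantive step is therefore the equivalence between sharing a single state and jointly reducing to $n$ copies of it, and this is essentially tautological from the definitions; no symmetrization of the joining state is actually required, although one could always symmetrize by twirling with $U^*\otimes U^{\otimes n}$ and then averaging over the permutation group $S_n$ acting on the $n$ right subsystems, exactly as in the sharability/separability argument of Theorem III.6, to obtain a joining state with the full isotropic-plus-permutation symmetry. The main obstacle, in truth, has already been absorbed into the proof of Theorem \ref{thm:1njoiningiso}: it is the proof that the cloning-theoretic construction of Ramanathan et al.\ saturates the $1$-$n$ bound, which in turn rests on the optimality argument of Kay et al.\ via the Choi isomorphism. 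A representation-theoretic derivation parallel to that used for Werner states in Appendix \ref{sec:sharingreptools} (based on isotropic twirling in Eq.~(\ref{twirliso}) and the corresponding partial-transpose-permutation algebra of Eq.~(\ref{eq:ptrep})) should also be possible, but appears more laborious than the cloning-based shortcut and is therefore not pursued here.
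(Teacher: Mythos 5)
Your proof is correct and follows exactly the paper's own route: Theorem III.4 is obtained by specializing the $1$-$n$ isotropic joinability bound of Eq.~(\ref{eq:1njoiningiso}) to equal parameters $\Phi^+_{0,j}=\Phi^+$, with the sharing/joining equivalence for the constant list being definitional. The only difference is that you spell out the (correct) algebra and the tautological reduction step, which the paper compresses into ``the above result immediately follows.''
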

\begin{proof}
Specializing Eq. (\ref{eq:1njoiningiso}) to the case of equal parameters for all $n$ isotropic states, the above result immediately follows. As stated in \cite{Ramanathan2009}, this is consistent with the well known result for optimal 1-$n$ symmetric cloning.
\end{proof}
A pictorial representation of the above sharability results, specialized to qubits, is given in Fig. \ref{fig:sharability}. In the case of Werner state sharing, Eq. (\ref{Wsharing}) implies that a \emph{finite} parameter range exists where the corresponding Werner states are \emph{not} sharable. In contrast, for $d\geq 3$, every Werner state is at least $1$-$2$ sharable. This simply reflects the fact that $\ket{\psi^-_d}$ (recall Eq. (\ref{eq:purewerner})) provides a 1-$(d-1)$ sharing state for a most-entangled qudit Werner state. With isotropic state sharing, for all $d$ there is, again, a {\em finite} range of isotropic states which are not sharable.

\begin{figure}[b]
\begin{subfigure}
{\includegraphics[width=.9\columnwidth,viewport=10 550 570 700,clip]{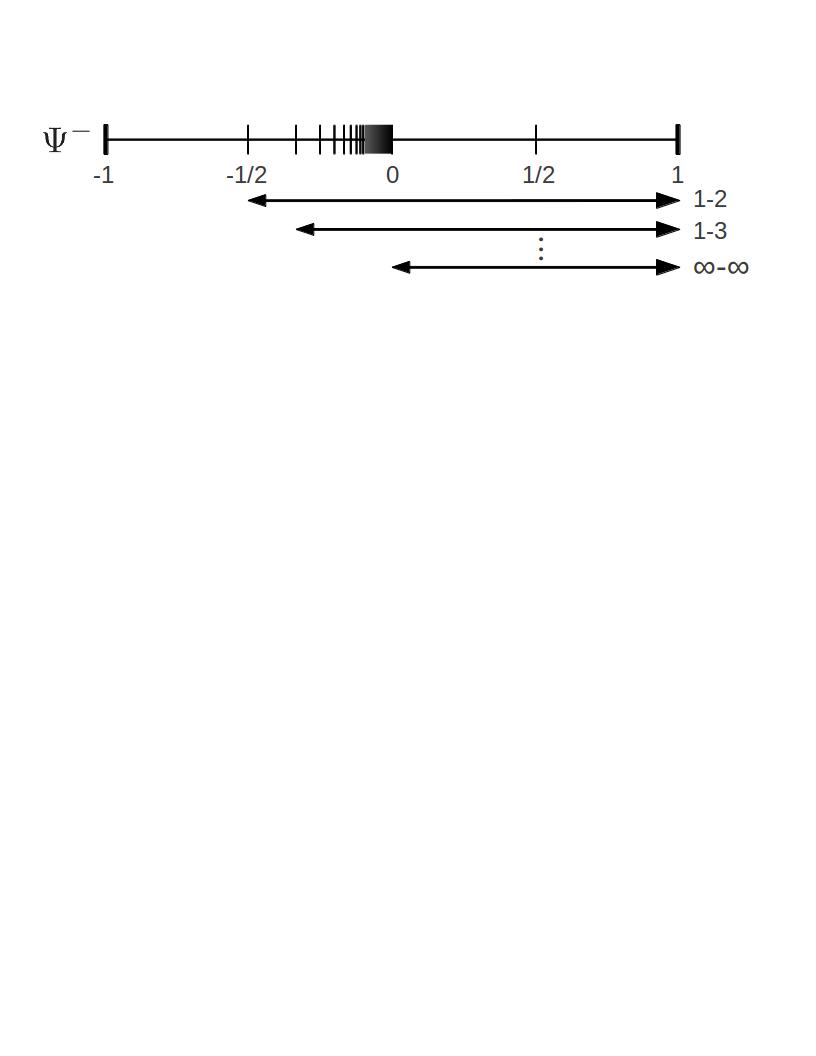}}
\end{subfigure}
\begin{subfigure}
{\includegraphics[width=.9\columnwidth,viewport=20 550 570 700,clip]{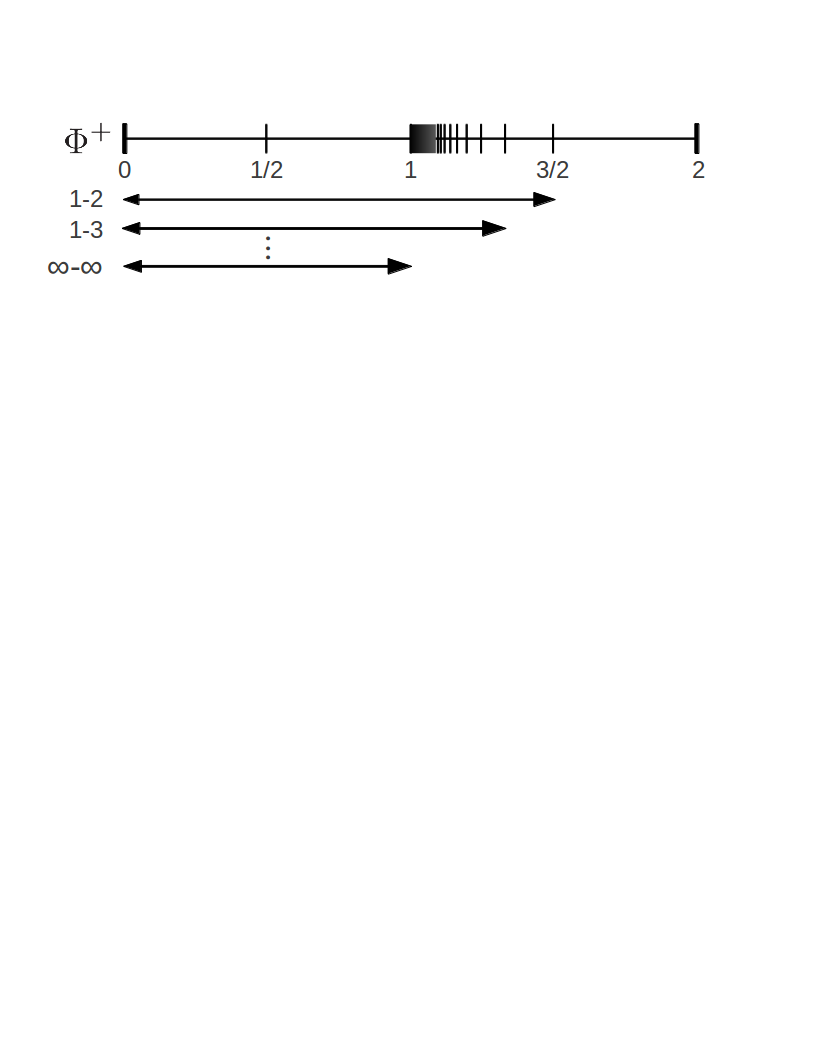}}\end{subfigure}
\vspace*{-5mm}
\caption{Pictorial summary of sharability properties of qubit Werner and isotropic
states, according to Eqs. (\ref{Wsharing}) and (\ref{isosharing}). The arrow-headed lines
depict the parameter range for which states satisfy each of the
sharability properties displayed to the right and left, respectively. The vertical ticks
between end points of these ranges indicate the points at which subsequent $1$-$n$ sharability
properties begin to be satisfied. 
}
\label{fig:sharability}
\end{figure}

The simplicity of the results in Eqs. (\ref{Wsharing})-(\ref{isosharing}) is intriguing and begs for intuitive interpretations. Consider a central qudit surrounded by $n$ outer qudits. If the central qudit is in the same Werner or isotropic state with each outer qudit, then Theorems \ref{thm:1nsharability} and \ref{thm:1nsharabilityiso} can be reinterpreted as providing a bound on the sums of concurrences. For Werner states, we have that the sum of all the central-to-outer concurrences cannot exceed the number of modes by which the systems may disagree (i.e., $d-1$). In the isotropic state case, the sum of the $n$ pairwise concurrences cannot exceed the maximal concurrence value given by $\mathcal{C}_{\text{max},d}=\sqrt{{2(d-1)}/d}$. These rules do not 
hold in more general joining scenarios, as we already know from Sec. \ref{subsec:Wjoin}. There, we found that the trade-off between $A$-$B$ concurrence and $A$-$C$ concurrence is not a linear one, as such a simple ``sum rule'' would predict; instead, it traces 
out an ellipse (recall Fig. \ref{fig:wernerpairjoinability}).

Starting from the proof of Thm. \ref{thm:1nsharability} found in Appendix \ref{sec:sharingreptools}, in conjunction with similar representation-theoretic tools, it is possible to devise a constructive algorithm for determining the $m$-$n$ sharability of Werner states. The basic observation is to realize that the most-entangled $m$-$n$ sharable Werner state corresponds to the largest eigenvalue of a certain Hamiltonian operator $H_{m,n}$, which is in turn expressible in terms of Casimir operators. Calculation of these eigenvalues may be obtained using Young diagrams. Although we lack a general closed-form expression for $\text{max}(H_{m,n})$, the required calculation can nevertheless be performed numerically. Representative results for $n$-$m$ sharability of low-dimensional Werner states are shown in Table \ref{tab:nmsharability}.


\begin{widetext}

\begin{table}[t]
\caption{Exact results for $n$-$m$ sharability of Werner states for
different subsystem dimension, with $m$ and $n$ increasing from left
to right and from top to bottom in each table, respectively. For each
sharability setting, the value $-\Phi$ is given. Asterisks correponds 
to entries whose values have not been explicitly computed. 
 }
\vspace*{1mm}
\hspace*{-0.0cm}(a) $d=2$  \hspace*{3.5cm} (b) $d=3$ \hspace*{3cm} (c) $d=4$ \\
\vspace*{1mm}
\begin{tabular}{c|c|c|c|c|c}
$n,m$ & 1     & 2    & 3    & 4    & 5   \\ \hline 
  1   & 1     & 1/2 & 1/3 & 1/4 & 1/5  \\ \hline
  2   & 1/2  & 1/2 & 1/3 & 1/4 & 1/5  \\ \hline
  3   & 1/3  & 1/3 & 1/3 & 1/4 & 1/5  \\ \hline
  4   & 1/4  & 1/4 & 1/4 & 1/4 & 1/5  \\ \hline
  5   & 1/5  & 1/5 & 1/5 & 1/5 & 1/5  
\end{tabular}\hspace*{1cm} \begin{tabular}{c|c|c|c|c|c}
$n,m$ & 1     & 2    & 3    & 4    & 5   \\ \hline 
  1   & 1     & 1    & 2/3 & 1/2 & 2/5  \\ \hline
  2   & 1     & 1/2 & 1/2 & 1/2 & 2/5  \\ \hline
  3   & 2/3  & 1/2 & 1/3 & 1/3 & 1/3  \\ \hline
  4   & 1/2  & 1/2 & 1/3 & 1/4 & 1/4  \\ \hline
  5   & 2/5  & 1/3 & 1/3 & 1/4 &  $\ast$ 
\end{tabular}\hspace*{1cm}\begin{tabular}{c|c|c|c|c|c}
$n,m$ & 1     & 2    & 3    & 4    & 5   \\ \hline 
  1   & 1     & 1    & 1    & 3/4 & 3/5  \\ \hline
  2   & 1     & 1    & 2/3 & 1/2 & 1/2  \\ \hline
  3   & 1     & 2/3 & 5/9 & 1/2 & $\ast$        \\ \hline
  4   & 3/4  & 1/2 & 1/2 &  $\ast$      &  $\ast$       \\ \hline
  5   & 3/5  & 1/2 &   $\ast$     &   $\ast$     &  $\ast$  
\end{tabular}
\label{tab:nmsharability}
\end{table}

\end{widetext}

\section{Further remarks}
\label{sec:furtherremarks}

\subsection{Joinability beyond the three-party scenario}

In Sec. \ref{sec:Wernerstates}, we focused on considering joinability of 
three bipartite (Werner or isotropic) states in a ``triangular fashion'', namely, 
relatively to the simplest overlapping neighborhood structure ${\cal N}_1= \{A, B\}$, 
${\cal N}_2= \{A, C\}$ on ${\cal H}^{(3)}$. In a more general $N$-partite scenario,
other neighborhood structures and associated joinability problems may naturally 
emerge. For instance, we may want to answer the following question: 
Which sets of $N(N-1)/2$ Werner-state (or isotropic-state) pairs are joinable? 
The approach to solving this more general problem parallels
the specific three-party case we discussed.

If each pair is in a Werner state, then if a joining state exists, there 
must exist a joining state with collective invariant symmetry (that is, 
invariant under arbitrary collective unitaries $U^{\otimes N}$).
Thus, we need only look in the set of states respecting this
symmetry. Any such operator may be
decomposed into a sum of operators, which each have support on just a
single irreducible subspace. This is useful because positivity of the
joining operator when restricted to each irreducible subspace is
sufficient for positivity of the overall operator. The joining
operators may then be decomposed into the projectors on each irreducible
subspace and corresponding bases of traceless operators on the
projectors. The basis elements will be combinations of permutation
operators and the dimension of each such operator subspaces is
given by the square of the hook length of the corresponding Young
diagram \cite{Fuchs1997}. The remaining task is to obtain a characterization 
of the positivity of the operators on {\em each} irreducible subspace. In 
\cite{Byrd2003}, for example, a method for characterizing the positivity of 
low-dimensional operator spaces is presented. As long as the number of subsystems
remains small, this approach grants us a 
computationally friendly characterization of positivity of the joining states. 
The bounds on the joinable Werner pairs may then be obtained by projecting the positivity
characterization boundary onto the space of Werner pairs, analogous to
the space of Fig. \ref{fig:wernerpairjoinability}.

While a complete analysis is beyond our scope, a similar method may in principle be followed 
to determine more general joinability bounds for isotropic states. 
However, a twirling operation that preserves the joining property only exists for certain isotropic joining 
scenarios. For instance, we took this issue into consideration when we required the $B$-$C$ system to 
be in a Werner state while $A$-$B$ and $A$-$C$ were isotropic states; it would not have been possible to 
take the same approach if all three pairs were isotropic states.

\subsection{Sharability of general bipartite qubit states}

For qubit Werner states, one can use the methods of the proof of Thm. III.6 to show that $1$-$n$ sharability does imply $n$-$n$ sharability [cf. Table I.(a)]. This property neither holds for Werner qudit states nor bipartite qubit states in general. The simplest example of a Werner state which disobeys this property is the most-entangled qutrit Werner state 
$\rho(\Psi^{-}=-1)_{d=3}$. This state is 1-2 sharable, as evidenced by the point $(-1,-1,-1)$ lying within the bi-cone described by Eq. (\ref{d1}). The corresponding sharing state is the totally antisymmetric state on three qutrits as given by Eq. (\ref{eq:purewerner}). This is the unique sharing state because the collective disagreement between the subsystems of each joined bipartite Werner state forces collective disagreement among the subsystems of the tripartite joining state; the totally antisymmetric state is the only quantum state satisfying this property. Since the only 1-2 sharing state for $\rho(\Psi^{-}=-1)_{d=3}$ is pure and entangled, clearly there can exist no 2-2 sharing.

Additionally, we present below a counter-example that involves qubit states off the Werner line:

\begin{prop}
For a generic bipartite qubit state $\rho$, $1$-$n$ sharability does
not imply $n$-$n$ sharability.
\end{prop}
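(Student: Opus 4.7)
The plan is to exhibit an explicit counter-example: a bipartite qubit state $\rho_{AB}$ off the Werner line that is $1$-$n$ sharable for some $n\geq 2$ but not $n$-$n$ sharable. The strategy parallels the qutrit Werner counter-example discussed just above, where a highly constrained $1$-$n$ sharing state was shown to preclude any $n$-$n$ extension.

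A natural candidate is the two-qubit marginal of the Dicke-type $W$-state on $n+1$ qubits, $\ket{W_{n+1}}=(n+1)^{-1/2}\sum_{i=1}^{n+1}\ket{e_i}$, where $\ket{e_i}$ denotes the single-excitation computational basis element on $n+1$ qubits. A direct trace over any $n-1$ of the qubits yields the same two-qubit state $\rho_{AB}=\tfrac{n-1}{n+1}\ketbra{00}+\tfrac{2}{n+1}\ketbra{\psi^+}$, with $\ket{\psi^+}=(\ket{01}+\ket{10})/\sqrt{2}$. This $\rho_{AB}$ is manifestly off the Werner line: it fails to be $U\otimes U$ invariant (e.g., $\sigma_x\otimes\sigma_x$ swaps $\ket{00}$ with $\ket{11}$ while fixing $\ket{\psi^+}$), and its one-body reductions are not maximally mixed. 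By construction, $\ket{W_{n+1}}$ itself is a $1$-$n$ sharing state of $\rho_{AB}$, establishing $1$-$n$ sharability.

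The nontrivial half is to show that $\rho_{AB}$ is \emph{not} $n$-$n$ sharable. I would argue by contradiction: assume a $2n$-qubit sharing state $w$ exists and exploit permutation symmetrization, i.e., averaging $w$ over independent permutations of the $n$ left and the $n$ right qubits preserves every left-right pairwise reduction, so $w$ may without loss of generality be taken left-symmetric and right-symmetric. Within this permutation-invariant sector, two tools should deliver the contradiction. (i) A CKW-type monogamy analysis on each left qubit: a calculation gives $\mathcal{C}(\rho_{AB})=2/(n+1)$ and $\det(\rho_{L_i})=n/(n+1)^2$, so that $\sum_j\mathcal{C}^2(L_i,R_j)=4n/(n+1)^2$ exactly saturates the bound $4\det(\rho_{L_i})=4n/(n+1)^2$; saturation in turn forces $\mathcal{C}(L_i,L_{i'})=0$ for all $i\neq i'$ and pins the symmetrized sharing state to the boundary of the positive cone. (ii) A direct positivity analysis in the symmetric sector: the set of left-and-right-symmetric $2n$-qubit operators reducing to $\rho_{AB}$ on every $(L_i,R_j)$ can be parameterized explicitly, and its intersection with the positive cone is then shown to be empty (equivalently, the unique symmetric boundary candidate fails non-negativity).

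The main obstacle is that, unlike the qutrit Werner counter-example where uniqueness of the antisymmetric sharing state immediately forecloses $n$-$n$ extension, here the CKW bound is \emph{saturated} rather than strictly violated, so saturation alone does not produce a contradiction. Turning saturation into strict infeasibility requires either a sharper monogamy inequality tailored to permutation-symmetric global states, or a careful examination of the extremal structure of such states consistent with $\rho_{AB}$. Performing this positivity analysis explicitly in the symmetric sector is the delicate step, but once carried out it both delivers the counter-example and clarifies why the collective-unitary methods of Appendix \ref{sec:sharingreptools} do not extend off the Werner line.
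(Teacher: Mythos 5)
Your candidate state is a good one --- at $n=2$ it is locally unitarily equivalent to the state the paper uses (a bit flip on one qubit maps $\tfrac{1}{3}\ketbra{00}+\tfrac{2}{3}\ketbra{\psi^+}$ to $\tfrac{2}{3}\ketbra{\Phi^+}+\tfrac{1}{3}\ketbra{10}$), and the $1$-$n$ sharability half via $\ket{W_{n+1}}$ is correct. But the proof is not complete: the entire burden of the proposition lies in showing non-$n$-$n$-sharability, and your proposal leaves that step open. You yourself flag that the CKW bound is merely saturated, so route (i) yields no contradiction, and route (ii) is an unspecified ``positivity analysis in the symmetric sector'' whose delicate step you do not carry out. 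As written, this is a plan with the key argument missing, not a proof.

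The missing idea --- and the one the paper uses --- is \emph{uniqueness} of the $1$-$n$ sharing state, which here follows cheaply from the support structure of $\rho_{AB}$ rather than from monogamy inequalities. Since $\bra{11}\rho_{AB}\ket{11}=0$ for every pair, any $(n+1)$-qubit sharing state $w$ must be supported on the zero- and one-excitation sectors; since $\bra{\psi^-}\rho_{AB}\ket{\psi^-}=0$ for every pair, the one-excitation block must be proportional to $\ketbra{W_{n+1}}$; and matching the $\ket{00}$ weight, $p+(1-p)\tfrac{n-1}{n+1}=\tfrac{n-1}{n+1}$, forces the weight $p$ on $\ket{0\cdots 0}$ to vanish, whence positivity kills the remaining coherences and $w=\ketbra{W_{n+1}}$ is the \emph{only} $1$-$n$ sharing state. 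Now an $n$-$n$ sharing state $w_{2n}$ would have to reduce, upon tracing out $L_2,\ldots,L_n$, to a $1$-$n$ sharing state, hence to the pure entangled $\ketbra{W_{n+1}}$; purity forces $w_{2n}=\ketbra{W_{n+1}}_{L_1R_1\cdots R_n}\otimes\sigma_{L_2\cdots L_n}$, which cannot also reduce to the entangled $\ketbra{W_{n+1}}$ on $L_2R_1\cdots R_n$. This is exactly the paper's closing move (stated there for $n=2$, where the two partial-trace constraints are checked directly to pin down the sharing state); once you supply the uniqueness argument, your example works for all $n$, which is slightly more than the single $n=2$ counterexample the proposition requires.
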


\begin{proof}  
We claim that the following bipartite state on two qubits, 
$$\rho=\frac{1}{3}\big[ \big(\ket{00}+\ket{11}\big)
\big(\bra{00}+\bra{11}\big)+\ketbra{10}\Big] \equiv \rho_{L_1 R_1},$$ 
\noindent 
is 1-2 sharable but {not} 2-2 sharable. To show that $\rho$ is $1$-$2$
sharable, direct calculation shows that the two relevant partial-trace constraints uniquely 
identify $w_3 \equiv \ketbra{\psi}$ as the only valid sharing state, with
$$\ket{\psi}\equiv \frac{1}{\sqrt{3}}(\ket{000}+\ket{101}+
\ket{110}).$$
\noindent 
The above state may in turn be equivalently written as 
$$\ket{\psi}=\frac{1}{\sqrt{3}}\ket{0}\otimes\ket{00}+
\sqrt{\frac{2}{3}}\ket{1}\otimes\frac{1}{\sqrt{2}}
\left(\ket{01}+\ket{10}\right).$$ 
\noindent 
In order for $\rho$ to be 2-2 sharable, a four-partite state $w_4$
must exist, such that Tr$_{\hat{L_i}\hat{L_j}}(w_4)= \rho$, for
$i,j=1,2$. Any state which 2-2 shares $\rho$ must then 1-2 share the
pure entangled state $w_3$. That is, in constructing the 2-2 sharing
state for $\rho$, we bring in a fourth system $L_2$ which must reduce
(by tracing over $L_1$ or $L_2$) to $w_3$. But, since $w_3$ is a pure
entangled state, it is not sharable. Thus, there cannot exist a 2-2
sharing state for $\rho$.
\end{proof}

We conclude by stressing that our Werner and isotropic state sharability results allow in fact to put bounds 
(though not necessarily tight ones) on the sharability of an \emph{arbitrary bipartite qudit state}. 
It suffices to observe that any bipartite state can be transformed into a Werner or isotropic state by the 
action of the respective twirling map (either Eq. (\ref{twirlapp}) or (\ref{twirliso})). Theorem \ref{thm:LOCC} 
proves that the sharability of a state cannot be decreased by a unitary mixture map, 
and hence twirling cannot decrease sharability. This thus establishes the following:

\begin{coro}
A bipartite qudit state $\rho$ is no more sharable
than the Werner state $$\tilde{\rho}\equiv \int U\otimes U\rho_{V}
U^{\dagger}\otimes U^{\dagger} d \mu(U),$$ and the isotropic state 
$$\bar{\rho}\equiv \int U^*\otimes U\rho_{V}
U^{T}\otimes U^{\dagger} d \mu(U),$$
for any $\rho_{V}=\identity\otimes V\rho\,\identity\otimes V^{\dagger}$, 
with $V\in \mathfrak{U}(d)$. 
\end{coro}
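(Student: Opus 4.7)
The plan is to present this as a direct consequence of Theorem \ref{thm:LOCC} together with the observation that the defining integrals of $\tilde\rho$ and $\bar\rho$ are convex (Haar) mixtures of local unitaries acting on $\rho_V$. As a preliminary step, I would dispose of the auxiliary rotation $V$: the map $\sigma \mapsto \identity \otimes V \, \sigma \, \identity \otimes V^\dagger$ is itself a (trivial) convex unitary mixture, as is its inverse, so applying Theorem \ref{thm:LOCC} in both directions shows that $\rho$ and $\rho_V$ have identical $m$-$n$ sharability for all $m, n$. It therefore suffices to show that $\rho_V$ is no more sharable than its Werner twirl ${\mathcal T}_W(\rho_V) = \tilde\rho$ and than its isotropic twirl ${\mathcal T}_I(\rho_V) = \bar\rho$.

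For the Werner case, each map $\sigma \mapsto (U \otimes U) \sigma (U \otimes U)^\dagger$ is a product of local unitaries, and the twirl in Eq.~(\ref{twirlapp}) is simply its Haar average. For the isotropic case, each map $\sigma \mapsto (U^* \otimes U) \sigma (U^* \otimes U)^\dagger$ is likewise a product of local unitaries --- noting that $U^*$ is unitary whenever $U$ is --- and the twirl in Eq.~(\ref{twirliso}) is again their Haar average. Since $\mathfrak{U}(d)$ is compact, each such Haar integral may be approximated arbitrarily well, in operator norm, by finite convex combinations of such tensor-product local unitaries. Because the set of $m$-$n$ sharable bipartite states is closed in finite dimension (it is the image of the compact set of $N$-partite density operators under the continuous reduction map), Theorem \ref{thm:LOCC}, originally stated for finite mixtures, extends to these continuous Haar mixtures by a standard limiting argument.

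Applying this extended version to ${\mathcal T}_W$ then yields that $m$-$n$ sharability of $\rho_V$, hence of $\rho$, implies $m$-$n$ sharability of $\tilde\rho$; contrapositively, any failure of sharability of $\tilde\rho$ is inherited by $\rho$. The identical argument for ${\mathcal T}_I$ gives the isotropic claim, and since $V \in \mathfrak{U}(d)$ is arbitrary we obtain a whole family of such upper bounds, one for each choice of $V$. The only real subtlety --- which I would flag as the main obstacle worth making explicit --- is the passage from the finite convex mixtures of Theorem \ref{thm:LOCC} to the Haar-integral twirls; beyond this closedness/continuity argument, the proof is essentially immediate.
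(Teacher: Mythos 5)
Your proposal is correct and follows essentially the same route as the paper: the paper simply observes that the Werner and isotropic twirls are unitary-mixture maps and invokes Theorem \ref{thm:LOCC} to conclude that twirling cannot decrease sharability. Your additional step passing from finite convex mixtures to the Haar integral (via approximation and closedness of the set of $m$-$n$ sharable states) merely makes explicit a technical point the paper leaves implicit, and does not change the argument.
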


In the qubit case, for instance, any maximally entangled pure state
can be transformed into $\ket{\Psi^{-}}$ or $\ket{\Phi^{+}}$ by the action of some local
unitary $\identity\otimes V$. Thus, all maximally entangled pure qubit
states and their ``pseudo-pure'' versions, obtained as mixtures with
the fully mixed states, have the same sharability properties as the
Werner/isotropic states.

\section{Conclusions}
\label{sec:end}

We have provided a general framework for defining the notions of quantum joinability and sharability in finite-dimensional 
multipartite quantum systems, and compared both to the analogous classical notions. Special emphasis has been given to identifying the role of entanglement in both scenarios. In order to obtain mathematically necessary and sufficient conditions, we have specifically analyzed the three-party joinability and the $m$-$n$ sharability properties of qudit Werner and isotropic states. We found that the entanglement content of the joined bipartite states does \emph{not} suffice to determine the resulting joinability properties. Additionally, we analyzed the role that the classical joining limitations play in restricting quantum joining in these scenarios. As a byproduct, this led to an explanation for the linear trade-off in singlet fractions (or cloning fidelities) as $d\rightarrow \infty$. We further determined simple analytical expressions for 1-$n$ sharability, namely, that the sum of the bipartite concurrences cannot exceed $(d-1)$ for 
Werner states and cannot exceed $\mathcal{C}_{\text{max},d}$ for isotropic states. In the more general case of $m$-$n$ sharing, we laid out an algorithmic procedure for calculating the most-entangled $m$-$n$ sharable Werner states using Young diagrams. As a corollary of our results, we established upper bounds on the sharability of \emph{arbitrary} bipartite qudit states.

Several open questions remain for future investigations. The use of the Choi isomorphism in translating between 1-$n$ joining scenarios and cloning scenarios points to an intriguing connection between the mathematical structures of reduced quantum states and reduced channels. Pushing this connection further, we believe it would be fruitful to investigate the analogous problem of {\em joining quantum channels}, making contact, in particular, with recent work on extending quantum operations \cite{ReebJMP}.

Also related to channels, we would like to continue investigating the effects of different maps on the sharability properties of the input states. In Thm. \ref{thm:LOCC}, we showed that mixtures of local unitaries cannot decrease sharability. It would be interesting to obtain a characterization of the set of maps which do not decrease sharability and compare them to LOCC maps which are known to not increase entanglement. Since we do not have a counterexample to the statement, we conjecture that sharability cannot be decreased by {\em any} LOCC map.

Keeping with the approach pursued here, further progress toward obtaining necessary and sufficient joinability conditions may be made by narrowing the set of states to be analyzed to other physically relevant families and/or specific joining structures. For instance, families of mixed qudit states may arise as reduced states of many-body ground states of spin-1/2 or higher spin Hamiltonians parametrized by an external control parameter. In this context, it may be insighful to examine what joinability and sharability features of quantum correlations change as the system is driven across a quantum phase transition, complementing extensive investigation of ground-state entanglement \cite{Fazio} and generalized entanglement \cite{Barnum2004,Somma2004} in critical phenomena. Finally, since generalized entanglement is defined {\em without} relying on a preferred tensor-product decomposition, with ``generalized reduced states'' being constructed through a suitable reduction map relative to 
observable 
subspaces \cite{Barnum2004,Barnum2005}, a natural question arises: What is the nature and role of joinability limitations \emph{beyond subsystems}? We leave exploration of this intriguing question to future research.

\begin{acknowledgments}
We are especially indebted to Benjamin Schumacher for inspiring discussions and early input to this work, and to Marco Piani for constructive feedback and for bringing Ref. \cite{Eggeling2001} to our attention. We thank Michael Walter for sharing his notes and work on the 1-$n$ sharability problem and Lin Chen for pointing out an inaccuracy in the original version of the preprint. It is a pleasure to also thank Gerardo Adesso, Sergio Boixo, Guilio Chiribella, Chris Fuchs, Rosa Orellana, Francesco Ticozzi, Reinhard Werner and Bill Wootters for valuable discussions and input throughout this work. Partial support from the Constance and Walter Burke Special Projects Fund in Quantum Information Science is gratefully acknowledged. 
\end{acknowledgments}

\begin{appendix}

\section{Proofs of three-party quantum joinability problems}
\label{sec:joiningreptools}

As remarked in Sec. II, joinability limitations are a manifestation of the non-negativity constraint placed on the joining density operators. 
Symmetries of the reduced states allow us to both narrow our search for a joining state and to obtain simple characterizations of non-negativity. For both Werner and isotropic states, the relevant search space for valid three-partite joining states may be characterized in terms of suitable subsystem permutation operators, according to Eqs. (\ref{permrepspan}) and (\ref{eq:ptrep}), respectively. Such parameterizations do not offer, however, a straightforward characterization of non-negativity. To this end, we need to choose a different basis for which non-negativity is more simply expressed in terms of its (the basis') coefficients. The key idea is to decompose the operator space into subspaces for which the non-negativity of a given operator's projection into each subspace ensures the non-negativity of the given operator. This is achieved if the operators within each operator subspace act non-trivially on orthogonal vector subspaces. Irreducible representations (irreps) provide such a decomposition. 

For the symmetric group $S_N$, the irrep subspaces are projected into by the so-called Young symmetrizers. The prescription for constructing Young symmetrizers from the permutation representations may be found in most books on representation theory, see e.g. \cite{Fulton1991,Weyl1997}. In our case, $N=3$, the group $S_3$ has three inequivalent 
irreps, and the relevant Young symmetrizers read
\begin{eqnarray}
R_+&\hspace*{-1mm}=\hspace*{-1mm}&
\frac{1}{6}\Big[\mathbb{I}+V_{(AB)}+V_{(BC)}+V_{(CA)}+V_{(ABC)}+V_{(CBA)}\Big] ,
\nonumber \\
R_-&\hspace*{-1mm}=\hspace*{-1mm}&
\frac{1}{6}\Big[\mathbb{I}-V_{(AB)}-V_{(BC)}-V_{(CA)}+V_{(ABC)}+V_{(CBA)}\Big],
\nonumber \\
R_0&\hspace*{-1mm}=\hspace*{-1mm}&
\frac{1}{3}\Big[ 2\mathbb{I}-V_{(ABC)}-V_{(CBA)}\Big],
\label{Young}
\end{eqnarray}
where, by following Ref. \cite{Eggeling2001}, we use cycle notation to label 
permutation elements, and an orthonormal basis of Pauli-like operators acting on the 
support of $R_0$ is
\begin{eqnarray}
R_1&=&\frac{1}{3}\Big[2V_{(BC)}-V_{(CA)}-V_{(AB)}\Big],\nonumber \\
R_2&=&\frac{1}{\sqrt{3}}\Big[ V_{(AB)}-V_{(CA)}\Big],\nonumber \\
R_3&=&\frac{i}{\sqrt{3}}\Big[ V_{(ABC)}-V_{(CBA)}\Big]. 
\label{Rs}
\end{eqnarray}
Defining $r_k(w)=\tr{}{w R_k}$, any $w \in \mathcal{W}_W$ is of the form
\begin{align}
\label{generalstate}
w(\vec{r})=&\frac{6r_+}{d(d+1)(d+2)}R_++\frac{6r_-}{d(d-1)(d-2)}R_-
\nonumber \\ & + \frac{3}{2d(d^2-1)}\sum_{i=0}^3 r_i R_i,
\end{align}
where $\vec{r}=(r_+,r_-,\ldots,r_3)$. Normalization is ensured by
\begin{equation*}
\tr{}{w\identity}=\tr{}{w(R_++R_-+R_0)}=r_++r_-+r_0=1, 
\end{equation*}
while non-negativity is given by the following simple relationships:
\begin{equation}
\label{eq:nonnegconstraints}
r_+,r_-,r_0\geq 0, \;\;\; r_1^2+r_2^2+r_3^2\leq r_0^2.
\end{equation}

Following again Ref. \cite{Eggeling2001} (see in particular Sec. VI.A), the analogous decomposition of operators of the form given in Eq. (\ref{eq:ptrep}) into orthogonal 
projectors reads 
\begin{align}
S_+&\hspace*{-1mm}=\frac{1}{2}\Big[\identity+V_{(BC)}-\frac{V_{(AB)}^{T_A}+V_{(AC)}^{T_A}+V_{(ABC)}^{T_A}+V_{(CBA)}^{T_A}}{d+1}\Big],
\nonumber \\
S_-&\hspace*{-1mm}=\frac{1}{2}\Big[\identity-V_{(BC)}+\frac{V_{(ABC)}^{T_A}+V_{(CBA)}^{T_A}-V_{(AB)}^{T_A}-V_{(AC)}^{T_A}}{d-1}\Big] ,
\nonumber \\
S_0&\hspace*{-1mm}=\frac{1}{d^2-1}\Big[d(V_{(AB)}^{T_A}+V_{(AC)}^{T_A})-(V_{(ABC)}^{T_A}+V_{(CBA)}^{T_A})\Big],
\label{Youngiso}
\end{align}
whereas an orthonormal basis of operators acting on the support of $S_0$ is
\begin{align}
S_1=&\frac{1}{d^2-1}\Big[d(V_{(ABC)}^{T_A}+V_{(CBA)}^{T_A})-(V_{(AB)}^{T_A}+V_{(AC)}^{T_A})\Big],\nonumber \\
S_2=&\frac{1}{\sqrt{d^2-1}}\Big[V_{(AB)}^{T_A}-V_{(AC)}^{T_A}\Big],\nonumber \\
S_3=&\frac{i}{\sqrt{d^2-1}}\Big[ V_{(ABC)}^{T_A}-V_{(CBA)}^{T_A}\Big].
\label{Rsiso}
\end{align}
In complete analogy, we define $s_k(w)=\tr{}{w S_k}$. Then any joining state 
$w \in {\mathcal W}_{\text{iso}}$ is of the form
\begin{align}
\label{generalstateiso}
w(\vec{s})=&\frac{2s_+}{d(d+2)(d-1)}S_++\frac{2s_-}{d(d-2)(d+1)}S_-
\nonumber \\ & + \frac{1}{2d}\sum_{i=0}^3 s_i S_i,
\end{align}
where $\vec{s}=(s_+,s_-,\ldots,s_3)$. Normalization is ensured by
\begin{equation*}
\tr{}{w\identity}=\tr{}{w(S_++S_-+S_0)}=s_++s_-+s_0=1, 
\end{equation*}
and non-negativity is given by
\begin{equation*}
s_+,s_-,s_0\geq 0, \;\;\; s_1^2+s_2^2+s_3^2\leq s_0^2.
\end{equation*}

Given the above decompositions of three-party joining operators, we are now equipped to formally prove our three-party joining results.

\begin{thmwj} 
\label{triowerapp} 
Three Werner qudit states with parameters $\Psi^{-}_{AB}, \Psi^{-}_{BC}, \Psi^{-}_{AC}$ are
joinable if and only if $(\Psi^{-}_{AB}, \Psi^{-}_{BC}, \Psi^{-}_{AC})$ lies within the bicone
described by
\begin{align}
1\pm\overline{\Psi^{-}}\geq \frac{2}{3}\left|\Psi^{-}_{BC}+\omega
\Psi^{-}_{AC}+\omega^2\Psi^{-}_{AB}\right| ,
\label{d1app}
\end{align}
for $d\geq3$, or within the cone described by
\begin{align}
1-\overline{\Psi^{-}}\geq \frac{2}{3}\left|\Psi^{-}_{BC}+ \omega
\Psi^{-}_{AC}+\omega^2\Psi^{-}_{AB}\right| ,
\;\;\; \overline{\Psi^{-}}\geq 0 , 
\label{d2bapp} 
\end{align}
for $d=2$, where 
\begin{equation}
\overline{\Psi^{-}}=\frac{1}{3}(\Psi^{-}_{AB}+\Psi^{-}_{BC}+\Psi^{-}_{AC}),
\;\;\;\omega=e^{i\frac{2\pi}{3}}.
\label{Phibarapp}
\end{equation}
\end{thmwj}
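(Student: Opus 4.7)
The plan is to exploit the collective unitary invariance highlighted in Eq.~(\ref{twirlapp}): if any joining state exists, then by the twirl construction a joining state of the form (\ref{permrepspan}) exists. By (\ref{generalstate}), such states are parametrized by the six real numbers $(r_+, r_-, r_0, r_1, r_2, r_3)$, subject to the normalization $r_+ + r_- + r_0 = 1$ and to the simple non-negativity conditions (\ref{eq:nonnegconstraints}). The joinability question thus reduces to a finite-dimensional linear-algebraic feasibility problem: given a triple $(\Psi^{-}_{AB}, \Psi^{-}_{BC}, \Psi^{-}_{AC})$, do there exist $(r_+, r_-, r_0, r_1, r_2, r_3)$ such that $\mathrm{Tr}[w\,V_{(ij)}] = \Psi^{-}_{ij}$ for each pair while (\ref{eq:nonnegconstraints}) is met?

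The key step is to invert the basis transformations (\ref{Young})--(\ref{Rs}) and write each transposition $V_{(ij)}$ as a linear combination of $R_+, R_-, R_1, R_2$. A short calculation gives, for instance, $V_{(BC)} = (R_+ - R_-) + R_1$, together with the analogous expressions for $V_{(AB)}$ and $V_{(CA)}$ involving $\pm\tfrac{1}{2}R_1 \pm \tfrac{\sqrt{3}}{2}R_2$. Taking $w$-expectation values, one first obtains the ``mean'' relation $\overline{\Psi^-} = r_+ - r_-$. Second, forming the discrete-Fourier combination with the cube root of unity $\omega = e^{i 2\pi/3}$ and using $\omega^2 - \omega = -i\sqrt{3}$, one finds
\begin{equation*}
\tfrac{2}{3}\bigl|\Psi^{-}_{BC} + \omega\,\Psi^{-}_{AC} + \omega^2\,\Psi^{-}_{AB}\bigr| = \sqrt{r_1^2 + r_2^2}.
\end{equation*}
Crucially, $r_3$ does not enter any pair-swap expectation value: it is an entirely free parameter within the joining problem.

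It is then straightforward to read off the bi-cone (\ref{d1app}). Setting $r_3 = 0$ (which is always allowed) reduces the quadratic positivity condition $r_1^2 + r_2^2 + r_3^2 \leq r_0^2$ to $\sqrt{r_1^2 + r_2^2} \leq r_0$. One then wishes to take $r_0$ as large as allowed; combining $r_+ + r_- = 1 - r_0$ with $r_+ - r_- = \overline{\Psi^-}$ and $r_\pm \geq 0$ yields $r_0 \leq 1 - |\overline{\Psi^-}|$, with equality saturable. Joinability is thus equivalent to $\tfrac{2}{3}|\Psi^{-}_{BC} + \omega\Psi^{-}_{AC} + \omega^2\Psi^{-}_{AB}| \leq 1 - |\overline{\Psi^-}|$, which is precisely Eq.~(\ref{d1app}). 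The special case $d=2$ is handled by noting that no three-qubit vector is totally antisymmetric, so $R_-$ acts as zero on $(\mathbb{C}^2)^{\otimes 3}$ and any valid joining state must have $r_- = 0$; this forces $\overline{\Psi^-} = r_+ \geq 0$ and $r_0 = 1 - \overline{\Psi^-}$, collapsing the bi-cone to the single cone (\ref{d2bapp}).

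The main obstacle will be to recognize, and verify, the discrete-Fourier structure that diagonalizes the linear map from the Young coordinates to the Werner parameters, since this is what produces the cube root of unity in the theorem statement; the rest of the argument is then a transparent maximization over the free parameters $r_0$ and $r_3$. One also has to be careful to verify sufficiency (not merely necessity) of the derived inequalities, i.e.\ to check that for every point in the bi-cone the above construction actually produces a bona fide joining operator with all coordinates in the admissible region.
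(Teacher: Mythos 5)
Your proposal is correct and follows essentially the same route as the paper's own proof: twirling to the permutation-operator ansatz, using the $(r_+,r_-,r_0,r_1,r_2,r_3)$ parametrization with the non-negativity conditions of Eq.~(\ref{eq:nonnegconstraints}), identifying $\sqrt{r_1^2+r_2^2}=\tfrac{2}{3}|\Psi^{-}_{BC}+\omega\Psi^{-}_{AC}+\omega^2\Psi^{-}_{AB}|$, setting $r_3=0$, and optimizing the remaining freedom (your maximization of $r_0\leq 1-|\overline{\Psi^{-}}|$ is equivalent to the paper's optimization over $r_-$ in $r_0=1-2r_--\overline{\Psi^{-}}$), with the $d=2$ case handled identically via $r_-=0$.
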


\begin{proof}
These joinability limitations are derived by re-expressing the non-negativity constraints of Eq. (\ref{eq:nonnegconstraints}) in terms of Werner parameters 
$\Psi^{-}_{ij}$. We have:
\begin{align*}
r_1^2+r_2^2 &= \tr{}{wR_1}^2+\tr{}{wR_2}^2 =|\tr{}{w(R_1+iR_2)}|^2\nonumber\\
&=\frac{4}{9}\left|\trb{}{w\left(V_{(BC)}+e^{i\frac{2\pi}{3}}
V_{(CA)}+e^{i\frac{4\pi}{3}}V_{(AB)}\right)}\right|^2\nonumber\\
&=\frac{4}{9}\left|\Psi^{-}_{BC}+\omega
\Psi^{-}_{AC}+\omega^2\Psi^{-}_{AB}\right|^2,
\end{align*}
and 
\begin{align}
\label{r_0}
r_0&=\tr{}{w(R_0+2R_-)}-2\tr{}{wR_-}\nonumber\\
&=\trb{}{w(\identity-\frac{1}{3}(V_{(AB)}+V_{(BC)}+V_{(CA)}))}-2r_-\nonumber\\
&=\frac{1}{3}\sum_{i<j}[(1-2r_-)-\Psi^{-}_{ij}]=1-2r_--\overline{\Psi^{-}},
\end{align}
where $\overline{\Psi^{-}}$ is defined in Eq. (\ref{Phibar}). Thus, the
spherical inequality $r_1^2+r_2^2+r_3^2\leq r_0^2$ may be rewritten as
\begin{align}
\label{sphericalineq}
(1-2r_--\overline{\Psi^{-}})^2\geq \frac{4}{9}\left|\Psi^{-}_{BC}+\omega
\Psi^{-}_{AC}+\omega^2\Psi^{-}_{AB}\right|^2+r_3^2.
\end{align}
Since a non-zero value of $r_3$ only further limits the inequality and
since the $\Psi^{-}_{ij}$ are independent of it, we maximize the range of
joinable $\Psi^{-}_{ij}$ by setting $r_3=0$.

The non-negativity is then expressed in terms of the parameters 
$\Psi^{-}_{ij}$ and $r_-$ as
\begin{align}
1-2r_--\overline{\Psi^{-}}\geq \frac{2}{3}\left|\Psi^{-}_{BC}+\omega
\Psi^{-}_{AC}+\omega^2\Psi^{-}_{AB}\right|, 
\label{gencone}\\
\overline{\Psi^{-}}+r_-\geq0 , 
\label{conebase1}\\
r_-\geq0,
\label{conebase2}
\end{align}
where the normalization condition allows us to write the
$r_+$-non-negativity condition as Eq. (\ref{conebase1}) and the
non-negativity of $r_0$ allows us to take the square-root of
Eq. (\ref{sphericalineq}) to obtain Eq. (\ref{gencone}).

For each $\overline{\Psi^{-}}$, we set $r_-$ so as to maximize the left
hand side of Eq. (\ref{gencone}) while satisfying
Eq. (\ref{conebase1}) and Eq. (\ref{conebase2}). Let $d\geq 3$. For
$\overline{\Psi^{-}}\geq0$ we set $r_-=0$, while for
$\overline{\Psi^{-}}\leq0$, we set $r_-=-\overline{\Psi^{-}}$. Considering
these two cases together, we find that the region of joinable
$(\Psi^{-}_{AB}, \Psi^{-}_{BC}, \Psi^{-}_{AC})$ is given precisely by
Eq. (\ref{d1app}) [Eq. (\ref{d1}) in the main text].
If $d=2$, we have $r_-=0$, thus simplifying Eq. (\ref{gencone}) and
Eq. (\ref{conebase1}) to Eq. (\ref{d2bapp}) [Eq. (\ref{d2b})], as desired.
\end{proof}

\begin{thmij} 
\label{trioisoapp}
Two isotropic qubit states $\rho_{AB}$ and $\rho_{AC}$ and qudit Werner state $\rho_{BC}$ with parameters $\Phi^+_{AB}, \Phi^+_{AC}, \Psi^-_{BC}$ are joinable if and only if $(\Phi^+_{AB}, \Phi^+_{AC}, \Psi^-_{BC})$ lies within the cone described by
\begin{eqnarray}
\Phi^{+}_{AB}+\Phi^{+}_{AC}-\Psi^{-}_{BC} \leq d\,  ,
\label{d1isoapp1} 
\end{eqnarray}
 \vspace*{-5mm}
\begin{eqnarray}
 1+\Phi^+_{AB} & + & \Phi^+_{AC} -  \Psi^-_{BC} \geq 
  \label{d1isoapp2}  \\
 \bigg| d(\Psi^-_{BC} & - &1) +  \sqrt{\frac{2d}{d-1}}(e^{i\theta}\Phi^+_{AB}+
e^{-i\theta}\Phi^+_{AC})\bigg|, \nonumber 
\end{eqnarray}
 \vspace*{-5mm}
$$ e^{\pm i\theta}= \pm i \sqrt{(d+1)/(2d)}+\sqrt{(d-1)/(2d)}, $$
or, for $d\geq 3$, within the convex hull of the above cone and the point $(0,0,-1)$.
\end{thmij}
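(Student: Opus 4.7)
The plan is to follow the blueprint of the Werner-case proof above, replacing the symmetric-group tools with the isotropic analogues collected in Eqs.~(\ref{Youngiso})--(\ref{generalstateiso}). First, I would observe that all three marginal parameters are left invariant by the isotropic twirl (\ref{twirliso}), since the reduced action of $U^* \otimes U \otimes U$ on each relevant pair is exactly the Werner/isotropic twirl that leaves the corresponding bipartite state fixed. Hence any joining state may without loss of generality be assumed to lie in $\mathcal{W}_{\text{iso}}$ and to be parametrized via Eq.~(\ref{generalstateiso}) by real coordinates $(s_+, s_-, s_0, s_1, s_2, s_3)$ subject to $s_+ + s_- + s_0 = 1$, $s_+, s_-, s_0 \geq 0$, and the sphere inequality $s_1^2 + s_2^2 + s_3^2 \leq s_0^2$.

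My next step would be to express the three marginal parameters as real linear functionals of the $s_k$. Using $V_{AB}^{T_A} = d\,\ketbra{\Phi^+}_{AB}\otimes\identity_C$, its analogue for $AC$, and $V_{BC}^{T_A} = V_{BC}$, together with the orthogonality of $\{S_+, S_-, S_0, S_1, S_2, S_3\}$ under the Hilbert--Schmidt inner product, I obtain $\Phi^+_{AB}$, $\Phi^+_{AC}$, and $\Psi^-_{BC}$ as explicit affine combinations of $(s_+, s_-, s_0, s_1, s_2)$. The coefficient of $s_3$ vanishes by the $B \leftrightarrow C$ symmetry of the three observables combined with the anti-symmetry of $S_3$ under this exchange, so I may set $s_3 = 0$ to make the sphere constraint as loose as possible. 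Inverting the resulting linear system, $s_0$ becomes an affine function of $(\Phi^+_{AB}, \Phi^+_{AC}, \Psi^-_{BC}, s_-)$, and the complex combination $s_1 + i s_2$ an affine function of $(\Phi^+_{AB}, \Phi^+_{AC}, \Psi^-_{BC})$ carrying precisely the phase $e^{\pm i\theta} = \sqrt{(d-1)/(2d)} \pm i\sqrt{(d+1)/(2d)}$ of the theorem.

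With these substitutions, and maximizing over $s_-$ to enlarge the feasible region in exact analogy with Eqs.~(\ref{gencone})--(\ref{conebase2}) of the Werner proof, the constraint $s_0 \geq 0$ becomes the linear inequality (\ref{d1isoapp1}), while $s_1^2 + s_2^2 \leq s_0^2$ becomes the cone bound (\ref{d1isoapp2}). For $d = 2$ the constraints $s_\pm \geq 0$ pin $s_-$ to its boundary value and yield only this cone. For $d \geq 3$, the totally antisymmetric pure state $\ket{\psi^-_d}$ of Eq.~(\ref{eq:purewerner}) (or, for $d > 3$, any three-party reduction thereof) furnishes an additional joining state whose marginals are $(\Phi^+_{AB}, \Phi^+_{AC}, \Psi^-_{BC}) = (0, 0, -1)$, a point lying outside the cone; convexity of the set of joinable triples then produces the asserted convex hull.

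The main technical obstacle will be the second step above: verifying that the overlaps of $S_0, S_1, S_2$ against $V_{AB}^{T_A}, V_{AC}^{T_A}, V_{BC}$ conspire to produce precisely the phase $e^{\pm i\theta}$. In the Werner case, the cyclic $S_3$-symmetry of $R_1, R_2, R_3$ immediately delivered the cube roots of unity $\omega, \omega^2$. Here the operative symmetry is only the $\mathbb{Z}_2$ exchange of $B$ and $C$, and the $d$-dependence of $\theta$ must be tracked through the $d$-dependent normalizations of the $S_k$ in Eqs.~(\ref{Youngiso})--(\ref{Rsiso}) and through the mixed traces $\trb{}{V_\pi^{T_A} V_\sigma}$ of partially transposed permutations paired with un-transposed swaps. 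Once this bookkeeping is in hand, extracting (\ref{d1isoapp1})--(\ref{d1isoapp2}) from the $s_0$ and $(s_1, s_2)$ constraints and identifying the extremal antisymmetric point $(0,0,-1)$ for $d \geq 3$ is a direct convex-geometric exercise.
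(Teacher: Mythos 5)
Your outline follows the paper's own route---twirl into $\mathcal{W}_{\text{iso}}$, parametrize via Eq.~(\ref{generalstateiso}), impose $s_\pm,s_0\geq 0$ together with the sphere constraint, drop $s_3$, and translate to the marginal parameters---but it has a genuine gap at the inversion step. The claim that $s_1+is_2$ is an affine function of $(\Phi^+_{AB},\Phi^+_{AC},\Psi^-_{BC})$ alone is false: eliminating $V_{(ABC)}^{T_A}+V_{(CBA)}^{T_A}$ in favor of the three marginal observables (the paper's Eq.~(\ref{eq:normalization})) is an operator identity only modulo a multiple of $S_-$; the discrepancy is exactly $\tfrac{2d}{d+1}S_-$. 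Carrying that term along, the exact relations are
\begin{align*}
s_0 &= \tfrac{1}{d+1}\big(1-\Psi^-_{BC}+\Phi^+_{AB}+\Phi^+_{AC}\big)-\tfrac{2}{d+1}\,s_-,\\
s_1+is_2 &= \tfrac{1}{d+1}\Big[d(\Psi^-_{BC}-1)+\sqrt{\tfrac{2d}{d-1}}\big(e^{i\theta}\Phi^+_{AB}+e^{-i\theta}\Phi^+_{AC}\big)\Big]+\tfrac{2d}{d+1}\,s_-,
\end{align*}
so both the radius bound and the centre of the disc move with $s_-$ (consistency check: $s_-=1$ gives $s_0=s_1=s_2=0$ and marginals $(0,0,-1)$). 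Under your claimed dependence, increasing $s_-$ could only shrink $s_0$ while leaving $s_1+is_2$ fixed, so ``maximizing over $s_-$ in exact analogy with the Werner case'' would always select $s_-=0$ and your necessity argument would certify only the cone---directly contradicting your own (correct) observation that the reduction of $\ket{\psi^-_d}$ joins $(0,0,-1)$, which violates Eq.~(\ref{d1isoapp2}). The analogy with the Werner proof is not exact precisely because there only the radius, never the centre, depended on $r_-$; as written, your ``only if'' direction is internally inconsistent.

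The repair is what the paper in effect does: treat $s_-$ by convexity rather than optimization. The constraint set $\{s_\pm,s_0\geq0,\; s_++s_-+s_0=1,\; s_1^2+s_2^2+s_3^2\leq s_0^2\}$ is exactly the convex hull of its $s_-=0$ slice and the extreme point $s_-=1$ (rescale any point with $0<s_-<1$ by $1/(1-s_-)$), and the marginal map is affine; hence the joinable region is the convex hull of the cone (the image of the $s_-=0$ slice, where your formulas with the $s_-$ terms deleted are valid) and the point $(0,0,-1)$. For $d=2$ the antisymmetric-type block is simply absent ($\text{Tr}\,S_-=d(d-2)(d+1)/2=0$), which is what forces $s_-=0$---not the inequalities $s_\pm\geq0$. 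Two smaller points: your $B\leftrightarrow C$ symmetry argument only gives $\text{Tr}[S_3V_{(BC)}]=0$ and $\text{Tr}[S_3V_{(AB)}^{T_A}]=-\text{Tr}[S_3V_{(AC)}^{T_A}]$; that the latter each vanish requires the one-line trace computation $\text{Tr}[(V_{(ABC)}-V_{(CBA)})V_{(AB)}]=d^2-d^2=0$. Also, at $s_-=0$ the linear face (\ref{d1isoapp1}) actually encodes $s_+\geq0$ (equivalently $s_0\leq1$); $s_0\geq0$ by itself only yields $\Phi^+_{AB}+\Phi^+_{AC}-\Psi^-_{BC}\geq-1$.
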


\begin{proof}
By proceeding in analogy to the Werner's case, we need to re-express the non-negativity 
constraints in terms of the relevant reduced state parameters. We have
\begin{align*}
s_1^2+s_2^2&=\tr{}{wS_1}^2+\tr{}{wS_2}^2\nonumber\\
&=|\trb{}{w(S_1+iS_2)}|^2
\end{align*}
and 
\begin{align*}
&S_1+iS_2=\frac{1}{d^2-1}\Big[d\left(V_{(ACB)}^{T_A}+V_{(ABC)}^{T_A}\right)\nonumber\\
&+\left(i\sqrt{d^2-1}-1\right)V_{(AB)}^{T_A} +\left(-i\sqrt{d^2-1}-1\right)V_{(AC)}^{T_A}\Big].
\end{align*}
Setting $s_-=0$ and enforcing the normalization condition $s_+ + s_0=1$, we can 
additionally write
\begin{align}
&\frac{d}{d^2-1}(V_{(ACB)}^{T_A}+V_{(ABC)}^{T_A})\nonumber\\
&=\frac{d}{d+1}(V_{(BC)}-\identity) +\frac{d}{d^2-1}(V_{(AB)}^{T_A}+V_{(AC)}^{T_A}).
\label{eq:normalization}
\end{align}
Thus, the desired expression for $S_1+iS_2$ is
\begin{align*}
S_1+iS_2=\frac{1}{d+1}\Big[d(V_{(BC)}-\identity)&+(i\sqrt{\frac{d+1}{d-1}}+1)V_{(AB)}^{T_A} \nonumber\\
& +(-i\sqrt{\frac{d+1}{d-1}}+1)V_{(AC)}^{T_A}\Big],
\end{align*}
which allows us to write $s_1^2+s_2^2$ in terms of the reduced state parameters:
\begin{equation*}
s_1^2+s_2^2=\Big|d(\Psi^-_{BC}-1)+\sqrt{\frac{2d}{d-1}}(e^{i\theta}\Phi^+_{AB}+
e^{-i\theta}\Phi^+_{AC})\Big|^2 ,
\end{equation*}
where $e^{\pm i\theta}$ is given by the expression above. 

Next, we obtain an expression for $s_0$ in terms of the reduced state parameters. Using 
Eq. (\ref{eq:normalization}), we can write
$$ S_0=\frac{1}{d+1}(\identity-V_{(BC)}+V_{(AB)}^{T_A}+V_{(AC)}^{T_A}).$$
Thus, the spherical non-negativity constraint may be written in terms of the reduced state parameters as the desired result. Furthermore, the non-negativity of $s_0$ provides the boundary forming the base of the cone: $\Phi^{+}_{AB}+\Phi^{+}_{AC}-\Psi^{-}_{BC} \leq d$. If $d\geq3$, we have the possibility that $\tr{}{wS_-}\neq0$. Considering the maximal value of $s_-=1$, we add the point $(0,0,-1)$ as another extremal non-negative point. For $d\geq3$, the convex hull of the cone and this point constitutes the region of reduced state parameter trios which correspond to joinable bipartite states.
\end{proof}

\begin{corowj}
\label{projectioncorowerapp}
Two pairs of qudit Werner states with parameters $\Psi^{-}_{AB}$ and
$\Psi^{-}_{AC}$ are joinable if and only if
$\Psi^{-}_{AB},\Psi^{-}_{AC}\geq-\frac{1}{2}$, or if the parameters satisfy
\begin{equation}
\label{eq:ellipseapp}
 (\Psi^{-}_{AB}+\Psi^{-}_{AC})^2+\frac{1}{3}(\Psi^{-}_{AB}-\Psi^{-}_{AC})^2 \leq 1,
\end{equation}
or additionally, in the case $d\geq3$, if
$\Psi^{-}_{AB},\Psi^{-}_{AC}\leq\frac{1}{2}$.
\end{corowj}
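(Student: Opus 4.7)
The corollary follows by projecting the three-party joinability region of Theorem~\ref{triowerapp} along the $\Psi^{-}_{BC}$-axis onto the $(\Psi^{-}_{AB},\Psi^{-}_{AC})$-plane. Writing $a=\Psi^{-}_{AB}$, $c=\Psi^{-}_{AC}$ and $b=\Psi^{-}_{BC}$, the pair $(a,c)$ is joinable iff some $b\in[-1,1]$ makes $(a,b,c)$ admissible. The defining region $1-|\overline{\Psi^{-}}|\geq\tfrac{2}{3}|z|$ is a circular double cone in $\mathbb{R}^{3}$ with axis $a=b=c$, apices at $(\pm 1,\pm 1,\pm 1)$ and equatorial disk in the plane $a+b+c=0$; each half-cone is convex, so its projection along the $b$-axis equals the convex hull of the apex's projection with the equator's projection, and the combined shadow is just the union of the two.

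The equator projects easily: substituting $b=-(a+c)$ into $|z|^{2}=(b-\tfrac{a+c}{2})^{2}+\tfrac{3}{4}(c-a)^{2}\leq 9/4$ yields $(a+c)^{2}+(a-c)^{2}/3\leq 1$, the closed ellipse $E$ of Eq.~(\ref{eq:ellipseapp}). The top apex projects to $(1,1)$, so the top-cone shadow is $K_{+}:=\mathrm{conv}(\{(1,1)\}\cup E)$. To identify $K_{+}$, I would compute the two tangent lines from $(1,1)$ to $\partial E$ by requiring that the outward normal $\nabla[(a+c)^{2}+(a-c)^{2}/3]\propto(2a+c,\,a+2c)$ at the tangency point be perpendicular to the chord to $(1,1)$; combined with the ellipse equation this reduces to the linear condition $a+c=1/2$, giving tangent points $(-1/2,1)$ and $(1,-1/2)$ and tangent lines $c=1$ and $a=1$. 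The boundary of $K_{+}$ therefore consists of these two segments together with the ``far'' arc of $E$ joining $(1,-1/2)$ to $(-1/2,1)$ through $(-1/2,-1/2)$.

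A direct check then gives $K_{+}=[-1/2,1]^{2}\cup E$: all four corners of $[-1/2,1]^{2}$ lie in $K_{+}$ (three on $E$, one at the apex), so by convexity the whole square does, and any point of $K_{+}$ outside the square must have $a<-1/2$ or $c<-1/2$, which (since $K_{+}$ is bounded by $a\leq 1$, $c\leq 1$ and the far arc of $E$) forces it to lie inside $E$. By the $(a,b,c)\mapsto(-a,-b,-c)$ symmetry of the bi-cone the bottom-cone shadow is $K_{-}=[-1,1/2]^{2}\cup E$, so for $d\geq 3$ the joinable region is $K_{+}\cup K_{-}=[-1/2,1]^{2}\cup E\cup[-1,1/2]^{2}$, matching the three-region statement. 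For $d=2$ the additional constraint $\overline{\Psi^{-}}\geq 0$ in Theorem~\ref{triowerapp} eliminates the bottom half-cone, leaving only $K_{+}$, which is the two-region statement. The main obstacle is the tangent-line computation and the subsequent identification $K_{+}=[-1/2,1]^{2}\cup E$; the latter requires verifying that the ``near'' arc of $E$ (from $(-1/2,1)$ through $(1/2,1/2)$ to $(1,-1/2)$) stays inside $[-1/2,1]^{2}$ and that $E$ touches the corner $(-1/2,-1/2)$ tangentially from outside the square, both of which become routine once one parametrizes $(a+c,\,a-c)=(\cos\phi,\sqrt{3}\sin\phi)$.
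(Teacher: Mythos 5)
Your proposal is correct and follows essentially the same route as the paper: project the cone/bicone of Theorem~III.1 along $\Psi^-_{BC}$, obtain the ellipse of Eq.~(\ref{eq:ellipseapp}) from the equatorial rim, take the convex hull with the projected apex (apices), and determine the boundary via the tangent lines from the apex touching the ellipse at $(-1/2,1)$ and $(1,-1/2)$ (and their negatives for the bicone). Your explicit identification of the shadow as $[-1/2,1]^2\cup E$ (and its reflection for $d\geq 3$) is just a slightly more algebraic phrasing of the paper's projection argument, and your computations (tangency condition $\Psi^-_{AB}+\Psi^-_{AC}=1/2$, tangent lines $\Psi^-_{AB}=1$ and $\Psi^-_{AC}=1$) check out.
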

\begin{proof}

To obtain these conditions, it suffices to project the shape given in
Eq. (\ref{d2bapp}) onto the $\Psi^{-}_{AB}$-$\Psi^{-}_{AC}$ plane. The rim of the
cone/bicone projects down to an ellipse whose equation we obtain by
extremizing Eq. (\ref{d1app}) evaluated at the cone base
$\overline{\Psi^{-}}=0$. Setting $\Psi^{-}_{BC}=-\Psi^{-}_{AB}-\Psi^{-}_{AC}$, we find
the boundary to be
$1=(\Psi^{-}_{AB}+\Psi^{-}_{AC})^2+\frac{1}{3}(\Psi^{-}_{AB}-\Psi^{-}_{AC})^2$. Any
pairs of Werner states within this ellipse are joinable. We also have
that $\Psi^{-}_{AB}=1$ and $\Psi^{-}_{AC}=1$ are joinable since setting
$\Psi^{-}_{BC}=1$ causes the three to satisfy Eq. (\ref{d1app}). Then, by
the convexity of the set of joining states, 
the convex hull of $(\Psi^{-}_{AB},\Psi^{-}_{AC})=(1,1)$ and the ellipse corresponds to pairs of
joinable states.
If $d\geq3$, the states $\Psi^{-}_{AB}=-1$ and $\Psi^{-}_{AC}=-1$ are joinable by setting $\Psi^{-}_{BC}=-1$, and hence the joinable $(\Psi^{-}_{AB},\Psi^{-}_{AC}$) pairs also include the convex hull of the ellipse with the point $(\Psi^{-}_{AB},\Psi^{-}_{AC})=(-1,-1)$. 

It remains to show that if $\Psi^{-}_{AB}$ or $\Psi^{-}_{AC}\leq -\frac{1}{2}$ (additionally, $\Psi^{-}_{AB}$ or $\Psi^{-}_{AC} \geq \frac{1}{2}$ for $d\geq3$), then $(\Psi^{-}_{AB},\Psi^{-}_{AC})$ pairs outside of the ellipse are not joinable. To achieve this, we consider a cone viewed from an arbitrary direction an infinite distance away. The shape seen is the shape of the projection. From this vantage point, the circular base of the cone appears as an ellipse. The remaining visible area (seen only if the vertex does not overlap with the base) constitutes the projection of the cone's lateral surface. The boundary of this projection is defined by the two lines extending from the vertex that are tangent to the ellipse. The area contained between these two lines along with the hull of the ellipse constitutes the shape visible from the infinity perspective, or, in other words, the cone's projection. In our case, the points at which these two lines (four lines for the bicone) intersect the ellipse are $(-1/2,1)
$ and $(1,-1/2)$ (additionally, $(-1,1/2)$ and $(1/2,-1)$ for the bicone). Beyond these points, the ellipse ``takes over'' as the projection boundary delimiter. Thus, 
points 
satisfying $\Psi^{-}_{AB}$ or $\Psi^{-}_{AC}\leq -\frac{1}{2}$ (and, $\Psi^{-}_{AB}$ or $\Psi^{-}_{AC} \geq \frac{1}{2}$ for the bicone) are joinable if and only if they are within the ellipse boundary.
\end{proof}

\begin{coroij}
\label{projectioncoroisoapp}
Two pairs of qudit isotropic states with parameters $\Phi^{+}_{AB}$ and
$\Phi^{+}_{AC}$ are joinable if and only if they lie within the convex hull of the ellipse
\begin{equation}
\label{eq:ellipseisoapp}
 \frac{(\Phi^{+}_{AB}/d+\Phi^{+}_{AC}/d-1)^2}{(1/d^2)} +
 \frac{(\Phi^{+}_{AB}/d-\Phi^{+}_{AC}/d)^2}{ (d^2-1)/d^2 } = 1,
\end{equation}
and the point $(\Phi^{+}_{AB},\Phi^{+}_{AC})=(0,0)$.
\end{coroij}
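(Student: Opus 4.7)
The plan is to project the three-dimensional joinability region of Theorem III.2 onto the $\Phi^+_{AB}$-$\Phi^+_{AC}$ plane, in exact parallel to how Corollary III.3 was obtained from Theorem III.1. Because the region is a (truncated) convex cone and coordinate projection commutes with convex-hull formation, the projection equals the convex hull of the projected apex and the projected rim ellipse. The argument has three ingredients: locate the apex, extract the rim, and patch them together by the same ``view from infinity'' argument that was used at the end of the proof of Corollary III.3.

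First, I would locate the apex of the cone. Setting $\Phi^+_{AB} = \Phi^+_{AC} = 0$ in the lateral inequality (\ref{d1isoapp2}) reduces it to $1 - \Psi^-_{BC} \geq d\,|\Psi^-_{BC} - 1|$, which for $d \geq 2$ forces $\Psi^-_{BC} = 1$. Hence the apex of the cone is $(0,0,1)$, and its projection onto the $\Phi^+_{AB}$-$\Phi^+_{AC}$ plane is $(0,0)$.

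Next, I would extract the rim by saturating the base inequality (\ref{d1isoapp1}), that is, by substituting $\Psi^-_{BC} = \Phi^+_{AB} + \Phi^+_{AC} - d$ into (\ref{d1isoapp2}). The left-hand side collapses to $d+1$, while, using the explicit form $e^{\pm i\theta} = \pm i\sqrt{(d+1)/(2d)} + \sqrt{(d-1)/(2d)}$, the argument inside the modulus rearranges into a real part proportional to $(d+1)(\Phi^+_{AB}+\Phi^+_{AC}-d)$ and an imaginary part proportional to $\sqrt{(d+1)/(d-1)}\,(\Phi^+_{AB}-\Phi^+_{AC})$. Squaring both sides and dividing by $(d+1)^2$ yields, after passing to the rescaled coordinates $x=\Phi^+_{AB}/d$, $y=\Phi^+_{AC}/d$, exactly the ellipse Eq.~(\ref{eq:ellipseisoapp}). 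Because the base plane is a graph over the $\Phi^+_{AB}$-$\Phi^+_{AC}$ plane, the projected rim is the same ellipse.

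Then I would invoke the geometric argument used at the end of the proof of Corollary III.3: viewed along the $\Psi^-_{BC}$ axis, the shadow of the cone is the convex hull of the apex projection and the rim projection. One checks that $(0,0)$ lies outside the ellipse — inserting the origin into the left-hand side of (\ref{eq:ellipseisoapp}) gives $d^2 > 1$ — so this convex hull strictly contains the ellipse and is bounded by the two tangent lines from $(0,0)$ together with the arc of the ellipse between their tangency points. Finally, the extra extremal point $(0,0,-1)$ available in Theorem III.2 for $d\geq 3$ projects to $(0,0)$, which already lies in the convex hull just built, so it contributes nothing new and the same two-dimensional region describes the joinable pairs for all $d\geq 2$. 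The only non-routine step I anticipate is the complex-modulus bookkeeping on the base plane; once that rearrangement is carried out, the geometric projection conclusion is immediate.
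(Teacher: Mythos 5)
Your proposal is correct and follows essentially the same route as the paper's proof: project the cone of Theorem III.2 onto the $\Phi^{+}_{AB}$-$\Phi^{+}_{AC}$ plane, obtain the rim ellipse by substituting $\Psi^{-}_{BC}=\Phi^{+}_{AB}+\Phi^{+}_{AC}-d$ into the lateral inequality (which indeed reduces to $1=\left|(\Phi^{+}_{AB}+\Phi^{+}_{AC}-d)+i(\Phi^{+}_{AB}-\Phi^{+}_{AC})/\sqrt{d^2-1}\right|$), and take the convex hull with the projected vertex $(0,0)$ via the same tangent-line (``view from infinity'') argument as Corollary III.3. Your explicit checks that the fiber over $(0,0)$ forces $\Psi^{-}_{BC}=1$ and that the $d\geq 3$ extremal point $(0,0,-1)$ projects into the hull already obtained are points the paper leaves implicit, but they do not change the argument.
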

\begin{proof}
We follow the approach of the proof of Corollary III.3 above, by projecting the cone described by 
Eqs. (\ref{d1isoapp1})-(\ref{d1isoapp2}) down to the $\Phi^{+}_{AB}$-$\Phi^{+}_{AC}$ plane. The rim of the cone projects down to an ellipse whose equation we obtain by extremizing Eq. (\ref{d1isoapp1}) evaluated at the cone base $\Phi^{+}_{AB}+\Phi^{+}_{AC}-\Psi^{-}_{BC}=d$. Setting $\Psi^{-}_{BC}=\Phi^{+}_{AB}+\Phi^{+}_{AC}-d$, we find the boundary to be
$1=\left|(\Phi^{+}_{AB}+\Phi^{+}_{AC}-d)+i(\Phi^{+}_{AB}-\Phi^{+}_{AC})/\sqrt{d^2-1}\right|$. From this we easily obtain the ellipse described by Eq. (\ref{eq:ellipseisoapp}). Any pairs of Werner states within this ellipse are joinable. We also have that $\Phi^{+}_{AB}=0$ and $\Phi^{+}_{AC}=0$ are joinable since setting $\Psi^{-}_{BC}=1$ causes the three to satisfy Eq. (\ref{d1isoapp1}). The convex hull of the ellipse and the point $(0,0)$ exhausts the set of joinable pairs of isotropic states.
\end{proof}

Lastly, we explicitly provide the Werner-isotropic hybrid 1-2 joining boundary:

\begin{corowi}
\label{projectioncorohybridapp}
An isotropic state with parameter $\Phi^{+}_{AB}$ and a Werner state with parameter 
$\Psi^{-}_{BC}$ are joinable if and only if they lie within the convex hull of the ellipse
\begin{equation}
\label{eq:ellipsehybridapp}
 \frac{(\Phi^{+}_{AB}/d+\Phi^{+}_{AC}/d-1)^2}{(1/d^2)}+
 \frac{(\Phi^{+}_{AB}/d-\Phi^{+}_{AC}/d)^2}{(d^2-1)/d^2} = 1,
\end{equation}
and the point $(\Phi^{+}_{AB},\Psi^{-}_{BC})=(0,1)$, and, for $d\geq 3$, within the additional convex hull introduced by the point $(\Phi^{+}_{AB},\Psi^{-}_{BC})=(0,1)$.
\end{corowi}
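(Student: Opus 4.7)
The plan is to mirror the projection arguments used in the proofs of Corollaries III.3 and III.4. Theorem III.2 has already pinned down the full three-dimensional joinability region in $(\Phi^{+}_{AB},\Phi^{+}_{AC},\Psi^{-}_{BC})$ as a cone (for $d=2$) or the convex hull of that cone with the extra point $(0,0,-1)$ (for $d\geq 3$). The joinable $(\Phi^{+}_{AB},\Psi^{-}_{BC})$ pairs are then precisely the image of this region under projection along the $\Phi^{+}_{AC}$ axis, so the entire task reduces to computing that shadow.

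First I would parameterize the boundary of the three-dimensional region. Viewed from ``infinity'' along the $\Phi^{+}_{AC}$ direction, the cone's silhouette is bounded by its projected rim together with tangent rays from any extremal apex that is not occluded by the base, exactly as in the proof of Corollary III.3. The rim itself is the intersection of the base equality $\Phi^{+}_{AB}+\Phi^{+}_{AC}-\Psi^{-}_{BC}=d$ with the lateral-surface equality in Eq.~(\ref{d1isoapp2}).

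Next I would eliminate $\Phi^{+}_{AC}$ via the base relation $\Phi^{+}_{AC}=d+\Psi^{-}_{BC}-\Phi^{+}_{AB}$, substitute it into the modulus in Eq.~(\ref{d1isoapp2}), and expand using $e^{\pm i\theta}=\pm i\sqrt{(d+1)/(2d)}+\sqrt{(d-1)/(2d)}$. The resulting quadratic relation between $\Phi^{+}_{AB}$ and $\Psi^{-}_{BC}$ should reduce to the stated ellipse in the $\Phi^{+}_{AB}$-$\Psi^{-}_{BC}$ plane (the displayed equation in the statement appears to carry a typo from Corollary III.4, the correct variables being $\Phi^{+}_{AB}$ and $\Psi^{-}_{BC}$). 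The remaining extremal joinable points that complete the shadow are then identified by inspection: the apex of the three-dimensional cone lies at $(0,0,1)$ and projects to $(\Phi^{+}_{AB},\Psi^{-}_{BC})=(0,1)$, while for $d\geq 3$ the extra vertex $(0,0,-1)$ from Theorem III.2 projects to $(0,-1)$, accounting for the second convex-hull enlargement stated in the corollary. Convexity of the set of joining states then guarantees that the projection is exactly the convex hull of the ellipse with these apex-projected points.

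The main obstacle I anticipate is purely computational: carrying out the substitution of $\Phi^{+}_{AC}$ into the cone's lateral equality cleanly, and verifying that the resulting quadratic really carves out the advertised ellipse. Unlike in Corollary III.4, where $\Phi^{+}_{AB}$ and $\Phi^{+}_{AC}$ enter the phases $e^{\pm i\theta}$ symmetrically, here the two remaining variables enter asymmetrically, so I would need to track a nontrivial cross-term in $\Phi^{+}_{AB}\Psi^{-}_{BC}$ and confirm the semi-axes. A second consistency check, required for the ``only if'' direction, is to verify that the tangent lines from each apex-projected point to the projected ellipse do not cut inside the proposed convex hull for the relevant $d$, so that the ellipse-plus-vertex hull really exhausts the projection rather than being strictly contained within a tighter envelope.
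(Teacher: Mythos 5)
Your proposal is correct and follows exactly the route the paper itself takes: the paper's entire proof of this corollary is the single line ``Obtained in an analogous manner to Corollaries III.3 and III.4,'' i.e., project the Theorem III.2 cone (together with the extra vertex $(0,0,-1)$ for $d\geq 3$) along the $\Phi^{+}_{AC}$ axis, which is precisely your plan. You are also right to flag the typos: the displayed ellipse is copied verbatim from Corollary III.4 and should instead be the rim projection obtained by eliminating $\Phi^{+}_{AC}$ via $\Phi^{+}_{AC}=d+\Psi^{-}_{BC}-\Phi^{+}_{AB}$ (which carries the cross term you anticipate), and the second convex-hull point for $d\geq 3$ should read $(\Phi^{+}_{AB},\Psi^{-}_{BC})=(0,-1)$ rather than a repeat of $(0,1)$.
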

\begin{proof}
Obtained in an analogous manner to Corollaries III.3 and III.4.
\end{proof}

\section{Proof of Werner state $1$-$n$ sharability and construction of $m$-$n$ sharability}
\label{sec:sharingreptools}

Here, we set up the necessary background, then determine the 1-$n$ sharability of 
Werner states, while providing a constructive approach to determing $m$-$n$ sharability.
For fixed $m$, $n$, we only need to determine the most-entangled Werner state 
(largest value of $-\Psi^{-}$) satisfying that sharability property; as noted in
Sec. \ref{sec:S}, all mixtures of that state with any separable state
will necessarily satisfy the sharability property. Thus, in the
one-dimensional convex set parameterized by $\Psi^{-}$, the most-entangled
Werner state that satisfies a sharability property indicates the
boundary between the satisfying and the failing region.

The next step is to map the problem of determining the most-entangled Werner state for a given sharability criterion to the problem of determining the maximal eigenvalue of a particular operator. Specifically, we show that the most-entangled $m$-$n$ sharable Werner state has concurrence equal to the largest eigenvalue of the operator
\begin{equation*}
 H_{m,n}=\frac{1}{mn}\sum_{i\in L} \sum_{j\in R} (-V_{ij}),
\end{equation*}
and that a valid $m$-$n$ sharing state $w_{m,n}$ is given by the normalized projector into the corresponding eigenspace.

We first justify the construction of $H_{m,n}$. From Eq. (\ref{Wc}), the concurrence of a Werner bipartite reduced state of a composite system state $\rho$ is $\mathcal{C}_{ij}=-\tr{}{V_{ij}\otimes \identity_{\overline{ij}}\rho}$. So, for a state $\rho_{m,n}$, with each $L$-$R$ pair a Werner state, its expectation with respect to $H_{m,n}$ is simply the average concurrence of its $L$-$R$ pair reduced states.

Now, that the normalized projector into an eigenspace of $H_{m,n}$ is a valid sharing state follows from the symmetries of this operator: collective unitary invariance ensures that the reduced states are Werner states, while left-system permutation invariance and right-system permutation invariance ensure the $m$-$n$ sharing property. That the maximal eigenvalue of $H_{m,n}$ is the concurrence of the most-entangled $m$-$n$ sharable Werner state follows from the fact that if any state $w_{m,n}'$ were to share Werner states with larger concurrence, its (the sharing state's) expectation value with respect to $H_{m,n}$ would exceed the  maximum eigenvalue, which is a contradiction.

It follows that, by determining the largest eigenvalue of $H_{m,n}$ for each value of $m$ and $n$, we will have characterized the sharability properties of all Werner states.
To determine the eigenvalues of $H_{m,n}$, we express the latter in terms of 
quadratic Casimir operators on $N$-fold tensor product of the (defining) 
$d$-dimensional representations of $\mathfrak{su}(d)$, 
namely, operators of the form 
\begin{align*}
\Lambda_N^2 & \equiv \vec{\Lambda}_{N}\cdot\vec{\Lambda}_{N}
= \sum_{\alpha=1}^{d^2-1}  \left( \Lambda_N^\alpha \right)^2 ,
\end{align*}
where $ \Lambda_N^\alpha = \sum_{i=1}^N (\identity_1 \otimes\ldots \otimes \lambda^{\alpha}_i 
\otimes\ldots\otimes\identity_N)$ and $\{ \lambda^{\alpha}_i \}$ span  each of the $N$ local 
$\mathfrak{su}(d)$ Lie algebras. By construction, any such Casimir operator 
commutes with all the elements of the (generally reducible) tensor-product algebra. 
By rewriting each swap operator as 
$V_{ij}=\identity/d+\sum_{\alpha}\lambda^{\alpha}_i\lambda^{\alpha}_j$, we obtain
\begin{align}
H_{m,n}=&-\frac{\identity}{d}-\frac{1}{mn}\sum_{i\in L}\sum_{j\in
R}\sum_{\alpha}\lambda^{\alpha}_i\lambda^{\alpha}_j  \label{eq:casimir}\\
=&-\frac{\identity}{d}-\frac{1}{2mn}\bigg \{ \sum_{\alpha}\Big[\sum_{k\in L\cup
R}\lambda_k^{\alpha}\Big]^2
-\sum_{\alpha}\Big[\sum_{i\in L}\lambda_i^{\alpha}\Big]^2
\nonumber \\ 
-&\sum_{\alpha}\Big[\sum_{j\in R}\lambda_j^{\alpha}\Big]^2\bigg\}
=\frac{1}{2mn}\left(\Lambda_{L}^2+\Lambda_{R}^2-\Lambda_{LR}^2\right)-
\frac{\identity}{d}. \nonumber 
\end{align}
Two relevant features of tensor-product Casimir operators are worth noting. 
First, $\Lambda_N^2$ will not simply be proportional to the
identity operator, but rather, on each irreducible subspace 
it will act as a (possibly) different multiple of identity. Secondly, the operator
$\Lambda^2_N\otimes\identity_M$ commutes with $\Lambda^2_{N+M}$ for
any $M$ and $N$. 

The three Casimir operators $\Lambda_{LR}^2$, $\Lambda_{L}^2$, and 
$\Lambda_{R}^2$ mutually commute and it is thus meaningful to seek their
simultaneous eigenvalues. Each eigenvalue of a tensor product Casimir
operator corresponds to an irreducible subspace $W_i$ and hence to a
Young diagram. The latter may be used to compute
the value of the corresponding eigenvalue. Following \cite{Fuchs1997},
given a Young diagram $Y$ of column heights $\{a_i\}$ and row lengths
$\{b_j\}$, the eigenvalue of the corresponding space of the Casimir
operator is
\begin{equation}
C_Y=N\left(d-\frac{N}{d}\right)+\sum_j b_j^2-\sum_ia_i^2.
\label{eigenvalue}
\end{equation}
The eigenspaces of $\Lambda_{LR}^2$ which intersect with a given
$\Lambda_L^2$ eigenspace and a given $\Lambda_R^2$ eigenspace may be
calculated by pasting together the boxes of a $Y_L$ and a $Y_R$ Young
diagram in a way that does not cause two previously symmetrized boxes
(same row) to then be antisymmetrized (same column), and vice
versa. 
For example, given the Young diagrams
\begin{equation*}
\yng(2,1)\,\, \text{and}\,\, \yng(2,1),
\end{equation*}
we can construct the following composite diagrams
\begin{align*}
\yng(2,1)\otimes\yng(2,1)=\,\,&\young(\hfil\hfil aa,\hfil
b)\oplus\young(\hfil\hfil aa,\hfil,b)\\ &\oplus\young(\hfil\hfil
a,\hfil ab)\oplus\young(\hfil\hfil a,\hfil a,b)\\
&\oplus\young(\hfil\hfil a,\hfil b,a)\oplus\young(\hfil\hfil,\hfil
a,ab).
\end{align*}
Writing $A_Y=\sum_ia_i^2$ and $B_Y=\sum_j b_j^2$, and replacing each
Casimir operator in Eq. (\ref{eq:casimir}) with its its eigenvalue given
by Eq. (\ref{eigenvalue}), we find that the sum of the first terms of the
eigenvalues cancels with $\identity/d$, leaving 
\begin{equation}
\label{eq:cassumeig}
\text{eig}(H_{m,n})=\frac{A_{Y_{LR}}-A_{Y_{L}}-
A_{Y_{R}}-B_{Y_{LR}}+B_{Y_{L}}+B_{Y_{R}}}{2mn}.
\end{equation}
Thus, we have a prescription for calculating the eigenvalues of
$H_{m,n}$. In particular, we obtain the maximal eigenvalue of $H_{m,n}$ 
by constructing the optimal set of three Young diagrams, as exploited in 
the following:

\begin{thm1n}
\label{thm:1nsharabilityapp}
A qudit Werner state with parameter $\Psi^{-}$ is $1$-$n$ sharable if and
only if
\begin{equation}
\Psi^{-} \geq -\frac{d-1}{n}.
\label{Wsharingapp} 
\end{equation}
\end{thm1n}

\begin{proof}  
All that is necessary is to use Eq. (\ref{eq:cassumeig}) to obtain the largest eigenvalue of $H_{1,n}$. The maximal eigenvalue of $H_{1,n}$ is realized by the following gluing of the 
$1$ and $n$-box Young diagrams
\ytableausetup{mathmode,boxsize=1.2em}
\begin{center}
\begin{ytableau}
*(black)
\end{ytableau}
$\otimes$
\begin{ytableau}
\, &  & \none[\dots] &  \\
\\
\none[\vdots] \\
\\
\end{ytableau}
\,\,\,
$\Rightarrow$\,\,\,
\begin{ytableau}
\, &  & \none[\dots] &  \\
\\
\none[\vdots] \\
\\
*(black).
\end{ytableau}
\end{center}
The values of the $A$ and $B$ here are $A_{Y_{LR}}=d^2+n+1-d$,
$A_{Y_{L}}=1$, $A_{Y_{R}}=(d-1)^2+n+1-d$, and
$B_{Y_{LR}}=B_{Y_{L}}+B_{Y_{R}}=(n+2-d)^2+d-1$, which allow us to
compute the maximal eigenvalue of $H_{1,n}$,
\begin{equation}
\text{max}(H_{1,n})=\frac{d-1}{n}.
\end{equation}
Therefore, the desired conclusion follows. 
\end{proof}
\end{appendix}

\bibliographystyle{apsrev}

\end{document}